\newcommand{\ALOOP}[1]{\ALC@it\algorithmicloop\ #1
	\begin{ALC@loop}}
	\newcommand{\ENDALOOP}{\end{ALC@loop}\ALC@it\algorithmicendloop}
\newtheorem{theorem}{\textbf{\emph{Theorem}}}
\newtheorem{definition}{\textbf{\emph{Definition}}}
\newcommand{\main}{PrivGED}
\newcommand{\eig}{eigendecomposition}
\newcommand{\eigs}{eigenvalues/eigenvectors}
\newcommand{\cs}{$\mathcal{CS}_{1}$}
\newcommand{\css}{$\mathcal{CS}_{2}$}
\newcommand{\csa}{$\mathcal{CS}_{\{1,2\}}$}
\begin{document}
	
\title{Privacy-Preserving Analytics on Decentralized Social Graphs: The Case of Eigendecomposition}
	
\author{Songlei Wang, Yifeng Zheng, Xiaohua Jia, \IEEEmembership{Fellow, IEEE}, and Xun Yi 
		
		\IEEEcompsocitemizethanks{
			\IEEEcompsocthanksitem Songlei Wang and Yifeng Zheng are with the School of Computer Science and Technology, Harbin Institute of Technology, Shenzhen, Guangdong 518055, China (e-mail: songlei.wang@outlook.com, yifeng.zheng@hit.edu.cn).
			\IEEEcompsocthanksitem Xiaohua Jia is with the School of Computer Science and Technology, Harbin Institute of Technology, Shenzhen, China, and also with the Department of Computer Science, City University of Hong Kong, Kowloon Tong, Hong Kong, China (e-mail: csjia@cityu.edu.hk).
			\IEEEcompsocthanksitem Xun Yi is with the School of Computing Technologies, RMIT University, Melbourne, Australia (e-mail: xun.yi@rmit.edu.au).
			\IEEEcompsocthanksitem Corresponding author: Yifeng Zheng.
		}
	}
	\IEEEtitleabstractindextext{
		\begin{abstract}
		Analytics over social graphs allows to extract valuable knowledge and insights for many fields like community detection, fraud detection, and interest mining. In practice, decentralized social graphs frequently arise, where the social graph is not available to a single entity and is decentralized among a large number of users, each holding only a limited local view about the whole graph. Collecting the local views for analytics of decentralized social graphs raises critical privacy concerns, as they encode private information about the social interactions among individuals. In this paper, we design, implement, and evaluate PrivGED, a new system aimed at privacy-preserving analytics over decentralized social graphs. PrivGED focuses on the support for eigendecomposition, one popular and fundamental graph analytics task producing eigenvalues/eigenvectors over the adjacency matrix of a social graph and benefits various practical applications. PrivGED is built from a delicate synergy of insights on graph analytics, lightweight cryptography, and differential privacy, allowing users to securely contribute their local views on a decentralized social graph for a cloud-based eigendecomposition analytics service while gaining strong privacy protection. Extensive experiments over real-world social graph datasets demonstrate that PrivGED achieves accuracy comparable to the plaintext domain, with practically affordable performance superior to prior art.
		\end{abstract}
		
		\begin{IEEEkeywords}
			Decentralized social graph analytics, cloud computing, security, privacy preservation
		\end{IEEEkeywords}
	}

	\maketitle

	\IEEEdisplaynontitleabstractindextext
	
	\IEEEpeerreviewmaketitle
	
	\section{Introduction}
	\label{sec:Intro}
	\textbf{}
	%\IEEEPARstart{G}raphs have been widely used to model and manage structural information in various real-world applications, such as social networks \cite{curtiss2013unicorn} and webpage networks \cite{page1999pagerank}. As one of the most powerful tools in graph analysis, {\eig} works on the adjacency matrix associated with a graph to yield {\eigs}, and can benefit various graph analysis tasks, such as community detection \cite{newman2006finding, wang2011identifying, newman2013spectral}, PageRank \cite{kamvar2003extrapolation, kamvar2004adaptive,charalambous2016totally, sangers2015eigenvectors}, and spectral clustering \cite{fowlkes2004spectral}. 
	%As a representative of graphs, social graphs are used to capture and manage the interconnections among people. 
	%The {\eigs} of social graphs are used to various social graph analysis tasks, such as community structure detecting \cite{newman2006finding}, important members finding \cite{wang2011identifying}, and social graph partitioning \cite{newman2013spectral}. 
	%However, in practice, it is challenging to perform {\eig} on social graphs since they are normally decentralized, where no people has access to the complete social graph \cite{qin2017generating}. %Instead, each people maintains locally a limited view about the social graph, such as a phone network \cite{qin2017generating}.

	% \IEEEPARstart{S}ocial graphs promise advantages in capturing and modelling the interconnections among people in the real world and have become prevalent in many applications.
	%
	\IEEEPARstart{A}nalytics over information-rich social graphs allows the extraction of valuable and impactful knowledge and insights for many fields like community detection, fraud detection, and interest mining \cite{Tabassum2018Social,UmitPhysica2019}.
	%churn prediction, 
	%
	Social graph analytics, however, becomes quite challenging when the social graph is not available to a single entity and presented in a \emph{decentralized} manner, where each user only holds a limited local view about the whole social graph, and the complete social graph is formed by their collective views.
	% thus the complete social graph is formed by their collective views.
	%
	Decentralized social graphs can arise in many practical applications \cite{SunXKYQWY19,qin2017generating,xue2011p3d,zhang2014message,ma2018armor}.
	For example, in a phone network, each user has his own contact list and the collective contact lists of all users form a social graph in a decentralized manner \cite{qin2017generating}.

	Collecting individual users' local views for analytics in the setting of decentralized social graphs can raise critical privacy concerns, as these local views encode sensitive information regarding the social interactions among individuals \cite{qin2017generating,sharma2018privategraph}.
	Users thus may be reluctant to be engaged in such analytics if their local views gain no protection.
	Thus, it is of critical importance to ensure that security must be embedded in analytics over decentralized social graphs from the very beginning, so that valuable knowledge and insights can be extracted without compromising the privacy of individual users.
	Among others, one popular and fundamental graph analytics task is {\eig} which we focus on as a concrete instantiation in this paper.
	Eigendecomposition-based social graph analytics works on the adjacency matrix associated with a social graph to yield {\eigs}, and can benefit various applications, such as community structure detection \cite{newman2013spectral}, important members finding \cite{wang2011identifying}, and social graph partitioning \cite{newman2013spectral} (see Section \ref{subsec:applications of graph analytics via eig} for more details on applications).

	In the literature, little work \cite{wang2013differential,ahmed2020publishing,sharma2018privategraph} has been done regarding privacy-preserving {\eig} on graphs. 
	Some works \cite{wang2013differential,ahmed2020publishing} focus on publishing adjacency matrices with differential privacy while preserving their {\eigs}. 
	Yet these works operate with \emph{centralized} social graphs, where the social graph is available to a single entity and processed in the plaintext domain.
	%
	% However, we note that both of them work on plaintext graphs in the centralized setting, instead of encrypted graphs in the decentralized setting. 
	%
	The most related (state-of-the-art) work to ours is due to Sharma \textit{et al.} \cite{sharma2018privategraph}, who propose a method PrivateGraph that works under a decentralized social graph setting and aims to provide privacy protection for individuals' local views.
	% present PrivateGraph, which is the first and state-of-the-art design aimed at privacy-preserving {\eig} on decentralized graphs. 
	%
	However, PrivateGraph is not quite satisfactory due to the following downsides in functionality and security.
	Firstly, PrivateGraph only supports {\eig} on undirected graphs (via the Lanczos method \cite{lanczos1950iteration}), but many social graphs in practice are directed \cite{curtiss2013unicorn}, which cannot be supported via the Lanczos method.
	Secondly, it requires some users to expose the number of their friends, posing a threat to their privacy \cite{zhou2008brief}.
	Thirdly, it requires frequent online interactions between the cloud that coordinates the \eig~based analytics task and the entity that requests the {\eigs}.
	%
	% Obviously, this design violates the working paradigm of cloud computing. 
	%
	Therefore, how to achieve privacy-preserving {\eig}-based analytics over decentralized social graphs is still challenging and remains to be fully explored.

	In light of the above, in this paper, we design, implement, and evaluate {\main}, a new system that allows privacy-preserving analytics over decentralized social graphs with \eig.
	Leveraging the emerging cloud-empowered graph analytics paradigm, {\main} allows a set of users to securely contribute their local views on a decentralized social graph for an \eig~analytics service empowered by the cloud, while ensuring strong protection on individual local views.

	We start with considering how to enable the individual local views to be securely collected so as to form the (encrypted) adjacency matrix adequately for \eig~on the cloud.
	Each row vector in the adjacency matrix stores information of the local view of a user.
	Targeting security assurance as well as high efficiency, \main~resorts to a lightweight cryptographic technique---additive secret sharing (ASS) \cite{mohassel2017secureml}, for fast encryption of the elements in local view vectors.
	However, simply applying ASS over each user's complete vector is inefficient because social graphs are usually large-scale and sparse \cite{curtiss2013unicorn}, leading to many zero elements in local view vectors that incur undesirable performance overheads.
	To tackle this problem, {\main} develops techniques that allow to exploit the benefits of graph sparsity for efficiency while protecting the privacy of users' private social relationships, through a delicate synergy of sparse representation, local differential privacy (LDP) \cite{qin2017generating}, and function secret sharing (FSS) \cite{boyle2016function,boyle2021function} techniques. 
	%
	% \main~tackles this problem through a delicate synergy of sparse representation, local differential privacy (LDP) \cite{qin2017generating}, and function secret sharing (FSS) \cite{boyle2016function,boyle2021function} techniques.
	%
	% As shown in our experiments, compared to the simple approach of direct encryption of complete local vectors, our approach can achieve significant cost savings---up to $90\%$ in the total size of the collected encrypted local views; and $80\%$ in online communication and $50\%$ in computation in subsequent secure {\eig}.
	%
	As opposed to PrivateGraph \cite{sharma2018privategraph}, \main~does not reveal any users' exact node degree information.

	Subsequently, we consider how to enable {\eig} to be securely performed on the cloud over the formed encrypted adjacency matrix.
	We first make two important practical observations: i) Usually only the top-$k$ {\eigs} ($k\ll N$; $N$ is the number of users and determines the size of the adjacency matrix) are needed in practice \cite{newman2006finding, wang2011identifying, newman2013spectral,kamvar2003extrapolation,kamvar2004adaptive,charalambous2016totally, sangers2015eigenvectors}; ii) The desired top-$k$ \eigs~can be derived from the complete \eigs~of a smaller matrix reduced from the original matrix under adequate dimension reduction methods, among which the most popular ones are the Arnoldi method \cite{arnoldi1951principle} (for general (possibly non-symmetric) matrices) and the Lanczos method \cite{lanczos1950iteration} (for symmetric matrices).
	Therefore, \main~first introduces effective techniques, which tackle the challenging squared root and division operations in the secret sharing domain, to securely realize the Arnoldi method and the Lanczos method.
	This allows {\main} to flexibly work on both undirected and directed graphs.
	{\main} then further provides techniques for secure realization of the widely used QR algorithm \cite{francis1962qr} over the dimension-reduced matrix so as to produce the encrypted desired \eigs.
	We reformulate the plaintext QR algorithm to ease computation in the ciphertext domain as well as optimize the processing of secret-shared matrix multiplications for high performance.
	We highlight our main contributions below:
	
	\begin{itemize}
		\item We present {\main}, a new system supporting privacy-preserving analytics over decentralized social graphs with eigendecomposition.
		% through a combination of insights from graph analytics, local differential privacy, and cryptography.
		% secure {\eig} on decentralized social graphs, though a delicate synergy of {\eig} and lightweight cryptographic techniques.
		
		\item We develop techniques for secure collection of individual local views on the decentralized social graph, which exploit the benefits of graph sparsity for efficiency while protecting the privacy of individual's social relationships.

		% secure decentralized sparse social graph collection protocol, which allows the cloud servers to obliviously collect the decentralized social graph while preserving its sparsity.
		
		\item We develop techniques for securely realizing the Arnoldi/Lanczos methods and the QR algorithm, so as to fully support the processing pipeline of secure \eig~on the cloud.

		\item We formally analyze the security of {\main}, implement it with $\sim$2000 lines of Python code, and conduct an extensive evaluation over three real-world datasets. The results demonstrate that {\main} achieves accuracy comparable to the plaintext domain, with practically affordable performance superior to prior art.
	\end{itemize}

	The rest of this paper is organized as follows. Section \ref{sec:related_work} discusses the related work. Section \ref{sec:pre} introduces preliminaries. Section \ref{sec:problem_def} presents the problem statement. Section \ref{sec:main} and Section \ref{sec:secure_eig} give the detailed design. The privacy and security analysis is presented in Section \ref{sec:security_analy}. We present experiment results in Section \ref{sec:experiments} and conclude this paper in Section \ref{sec:conclusion}.
	
	\section{Related Work}
	\label{sec:related_work}
	
	\subsection{Graph Analytics via Eigendecomposition}
	\label{subsec:applications of graph analytics via eig}
	Graphs can characterize the complex inter-dependency among entities, and are used in various applications, such as social networks \cite{curtiss2013unicorn} and webpage networks \cite{kamvar2003extrapolation}. As one popular and fundamental graph analytics task, {\eig} works on the adjacency matrix to yield {\eigs}, and can benefit various applications \cite{newman2006finding, wang2011identifying, newman2013spectral,page1999pagerank,kamvar2003extrapolation, kamvar2004adaptive,charalambous2016totally}. 
	For example, \eig-based graph analytics can greatly benefit community detection through the following ways: 1) finding the community structure based on the eigenvectors \cite{newman2006finding}; 2) identifying and characterizing nodes importance to the community according to the relative change in the eigenvalues after removing them \cite{wang2011identifying}; 3) partitioning a social graph based on its top-2 eigenvectors \cite{newman2013spectral}.
	Another important application is PageRank \cite{page1999pagerank,kamvar2003extrapolation, kamvar2004adaptive,charalambous2016totally}, which is one of the best-known ranking algorithms in web search. PageRank measures the importance of website pages by computing the principal eigenvector of the matrix describing the hyperlinks of the website pages. In addition, the second eigenvector can be used to detect a certain type of link spamming \cite{sangers2015eigenvectors}. However, all of them consider the execution of {\eig} in the plaintext domain without privacy protection.

	\subsection{Privacy-Preserving Graph Analytics}
	\label{sec:relet1}
	
	There exist a variety of designs that aim to securely perform certain graph analytics tasks. 
	Some works \cite{MengRL21,zhou2020vertically,wu2021fedgnn} focus on privacy-preserving training of graph neural networks (GNN) based on the federated learning paradigm \cite{zhang2019deeppar}, which aim to train GNN models across multiple clients holding local datasets (e.g., spatio-temporal datasets \cite{MengRL21} or graphs \cite{zhou2020vertically,wu2021fedgnn}) in such a way that the datasets are kept local.
	The work \cite{MengRL21} focuses on GNNs over decentralized spatio-temporal data, and has the clients exchange model updates with the server in cleartext.
	% but it lets the clients and central server exchange model parameters in plaintext without considering the privacy of local models.
	%
	In contrast, the works \cite{zhou2020vertically,wu2021fedgnn} focus on graph datasets and design privacy-preserving mechanisms to protect the individual model updates.

	There has been a line of work \cite{wu2016privacy,du2020graphshield,araki2021secure,ShenMZMDH18,LiuZHC21} aimed at the support for graph analytics with cryptographic methods like secure multi-party computation techniques and searchable encryption.
	The main focus of this line of work has been on supporting different kinds of graph queries under different scenarios in a secure manner.
	Wu \textit{et al.} \cite{wu2016privacy} propose a protocol for privacy-preserving shorted path query in a two-party setting, where a client holding a query and a server holding a plaintext graph, based on cryptographic techniques including private information retrieval, garbled circuits, and oblivious transfer.
	In contrast, the works \cite{ShenMZMDH18,LiuZHC21} focus on the support for privacy-preserving shortest path queries in an outsourcing setting, where private queries need to be executed over encrypted graphs outsourced to the cloud.
	The work \cite{du2020graphshield} designs protocols that can support privacy-preserving shortest distance queries and maximum flow queries over outsourced encrypted graphs.
	These works rely on the combination of searchable encryption (e.g., order-revealing encryption), homomorphic encryption, and/or garbled circuits.
	In \cite{araki2021secure}, Araki \textit{et al.} consider a scenario where all nodes and edges of a graph are secret-shared between three servers and devise protocols for breadth-first search and maximal independent set queries, based on secret sharing and secure shuffling.
	The above works all target graph analytics tasks that are substantially different from the one considered in this paper.

	\begin{table}[t!]
		\small
		\centering
		\caption{Comparison with the State-of-the-Art Work PrivateGraph \cite{sharma2018privategraph}}
		\label{Tab:DiffWithPrivateG}
		\begin{tabular*}{\hsize}{@{}@{\extracolsep{\fill}}c|c|c}
			\toprule
			Property & PrivateGraph \cite{sharma2018privategraph} & {\main}\\\hline
			Undirected graph supported &$\checkmark$& $\checkmark$\\
			Directed graph supported &$\times$&  $\checkmark$\\
			All users' degrees protected &$\times$& $\checkmark$\\
			Analyst allowed to stay offline&$\times$&  $\checkmark$\\
			Lightweight cryptography & $\times$& $\checkmark$\\
			\bottomrule
		\end{tabular*}
		\vspace{-10pt}
	\end{table}

	There is another line of work \cite{wang2013differential,ahmed2020publishing} focuses on publishing graph matrices with differential privacy while preserving their {\eigs}.
	They work under the setting of centralized social graphs where the social graph is held by a single entity and processed in the plaintext domain.
	Some works consider the scenario of decentralized social graphs, and focus on the privacy-preserving support for \emph{different} tasks with differential privacy, such as estimating subgraph counts \cite{SunXKYQWY19} and generating representative synthetic social graphs \cite{qin2017generating}.

    The state-of-the-art design that is most related to ours is PrivateGraph \cite{sharma2018privategraph}, which is also aimed at \eig~analytics over decentralized social graphs with privacy protection.
	However, as mentioned above, PrivateGraph is subject to several crucial downsides in terms of functionality and security, which greatly limit its practical usability.
	In light of this gap, we present a new system design \main~for privacy-preserving {\eig} analytics on decentralized social graphs.
	Compared to PrivateGraph, \main~is much advantageous in that it (i) supports both directed and undirected social graphs (via secure realizations of both the Arnoldi method and the Lanczos method), (ii) does not reveal any users' exact degree information, and (iii) fully exploits the cloud to free the analyst from staying online for active and frequent interactions and conducting a large amount of local intermediate and post processing. In particular, as reported in \cite{sharma2018privategraph}, to obtain the {\eigs}, the analyst must spend 0.2 hours as well as communicate 10 GB with the cloud. In contrast, {\main} allows the analyst to directly receive the final {\eigs}. Table \ref{Tab:DiffWithPrivateG} summarizes the prominent advantages of our {\main} over PrivateGraph.

	\section{Preliminaries}
	\label{sec:pre}
	
	\subsection{Eigendecomposition-based Graph Analytics}
	\label{sec:preSpectral}

	A graph comprises a set of nodes with a corresponding set of edges which connect the nodes. The edges may be directed or undirected and may have weights associated with them as well.
	Eigendecomposition works on the adjacency matrix associated with a graph to yield {\eigs}.
	%
	% As one of the most powerful tools in graph analysis, {\eig} works on the adjacency matrix associated with a graph to yield {\eigs}, and can benefit most social graph based applications \cite{newman2006finding, wang2011identifying, newman2013spectral}. 
	A complete {\eig} on an $N*N$ matrix poses a considerable time complexity of $O(N^{3})$, which indeed also results in unnecessary cost for a large $N$ since only top-$k$ ($k\ll N$) {\eigs} are used in most {\eig}-based graph analysis tasks \cite{newman2006finding, wang2011identifying, newman2013spectral,kamvar2003extrapolation,kamvar2004adaptive, sangers2015eigenvectors, charalambous2016totally}. 
	In practice, given a large-scale adjacency matrix $\mathbf{A}$, to calculate its top-$k$ {\eigs}, the first step is to reduce its dimension from $N*N$ to $M*M$ ($M$ is usually slightly larger than $k$), producing a new matrix $\overline{\mathbf{A}}$ for further processing. The most popular dimension reduction methods are the Arnoldi method (Algorithm \ref{algo:1}) \cite{arnoldi1951principle} and the Lanczos method (Algorithm \ref{algo:2}) \cite{lanczos1950iteration}, which work on general (possibly non-symmetric) matrices and symmetric matrices, respectively. After dimension reduction, the QR algorithm \cite{francis1962qr} is usually used to efficiently calculate the complete {\eigs} of $\overline{\mathbf{A}}$. Finally, the top-$k$ eigenvalues of $\overline{\mathbf{A}}$ are used to represent the top-$k$ eigenvalues of $\mathbf{A}$, and the corresponding eigenvectors $\overline{\mathbf{V}}$ can be transformed to the eigenvectors $\mathbf{V}$ of $\mathbf{A}$ by $\mathbf{V}=\mathbf{P}\overline{\mathbf{V}}$ where $\mathbf{P}$ is determined by line $\ref{alg:AQ}$, Algorithm \ref{algo:1} or line $\ref{alg:LQ}$, Algorithm \ref{algo:2}.

	\subsection{Local Differential Privacy}
	\label{sec:LDP}
	
	Compared to the traditional differential privacy model \cite{dwork2006differential} that assumes a trusted data collector which can collect and see raw data, the recently emerging LDP model \cite{kasiviswanathan2011can} considers the data collector to be untrusted, in which each user only reports perturbed data with calibrated noises added. The formal definition of $(\epsilon, \delta)$-LDP is as follows.
	\begin{definition}
		\label{def:LDP}
		A randomized mechanism $\mathcal{M}$ satisfies $(\epsilon, \delta)$-LDP, if and only if for any inputs $x$ and $x'$, we have: $\forall y\in Range(\mathcal{M})$,  
		\begin{equation}\notag
			\label{eq:LDP}
			Pr[\mathcal{M}(x)= y]\leq e^{\epsilon}\cdot Pr[\mathcal{M}(x')= y] +\delta,
		\end{equation}
		where $Range(\mathcal{M})$ denotes the set of all possible outputs of $\mathcal{M}$, $\epsilon$ is the privacy budget, and $\delta$ is a privacy parameter.
		% 	is the privacy budget and $\delta$ is  privacy parameter.
	\end{definition}

	Laplace distribution is a widely popular choice to draw the noises, which is formally defined as follows.
	
	\begin{definition}
		A discrete random variable $x$ follows $Lap(\epsilon,\delta,\Delta)$ distribution if its probability density function is \cite{he2017composing}
		\begin{equation}\notag
			Pr[x]=\frac{e^{\frac{\epsilon}{\Delta}}-1}{e^{\frac{\epsilon}{\Delta}}+1}\cdot e^{\frac{-\epsilon\cdot|x-\mu|}{\Delta}}, \forall x\in \mathbb{Z},
		\end{equation}
		where $\mu$ is the mean of the Laplace distribution: 
		\begin{equation}\label{eq:mu}
			\mu=-\frac{\Delta\cdot\ln[(e^{\frac{\epsilon}{\Delta}}+1)\cdot(1-(1-\delta)^{\frac{1}{\Delta}})]}{\epsilon}.
		\end{equation}
		$\Delta$ is the sensitivity of a function $f$:
		\begin{equation}\notag
			\Delta=max|f(x)-f(x')|, 
		\end{equation}
		which captures the magnitude by which a single entity’s data can change the output of $f$ in the worst case \cite{dwork2006differential}.
	\end{definition}

	\subsection{Additive Secret Sharing}
	\label{sec:ass}
	
	Given a private value $x\in \mathbb{Z}_{2^{k}}$, ASS in a two-party setting works by splitting it into two secret shares $\langle x\rangle_{1}$ and $\langle x\rangle_{2}$ such that $x = \langle x\rangle_{1}+\langle x\rangle_{2}$ \cite{mohassel2017secureml}.
	%
	% Each share is uniformly random over $\mathbb{Z}_{2^{k}}$ and alone reveals no information about $x$.
	Each share alone reveals no information about $x$.
	We denote by $\llbracket x \rrbracket$ the ASS of $x$ for short. 
	It is noted that if $k=1$, we say the secret sharing is \textit{binary sharing}, denoted as $\llbracket x \rrbracket^{B}$, and otherwise \textit{arithmetic sharing}, denoted as $\llbracket x \rrbracket^{A}$.
	%
	% $\llbracket x \rrbracket := (\langle x\rangle_{1},\langle x\rangle_{2})$ where $x,\langle x\rangle_{1}, \langle x\rangle_{2} \in \mathbb{Z}_{2^{k}}$, and $\langle x\rangle_{1},\langle x\rangle_{2}$ are sent to two parties, respectively \cite{mohassel2018aby3}. 
	% If $k=1$, we say the secret sharing is \textit{binary sharing} denoted as $\llbracket x \rrbracket^{B}$, otherwise, \textit{arithmetic sharing}, denoted as $\llbracket x \rrbracket^{A}$.
	%
	% The basic operations in the secret sharing domain under a two-party setting, where each party holds one secret share of private values, are \textit{addition} and \textit{multiplication}.
	%
	Given a public constant $c$, and the secret sharings $\llbracket x \rrbracket$ and $\llbracket y \rrbracket$, addition/subtraction $\llbracket x\pm y \rrbracket =\llbracket x \rrbracket \pm \llbracket y \rrbracket$ and scalar multiplication $\llbracket c\cdot x \rrbracket = c\cdot \llbracket x \rrbracket$ can be performed without interaction among the two parties that hold the shares respectively, while multiplication $\llbracket  x\cdot y \rrbracket=\llbracket  x \rrbracket \cdot\llbracket y\rrbracket$ requires the two parties to have one round of online communication with the use of Beaver triples which can be prepared offline.
	It is noted that addition and multiplication over $\mathbb{Z}_2$ are equivalent to XOR and AND respectively.

		\begin{algorithm}[!t]
		\caption{The Arnoldi Method} 
	
		\label{algo:1}
		\begin{algorithmic}[1] 
			\REQUIRE A non-symmetric matrix $\mathbf{A}$; the target dimension $M$.
			\ENSURE A new matrix $\overline{\mathbf{A}}$ with dimension $M*M$ and $\mathbf{P}$.
			\STATE Start with an arbitrary vector $\mathbf{p}_{1}$ with $L^{2}$ norm 1. \label{algo:ArnoStart}
			\FOR{$k \in [2,M]$}
			\STATE $\mathbf{p}_{k}=\mathbf{A}\mathbf{p}_{k-1}$.
			\FOR{$j \in [1,k-1]$}
			\STATE $\overline{\mathbf{A}}[j,k-1]=\mathbf{p}_{j}^{T}\cdot \mathbf{p}_{k}$. ~~~\# $T$ denotes transposition.
			\STATE $\mathbf{p}_{k}=\mathbf{p}_{k}-\overline{\mathbf{A}}[j,k-1]\cdot \mathbf{p}_{j}$.
			\ENDFOR \label{algo:ArnoEnd}
			\STATE $\overline{\mathbf{A}}[k,k-1]=||p_{k}||$. ~~~~~~~~~~\#$||\cdot||$ denotes $L^{2}$ norm. \label{alg:squareRoot}
			\STATE $\mathbf{p}_{k}=\frac{\mathbf{p}_{k}}{\overline{\mathbf{A}}[k,k-1]}$. \label{alg:division}
			\ENDFOR
			\STATE $\mathbf{P}=[\mathbf{p}_{1},\cdots,\mathbf{p}_{M}]$. \label{alg:AQ}
		\end{algorithmic}
		
	\end{algorithm}

	\subsection{Function Secret Sharing}
	\label{sec:fss}
	
	FSS \cite{boyle2015function} is a low-interaction secret sharing for secure computation, presenting prominent advantages in online communication and round complexity compared to other alternative techniques, such as garbled circuits \cite{yao1982protocols} or ASS.
	%
	 % Specifically, when computing complex functions in the ciphertext domain, the FSS-based approaches offer significant savings in online communication and round complexity compared to other alternative techniques, such as garbled circuits \cite{yao1982protocols} or ASS. Therefore, under high network latency, e.g., wide area network (WAN), FSS has superior performance over other alternatives. 
	 From a high-level point of view, a two-party FSS-based approach to a private function $f$ consists of a pair of probabilistic polynomial time (PPT) algorithms: (i) $(k_{1},k_{2})\leftarrow\mathsf{Gen}(1^{\lambda}, f)$: given a security parameter $\lambda$ and a function description $f$, output two succinct FSS keys $k_{1},k_{2}$, each for one party. (ii) $\langle f(x)\rangle_{i} \leftarrow\mathsf{Eval}(k_{i},x)$: given an FSS key $k_{i}$ and an evaluation point $x\in\{0,1\}^{n}$, output the share $\langle f(x)\rangle_{i}$ of the function evaluation result. The security of FSS ensures that an adversary learning only one FSS key $k_{i}, i\in\{1,2\}$ learns no information about the function $f$ and output $f(x)$.

	% \begin{itemize}
	% 	\item $(k_{1},k_{2})\leftarrow\mathsf{Gen}(1^{\lambda}, f)$: given a security parameter $\lambda$ and a function description $f$, output two succinct FSS keys $k_{1},k_{2}$, each for one party.
		
	% 	\item $\langle f(x)\rangle_{i} \leftarrow\mathsf{Eval}(k_{i},x)$: given an FSS key $k_{i}$ and an evaluation point $x\in\{0,1\}^{n}$, output the share $\langle f(x)\rangle_{i}$ of the function evaluation result.
	% \end{itemize}

	%The latest FSS construction from \cite{boyle2021function} only requires the keys with size about $\lambda n$ bits, where $n$ is the value length of $x$.

	\section{System Overview}
	\label{sec:problem_def}
	
	\subsection{Architecture}

	% \main~is aimed at privacy-preserving \eig analytics on decentralized social graphs which are usually large and sparse in practice.
	%
	Fig. \ref{fig:systemmodel} illustrates \main's system architecture.
	%
	% In this paper, we focus on privacy-preserving \textit{{\eig}} on \textit{decentralized large-scale sparse} social graphs, and our architecture is illustrated in Fig. \ref{fig:systemmodel}. 
	There are three kinds of entities: the users $\mathcal{U}_{i} (i\in[1,N]$), the cloud, and the analyst. 
	The users and the social relationship (friendship for simplicity) among them constitute a (decentralized) social graph, where each $\mathcal{U}_{i}$ represents a graph \textit{node} and the friendship between any two users indicates the existence of an \textit{edge}, and the number of each $\mathcal{U}_{i}$' associated friends is the \textit{degree} of the corresponding graph node.
	Consider as a concrete example a phone network \cite{qin2017generating}, where each $\mathcal{U}_{i}$ has a limited phone numbers in his contact list and thus a limited local view about the social graph.

	\begin{algorithm}[!t]
		\caption{The Lanczos Method} 
	
		\label{algo:2}
		\begin{algorithmic}[1] 
			\REQUIRE A symmetric matrix $\mathbf{A}$; the target dimension $M$.
			\ENSURE A small matrix $\overline{\mathbf{A}}$ with dimension $M*M$ and $\mathbf{P}$.
			\STATE Start with an arbitrary vector $\mathbf{p}_{1}$ with $L^{2}$ norm 1.
			\STATE $\mathbf{w}=\mathbf{A}\mathbf{p}_{1}$; $\alpha_{1}=\mathbf{w}^{T}\cdot \mathbf{p}_{1}$; $\mathbf{w}=\mathbf{w}-\alpha_{1}\cdot \mathbf{v}_{1}$.
			\FOR{$k \in [2,M]$}
			\STATE $\beta_{k}=||\mathbf{w}||$.
			\STATE $\mathbf{p}_{k}=\frac{\mathbf{w}}{\beta_{k}}$.
			\STATE $\mathbf{w}=\mathbf{A}\mathbf{p}_{k}$.
			\STATE$\alpha_{k}=\mathbf{w}^{T}\cdot \mathbf{p}_{k}$.
			\STATE$\mathbf{w}=\mathbf{w}-\alpha_{k}\cdot \mathbf{p}_{k}-\beta_{k}\cdot \mathbf{p}_{k-1}$.
			\ENDFOR
			\STATE All $\alpha$ and $\beta$ form the tridiagonal matrix $\overline{\mathbf{A}}$.
			\STATE $\mathbf{P}=[\mathbf{p}_{1},\cdots,\mathbf{p}_{M}]$. \label{alg:LQ}
		\end{algorithmic}

	\end{algorithm}

	The decentralized social graph can be characterized by an adjacency matrix $\mathbf{A}$ of size $N*N$, where each row $\mathbf{A}[i,:] (i\in[1,N]$) indicates $\mathcal{U}_{i}$'s local view. For example, in an unweighted social graph, $\mathbf{A}[i,j]=1$ may indicate that $\mathcal{U}_{i}$ and $\mathcal{U}_{j}$ are friends; in a weighted social graph, $\mathbf{A}[i,j]=v$ may indicate the degree of intimacy between $\mathcal{U}_{i}$ and $\mathcal{U}_{j}$ by a value $v$.
	%
	% Hereafter we may refer to as $\mathbf{A}[i,:]$ as $\mathcal{U}_i$'s social data. 
	% Each $\mathcal{U}_{i}$ only has a limited local view about the social graph,
	%
	The users are willing to allow analytics over their federated data of local views to produce analytical results for the analyst.
	Our focus in this paper is the analytics task of \eig, which plays a vital role in graph analytics and has many applications as aforementioned.
	%
	% which can benefit many social graph-based applications as mentioned above.
	%
	However, due to privacy concerns, each $\mathcal{U}_{i}$ is not willing to disclose his private social relationship throughout the analytical process over the data federation, so the enforcement of data protection is demanded.

	The graph analytics service is empowered by the cloud for well understood benefits.
	In \main, the power of the cloud is split into two cloud servers from different trust domains which can be hosted by different service providers in practice.
	Such multi-server model is getting increasingly popular in recent years for security designs in various domains, including both academia \cite{du2020graphshield,chen2020metal,MohasselRR20, LiuZYY21,ZhengDTWZ21,xu2021privacy,tan2021cryptgpu,ZhengDW18} and industry \cite{Tfencrypt,Crypten}.
	The adoption of such model in \main~follows this trend.
	The two cloud servers in \main, denoted as {\csa}, collaboratively perform the analytics task of \eig~without seeing each $\mathcal{U}_i$'s data, and produce encrypted results (the top-$k$ {\eigs}) which are delivered to the analyst on demand. At a high level, {\main} proceeds through two phases: (i) \emph{Secure collection of decentralized social graph data}. In this phase, each user $\mathcal{U}_i$'s local view data $\mathbf{A}[i,:]$ regarding the social graph is collected by the cloud servers in encrypted form so as to form the matrix for \eig, through sparsity-aware and privacy-preserving techniques developed in \main. (ii) \emph{Secure {\eig}}. After the encrypted adjacency matrix is adequately formed at the cloud, the cloud servers collaboratively perform secure \eig, through a suite of customized secure protocols developed in \main.

		\begin{figure}[t!]
		\centering
		\includegraphics[width=0.8\linewidth]{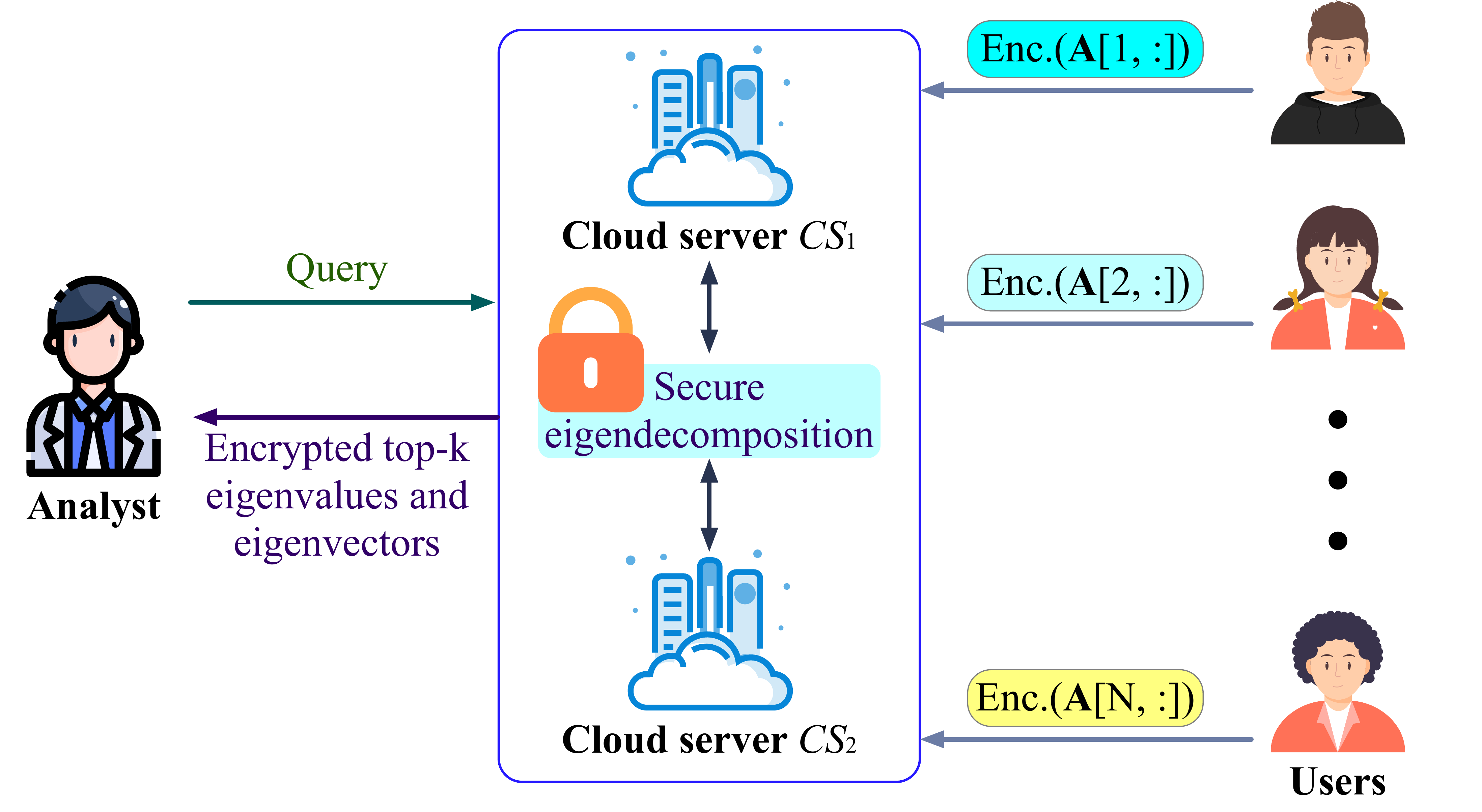}
		\caption{The system architecture of {\main}. $\mathbf{A}[i,:]$ represents the local view data from user $\mathcal{U}_{i}$ on a decentralized social graph.}
		\label{fig:systemmodel}
		\vspace{-15pt}
	\end{figure}

	\subsection{Threat Model}
	\label{sec:threat_model}
	
	Along the system workflow in \main, we consider that the primary threats come from the cloud entity empowering the \eig-based graph analytics service.
	Similar to most prior security designs in the two-server setting \cite{doerner2017scaling,agrawal2019quotient,chen2020metal}, we assume a semi-honest and non-colluding adversary model where {\csa} honestly follow the protocol specification of \main, but may individually attempt to learn private information about individual users' social connections in the decentralized social graph and the analytics result. 
	Note that each user's private social connections with other entities in the graph are reflected by the non-zero elements in her local view vector, as introduced above.
	So, for the privacy of individual users, {\main} aims to conceal the sensitive information regarding the non-zero elements in each user $\mathcal{U}_i$'s local view vector $\mathbf{A}[i,:]$, which includes the \textit{positions}, \textit{values}, and \textit{number}.

	\section{Secure Collection of Decentralized Social Graph Data}
	\label{sec:main}

	\subsection{Overview}
	
	In this phase, each user $\mathcal{U}_i$ provides his local view data $\mathbf{A}[i,:]$ in protected form to the service.
	For high efficiency, \main~resorts to the lightweight ASS technique for encryption of the elements in $\mathbf{A}[i,:]$.
	A simple method is to have each $\mathcal{U}_i$ directly apply ASS over his complete $\mathbf{A}[i,:]$ of length $N$.
	However, such simple method is clearly inefficient due to the sparsity of the (decentralized) social graph.
	According to Facebook's statistics \cite{curtiss2013unicorn}, on average a user has 130 friends in a social network, which is far less than $N$ (e.g., hundreds of thousands). 
	This leads to high sparsity, indicating that the complete $\mathbf{A}[i,:]$ will be filled with many zeros.
	So the simple method will incur significant cost on the user side as well as pose unnecessary workload in the subsequent secure \eig~process.
	
	To remedy this, a plausible approach is to leverage sparse representation. Specifically, $\mathcal{U}_{i}$ applies ASS only over each nonzero element at location $j$ (denoted as $\{(i,j,\mathbf{A}[i,j])\}$), and sends $\{(i,j,\llbracket\mathbf{A}[i,j]\rrbracket^{A})\}$ to {\csa}. 
	Such approach yet leads to prominent privacy leakages: (i) The number of nonzero elements (i.e., the degree) indicates the number of $\mathcal{U}_{i}$'s friends, which can be used in inferring $\mathcal{U}_{i}$'s privacy \cite{zhou2008brief}. (ii) The presence of a nonzero element $(i,j,\llbracket\mathbf{A}[i,j]\rrbracket^{A})$ implies the presence of the relationship between $\mathcal{U}_{i}$ and $\mathcal{U}_{j}$, which can also be used in inference attacks \cite{zhou2008brief}. (iii) If the social graph is unweighted (i.e., each element is 0 or 1), the presence of a nonzero element $(i,j,\llbracket\mathbf{A}[i,j]\rrbracket^{A})$ implies $\mathbf{A}[i,j]=1$, revealing the data to {\csa}.

	Therefore, the challenge here is how to preserve the benefits of sparsity while protecting the privacy of individual user's social relationships.
	Meanwhile, the effectiveness of subsequent \eig~should not be affected.

	Our key insight is to delicately trade off (\emph{node degree}) privacy for efficiency, by having $\mathcal{U}_i$ blend in some dummy edges with \emph{zero} weights at random empty locations in $\mathbf{A}[i,:]$, and apply ASS over the weights of both true and dummy edges, inspired by \cite{sharma2018privategraph}.
	Under ASS, even encrypting the same (zero) value multiple times will result in different shares (ciphertexts) indistinguishable from uniformly random values.
	% the resulting ciphertext (share) still cannot be distinguished from a uniformly random value.
	%
	Therefore, the dummy edges cannot be distinguished from the true edges, as well as do not affect the effectiveness of the subsequent secure {\eig} process.
	%

	% We resort to a compromise idea \cite{sharma2018privategraph}: let $\mathcal{U}_{i}$ blend some dummy edges with weight 0 at random empty locations in his social data (sparse matrix encoding), and then encrypt both true and dummy edges' weights by ASS. Since ASS is probabilistic, even encrypting the same value multiple times, the resulting ciphertext still cannot be distinguished from a uniformly random value when a party only receives one of the shares. Therefore, the dummy edges cannot be distinguished from other (true or dummy) edges as well as do not affect the accuracy of subsequent secure {\eig}. 

	What remains challenging here is how to appropriately set the number of dummy edges so as to delicately balance the trade-off between \textit{efficiency} and \textit{privacy}.
	%
	%
	% However, the key challenge is how to select the number of dummy edges. % to avoid the cloud from inferring node degrees or the presence of edges. 
	% Specifically, the \textit{sparsity} and \textit{security} is a trade-off:
	%
	Too many dummy edges will impair the sparsity and increase the system overhead, while too few dummy edges will result in weaker privacy protection.
	Therefore, a tailored security design is demanded to provide a theoretically sound approach by which $\mathcal{U}_{i}$ can select the adequate number of dummy edges to achieve a balance between efficiency and privacy. 
	Our main idea is to rely on LDP so as to make the leakage about the node degrees \emph{differentially private}.
	% We note that LDP provides a way to protect each user from re-identification while preserving the sparsity. 
	%
	In what follows, we start with some basic approaches and discuss their limitations.
	Then we present our tailored solution.
	% and  and then introduce our design for this challenge.
	
	\subsection{Basic Approaches}

		\begin{figure}
		\centering
		\includegraphics[width=0.65\linewidth]{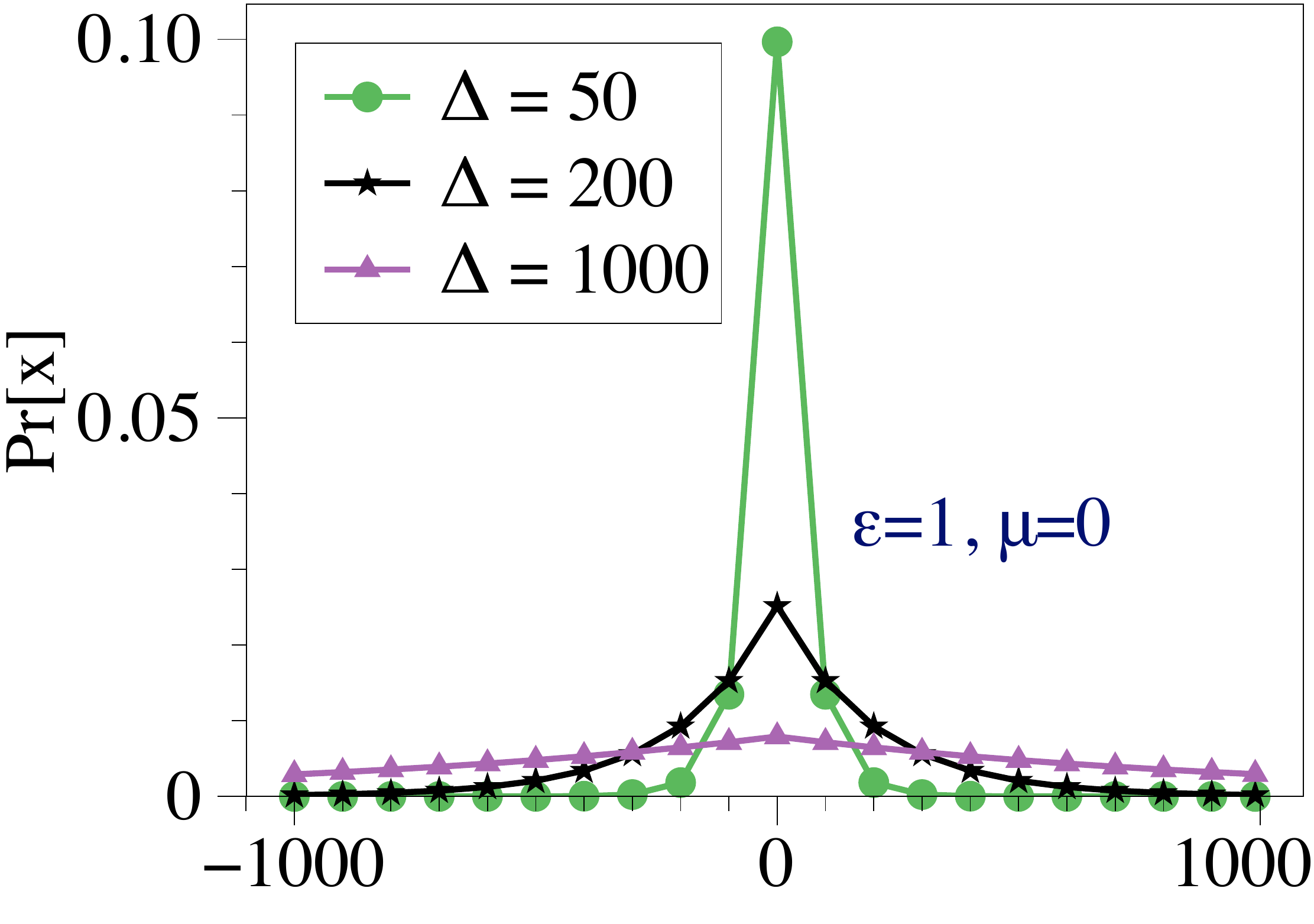}
		\caption{The discrete Laplace distribution with different sensitivities $\Delta$.}
		\label{fig:viewLap}
		\vspace{-12pt}
	\end{figure}
	
	A basic approach based on LDP can work as follows.
	As aforementioned, each $\mathcal{U}_{i}$ draws a noise $n_{i}$ from the discrete Laplace distribution, and then blends $n_{i}$ dummy edges with zero values at random empty locations in $\mathbf{A}[i,:]$, followed by encrypting the weights of both true and dummy edges by ASS.
	In this way, differential privacy guarantee on the node degree can be achieved, and the presence of edges is disguised as well \cite{sharma2018privategraph}.
	%
	% degree LDP, followed by encrypting both true and dummy edges' weights by ASS. 
	%
	Here, the sensitivity $\Delta$ needs to be set to $\Delta=(d_{max}-d_{min})$, where $d_{max}$ and $d_{min}$ are the possible maximum and minimum degree in the decentralized social graph, respectively. 
	The above basic approach, however, would result in a very large $\Delta$ (up to $N$ in theory) to be applied, leading to a large number of dummy edges to be added and heavily impairing the sparsity. 
	To illustrate why this is the case, Fig. \ref{fig:viewLap} shows the probability density function of the discrete Laplace distribution with different $\Delta$. 
	It is revealed that a larger $\Delta$ leads to the shape of the density function being more uniform. 
	This indicates that the larger the sensitivity $\Delta$ is, the larger probability that $\mathcal{U}_{i}$ draws a large $|n_{i}|$ will be.
	In contrast, a small $\Delta$ will make the probability density function concentrated (e.g., $\Delta=50$ in Fig. \ref{fig:viewLap}), which means that $\mathcal{U}_{i}$ will draw a small $|n_{i}|$ with a large probability. 
	%
	% Therefore, if all $\mathcal{U}_{i}$s draw $n_{i}$ from the same Laplace distribution with a very large $\Delta$, a large number of dummy edges could be blended in the social graph, heavily impairing the sparsity. 
	%
	% Hence, the direct use of LDP does not yield a satisfactory solution.
	
	% in terms of practical performance.

	In order to achieve better efficiency, an alternative approach as proposed by \cite{sharma2018privategraph} is to use a bin-based mechanism that provides bin-wise differential privacy rather than graph-wise differential privacy.
	%
	% To achieve a better balance between privacy and sparsity, we use a bin-based method to achieve weaker contributor indistinguishability, which is reduced from the whole graph to a subset of nodes in a bin.
	The main idea is to partition users into several bins, each of which contains an approximately equal number of users whose degrees are within a small interval $[d_{p}, ~d_{q}]$. 
	As such, all users in the same bin can use a much smaller sensitivity $\Delta=d_{q}-d_{p}$.
	% in applying differential privacy for generating dummy edges.
	%
	We note that while the idea proposed by Sagar \textit{et al.} \cite{sharma2018privategraph} is useful, their design to instantiate such idea is not satisfactory and has crucial downsides. Firstly, it requires some (sampled) users to open their degrees to estimate the degree histogram of the social graph so as to split users into bins, harming the privacy of the sampled users.
	Secondly, to avoid edge deletion when the drawn noise is negative, it lets all users add a large offset to the drawn noise, making a notable impact on the sparsity.
	In addition, no formal analysis regarding the differential privacy guarantee is provided.

	\subsection{Our Approach}
	\label{sec:phase1}

	Based on the aforementioned idea of bin-based LDP, we design a new protocol for secure collection of decentralized social graph data, which does not expose any users' exact node degree information, as opposed to PrivateGraph \cite{sharma2018privategraph}.
	Meanwhile, our protocol does not require users to add a large offset to the noises, and thus is much more advantageous in maintaining the sparsity.
	We also later provide formal analysis on the differential privacy guarantee.
	Our protocol for secure collection of decentralized social graph data is comprised of the following ingredients: (i) secure degree histogram estimation, (ii) secure binning map generation, and (iii) local view data encryption.

	\subsubsection{Secure Degree Histogram Estimation}
	\label{sec:1-1}
	
	It is challenging for {\csa} to obliviously estimate the degree histogram without knowing any users' degrees. 
	In particular, given a public degree $d_{i}$ and a specific user's degree $d_{j}$, {\csa} need to efficiently check whether $d_{j}=d_{i}$ with both $d_{j}$ and the checking result being encrypted.
	To counter this challenge, we devise a tailored construction based on function secret sharing \cite{boyle2015function}, a recently developed cryptographic primitive that allows the generation of compact function shares and secure evaluation with the function shares.
	In particular, we take advantage of distributed point function (DPF) \cite{boyle2016function}, an FSS scheme for the point function $f_{\alpha,\beta}(x)$ which outputs $\beta$ if $x=\alpha$ and otherwise outputs 0. 
	Formally, a two-party DPF, parameterized by a finite Abelian group $\mathbb{G}$ ($\mathbb{G}:=\mathbb{Z}_{2^l}$ in our design), consists of the following algorithms: (i) $(k_{1},k_{2})\leftarrow\mathsf{Gen}(1^{\lambda}, \alpha,\beta)$: given a security parameter $1^{\lambda}$, a string $\alpha\in\{0,1\}^{n}$, and a value $\beta\in \mathbb{G}$, outputs two succinct DPF keys representing the function shares. (ii) $\langle f_{\alpha,\beta}(x)\rangle_{i} \leftarrow\mathsf{Eval}(k_{i},x)$: given a DPF key $k_{i}$ and an input $x\in\{0,1\}^{n}$, outputs $\langle f_{\alpha,\beta}(x)\rangle_{i}\in \mathbb{G}$.

	\begin{algorithm}[!t]
		\caption{Secure Degree Histogram Estimation} 
	
		\label{algo:3}
		\begin{algorithmic}[1] 
			\REQUIRE Users $\mathcal{SU}_{j}$'s degree $d_{j}$, $j\in[1,S]$.
			\ENSURE The encrypted estimated degree histogram $\llbracket D_{i}\rrbracket^{A}$, for all possible degrees $i\in[1,d_{max}]$.
			
			\underline{\# At $\mathcal{SU}_{j}, j\in [1,S]$ side:}
			\STATE Generate $(k_{j,1},k_{j,2})\leftarrow\mathsf{Gen}(d_{j}, 1)$. \label{alg:genDpf}
			\STATE Send $k_{j,1},k_{j,2}$ to {\cs}, {\css}, respectively.
			
			\underline{\# At $\mathcal{CS}_{t},t\in\{1,2\}$ side:}
			\STATE Initialize $\langle D_{i}\rangle_{t}=0,i\in[1,d_{max}]$.
			\FOR{$i\in[1,d_{max}]$}
			\FOR{$j\in[1,S]$}
			\STATE $\langle D_{i}\rangle_{t}+=\mathsf{Eval}(k_{j,t},i)$. \label{alg:evalDpf}
			\ENDFOR
			\ENDFOR
		\end{algorithmic}
		% \vspace{-8pt}
	\end{algorithm}

	Building on top of DPF, we develop a technique for secure degree histogram estimation, as presented in Algorithm \ref{algo:3}.
	%
	% illustrated in Fig. \ref{fig:viewDPF} and detailed in Algorithm \ref{algo:3}.
	%
	As directly engaging all users for building the degree histogram will lead to high performance overhead, \main~adopts a sampling strategy following \cite{sharma2018privategraph}, where $S$ users, denoted by $\{\mathcal{SU}_{j}\}_{j\in[1,S]}$, are randomly sampled to participate.
	%
	% Due to system performance consideration, instead of engaging all users for building an exact histogram, \main~adopts a sampling strategy inspired by \cite{sharma2018privategraph}, where $S$ users, denoted by $\mathcal{SU}_{j},j\in[1,S]$, are randomly sampled to participate in  
	% directly letting all users send their encrypted degrees to {\csa}, {\main} lets one of {\csa} randomly sample $S$ users, denoted as $\mathcal{SU}_{j},j\in[1,S]$.
	%
	% In Section \ref{sec:experiments}, we will conduct experiment to demonstrate that only sampling $10\%$ users can obtain an accurate degree histogram. 
	%
	Each $\mathcal{SU}_{j}$ generates DPF keys based on his degree $d_{j}$ (line \ref{alg:genDpf} in Algorithm \ref{algo:3}), where $\alpha$ and $\beta$ are set to $d_{j}$ and $1$ respectively.
	%
	% , respectively:
	% \begin{equation}\notag
		% 	(k_{j,1},k_{j,2})\leftarrow\mathsf{Gen}(d_{j}, 1),
		% \end{equation}
	%
	Each $\mathcal{SU}_{j}$ then sends $k_{j,1}$ to $\mathcal{CS}_1$ and $k_{j,2}$ to $\mathcal{CS}_2$.
	Utilizing the DPF keys $\{k_{j,t}\}_{j\in[1,S]}$, each $\mathcal{CS}_{t}$ ($t\in\{1,2\}$) can evaluate $\mathsf{Eval}(k_{j,t},i)$ for all possible degrees $i\in[1,d_{max}]$. By summing these evaluation results (line \ref{alg:evalDpf} in Algorithm \ref{algo:3}), $\mathcal{CS}_{t}$ can obtain exactly the encrypted number $\langle D_{i}\rangle_{t}$ of sampled users whose degree is equal to $i$. Correctness holds since 
	\vspace{-8pt}
	\begin{align}\notag
		\langle D_{i}\rangle_{1}+	\langle D_{i}\rangle_{2}&=\sum_{j=1}^{S}\mathsf{Eval}(k_{j,1},i)+\sum_{j=1}^{S}\mathsf{Eval}(k_{j,2},i)\\\notag
		&=\sum_{j=1}^{S}[\mathsf{Eval}(k_{j,1},i)+\mathsf{Eval}(k_{j,2},i)]\\\notag
		&=\sum_{j=1}^{S}1\{d_{j}=i\}.
	\end{align}
	%Therefore, $\llbracket D_{i}\rrbracket^{A}$ is exactly the encrypted number of sampled users whose degree is $i$. 
	
	% Next, we will introduce how \main~enable {\csa} to obliviously split all users based on $\llbracket D_{i}\rrbracket^{A},i\in[1,d_{max}]$ \yifeng{pay attention to the notation of sharing DiA.} into bins, each containing an approximately equal number of users whose degrees are within a small interval, so that the users in the same bin can use a small sensitivity when drawing noises.\yifeng{does "approximately equal number" requires a special design on binning? or it is a natural result after binning?}
	
	\subsubsection{Secure Binning Map Generation}
	\label{sec:1-2}
	
	With the encrypted degree histogram produced on the cloud side, we now introduce how to enable {\csa} to generate an encrypted binning map, based on which each user can get the sensitivity value for his use in drawing the noise.
	%
	% perform secure binning of users, where the goal here is to produce an encrypted binning map based on which each user can get the sensitivity value for use.
	%
	The binning strategy underlying our secure design follows prior art \cite{sharma2018privategraph}, leading to each bin containing an approximately equal number of users whose degrees are within a small interval.
	%
	% With this, the users in the same bin can use a small sensitivity when drawing noises.
	%
	It is noted that unlike our design operating in the ciphertext domain, \cite{sharma2018privategraph} operates in \emph{plaintext domain} with \emph{exposed} raw degrees of sampled users for building the histogram and producing the binning map.
	%

	% In the previous section, we introduced how {\csa} obliviously estimate the degree histogram of the social graph. Subsequently, to achieve a balance between the security and sparsity of the collected social graph, instead of using the standard LDP where all users use the same large sensitivity (up to $N$), we use the bin-based LDP. As aforementioned, however, the bin-based LDP \cite{sharma2018privategraph} requires to split the users based on the degree histogram into several bins, each containing an approximately equal number of users whose degrees are within a small interval.% so that the users in the same bin can use a small sensitivity.% instead of $N$. 

	% A users binning method is provided by \cite{sharma2018privategraph}, but we note that their method is plaintext and requires to reveal the degree histogram. Therefore, inspired by \cite{sharma2018privategraph}, {\main} provides a tailored secure users binning protocol, where {\csa} cannot learn additional private information.
	
	Our secure design is shown in Algorithm \ref{algo:4}, which inputs the encrypted degree histogram $\llbracket D_{i}\rrbracket^{A},i\in[1,d_{max}]$, the number $B$ of bins, and the number $S$ of the sampled users, and then outputs the encrypted binning map $\llbracket\mathsf{Inter}\rrbracket^{B}\in\{\llbracket 0\rrbracket^{B},\llbracket 1\rrbracket^{B}\}^{d_{max}}$ (i.e., an encrypted bit-string under binary secret sharing), where $1$ indicates the boundary of a bin. For example, given $d_{max}=10$, $\mathsf{Inter}=0001000001$ indicates that the users are partitioned into two bins: users whose degree $\in[1,4]$ and users whose degree $\in[5,10]$. 
	After obliviously computing $\llbracket\mathsf{Inter}\rrbracket^{B}$, {\csa} send it to all users, and then each of them can judge to which bin he belongs based on his own degree locally. 
	
	As shown in Algorithm \ref{algo:4}, {\csa} first calculate the public bin size $sizeB$ based on $B$ and $S$ (line \ref{alg:sizeb}), and then initialize an accumulator $\llbracket accu\rrbracket^{A}$. After that, {\csa} add $\llbracket D_{i}\rrbracket^{A}$ ($i\in [1,d_{max}]$) to $\llbracket accu\rrbracket^{A}$ in turn (line \ref{alg:accu}). After each addition, {\csa} obliviously evaluate $\llbracket\llbracket accu\rrbracket^{A}\ge sizeB\rrbracket^{B}$ and add the comparison result (in binary secret sharing) to $\llbracket\mathsf{Inter}[i]\rrbracket^{B}$. Specifically, if $ accu \ge sizeB$, $\mathsf{Inter}[i]=1$, indicating a bin boundary; otherwise, $\mathsf{Inter}[i]=0$. After that, based on $\llbracket\mathsf{Inter}[i]\rrbracket^{B}$, {\csa} obliviously evaluate whether to reset the accumulator $\llbracket accu\rrbracket^{A}$ or not. Specifically, if $\mathsf{Inter}[i]=1$, a bin boundary appears, and thus $accu$ needs to be reset to 0 for the next bin; otherwise, $accu$ remains unchanged. This step is given in line \ref{alg:reset}, where ``$!$" denotes \textit{not} operation which can be achieved by letting one of {\csa} flip its share $\langle\mathsf{Inter}[i]\rangle_{1}$ or $\langle\mathsf{Inter}[i]\rangle_{2}$ locally. Finally, {\csa} output $\llbracket\mathsf{Inter}\rrbracket^{B}\in\{\llbracket 0\rrbracket^{B},\llbracket 1\rrbracket^{B}\}^{d_{max}}$.

	\begin{algorithm}[!t]
		\caption{Secure Binning Map Generation} 
		
		\label{algo:4}
		\begin{algorithmic}[1] 
			\REQUIRE The encrypted estimated degree histogram $\llbracket D_{i}\rrbracket^{A},i\in[1,d_{max}]$, the number of bins $B$, and the number of sampled users $S$.
			\ENSURE The encrypted binning map $\llbracket\mathsf{Inter}\rrbracket^{B}\in\{\llbracket 0\rrbracket^{B},\llbracket 1\rrbracket^{B}\}^{d_{max}}$. 
			\STATE Calculate the bin size $sizeB=\frac{S}{B}$. \label{alg:sizeb}
			\STATE Initialize the accumulator $\llbracket accu\rrbracket^{A}=0$.
			\FOR{$i\in[1,d_{max}]$}
			\STATE $\llbracket accu\rrbracket^{A}+=\llbracket D_{i}\rrbracket^{A}$. \label{alg:accu}
			\STATE $\llbracket\mathsf{Inter}[i]\rrbracket^{B}=\llbracket\llbracket accu\rrbracket^{A}\ge sizeB\rrbracket^{B}$. \label{alg:inter}
			\STATE $\llbracket accu\rrbracket^{A}=\llbracket !\mathsf{Inter}[i]\rrbracket^{B}\cdot\llbracket accu\rrbracket^{A} $. \label{alg:reset}
			\ENDFOR
		\end{algorithmic}
	\end{algorithm}
	
	In Algorithm \ref{algo:4}, the addition operations are easily supported with secret sharing, but the operation $\llbracket\llbracket accu\rrbracket^{A}\ge sizeB\rrbracket^{B}$ is not directly supported. 
	What we need here essentially is secure comparison in the secret sharing domain.
	From the very recent works \cite{boyle2021function,LiuZYY21}, we identify two primitives suited to allow secure comparison in the secret sharing domain, and introduce two approaches accordingly to allow the realization of secure binning map generation.
	% in \main~(under different network scenarios).
	% so that \main~can well support achisecure secure binning map generation.
	%
	% In particular, we make observations from the very recent works \cite{boyle2021function,LiuZYY21}
	%
	The first approach is based on FSS \cite{boyle2021function}, which is more suited for high-latency network scenarios because it requires minimal rounds of interactions (at the cost of more local computation).
	The second approach is based on ASS \cite{LiuZYY21}, which requires a small amount of local computation but higher online communication and round complexities, fitting better into low-latency network scenarios.

	\noindent\textbf{FSS-based approach.} We identify the state-of-the-art construction of FSS-based secure comparison from \cite{boyle2021function}, which is referred to as distributed comparison function (DCF).
	% The FSS-based secure comparison protocol constructed based on distributed comparison functions (DCF) \cite{boyle2021function}. 
	%
	DCF is an FSS scheme for the function $g_{\alpha,\beta}(x)$ which outputs $\beta$ if $x<\alpha$ and $0$ otherwise. 
	% we take advantage of distributed point function (DPF) \cite{boyle2016function}, an FSS scheme for the point function $f_{\alpha,\beta}(x)$ which outputs $\beta$ if $x=\alpha$ and otherwise outputs 0. 
	%
	% The DCF, at a high level, obliviously outputs $\beta$ on private input $x$ if $x<\alpha$; otherwise, outputs 0. 
	Formally, a two-party DCF, parameterized by two finite Abelian groups $\mathbb{G}^{in}, \mathbb{G}^{out}$, consists of the following algorithms: (i) $(k_{1},k_{2}, r_{1},r_{2})\leftarrow\mathsf{Gen}(1^{\lambda}, \alpha,\beta)$, which takes as input a security parameter $1^{\lambda}$, $\alpha\in\mathbb{G}^{in}$, and $\beta\in\mathbb{G}^{out}$, and outputs two keys $k_{1},k_{2}$ and two random values $r_{1},r_{2}\in\mathbb{G}^{in}$ ($r_{1}+r_{2}=r^{in}$), each for one party. (ii) $\langle g_{\alpha,\beta}(x)\rangle_{i} \leftarrow\mathsf{Eval}(k_{i},x+r^{in})$, which takes as input a key $k_{i}$ and a (masked) input $x+r^{in}\in \mathbb{G}^{in}$, and outputs $\langle g_{\alpha,\beta}(x)\rangle_{i}\in \mathbb{G}^{out}$.
	The evaluation process in DCF only requires one round of online communication, in which the two parties send $\langle x\rangle_{i}+r_{i},i\in\{1,2\}$ to each other to reveal $x+r^{in}$ without leaking $x$. The security of DCF states that if an adversary only learns one of $k_{i},r_{i},i\in\{1,2\}$, it learns no information about the private input $x$ and output $g_{\alpha,\beta}(x)$.

	We now show how \main~builds on the DCF for secure evaluation of $\llbracket\llbracket accu\rrbracket^{A}\ge sizeB\rrbracket^{B}$.
	To use the DCF in \main, the related public parameters can be set as follows: $\alpha=sizeB$, $\beta=1$, $\mathbb{G}^{in}=\mathbb{Z}_{2^{l}}$, and $\mathbb{G}^{out}=\mathbb{Z}_{2}$.
	With these parameters, the DCF keys $k_{1},k_{2}$ and the random values $r_{1},r_{2}\in\mathbb{Z}_{2^{l}}$ can be prepared offline and distributed to {\csa} respectively.
	Note that such offline work can be done by a third-party server in practice \cite{boyle2021function}.
	For the secure online evaluation of $\llbracket\llbracket accu\rrbracket^{A}\ge sizeB\rrbracket^{B}$, $\mathcal{CS}_{t}$ ($t\in\{1,2\}$) first exchanges $\langle accu\rangle_{t}+r_{t}$ to each other to reveal $accu+r^{in}$, and then evaluates $\mathsf{Eval}(k_{t},accu+r^{in})$, which will output $\langle 1\rangle_{t}$ if $accu < sizeB$ and $\langle 0\rangle_{t}$ otherwise. 
	However, {\main} requires {\csa} to output $\llbracket 1\rrbracket^{B}$ if $accu\ge sizeB$, and thus {\main} further lets one of {\csa} flip its share locally.

	\begin{figure}[!t]
		\centering
		\includegraphics[width=0.7\linewidth]{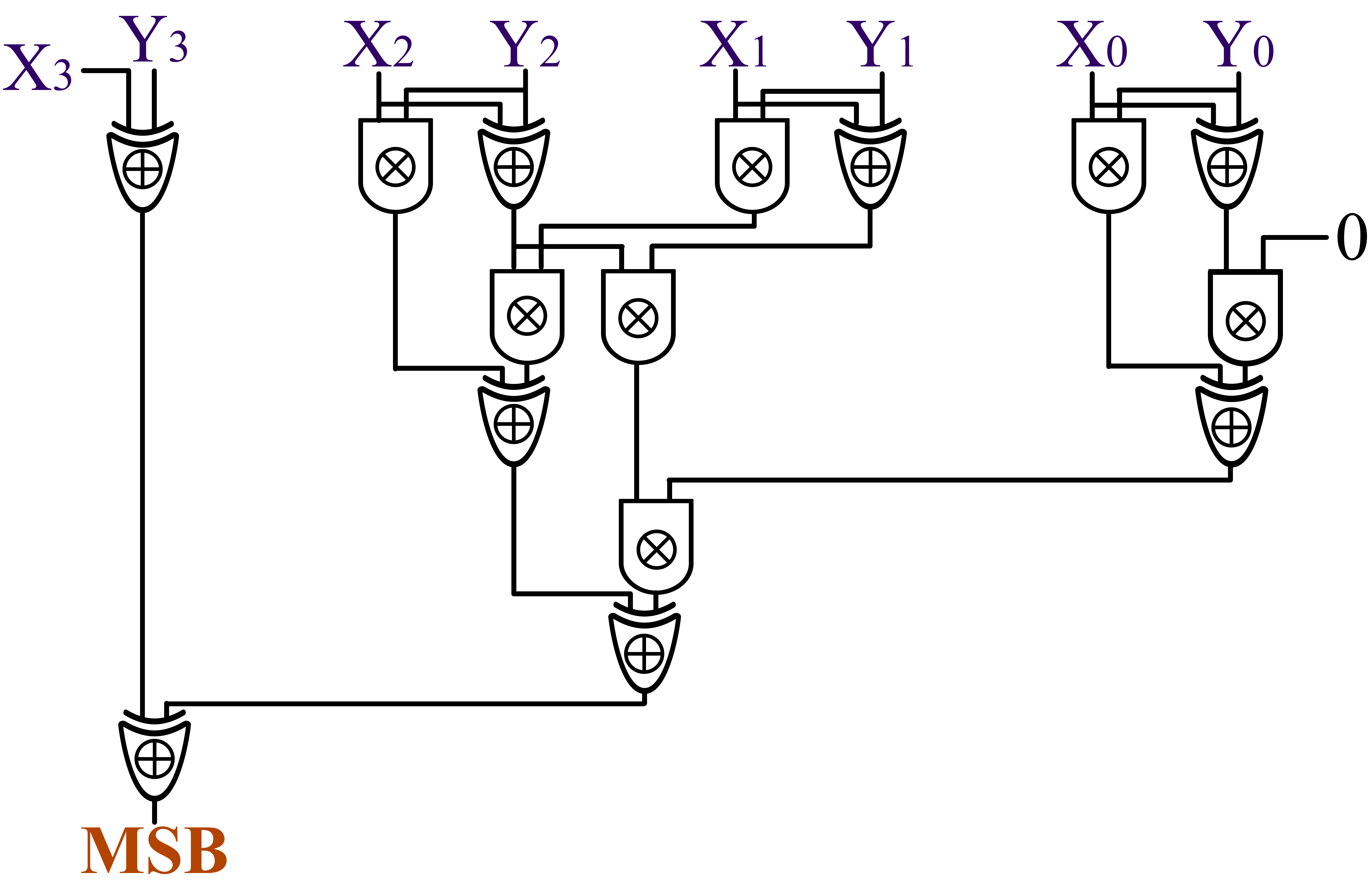}
		\caption{Illustration of a 4-bit tailored PPA.}
		\label{fig:ppa}
	\vspace{-10pt}
	\end{figure}

	\noindent\textbf{ASS-based approach.} 
	This approach is based on the idea of secure bit decomposition in the secret sharing domain \cite{LiuZYY21}.
	%
	% The other secure comparison protocol is constructed based on bit decomposition \cite{LiuZYY21}. 
	Specifically, given two fixed-point integers $X, Y\in\mathbb{Z}_{2^k}$ under two's complement representation, the most significant bit (MSB) of $X-Y$ can indicate the relationship between $X-Y$ and 0. Namely, if $X-Y<0$, $msb(X-Y)=1$ and otherwise $msb(X-Y)=0$. 
	Secure extraction of the MSB in the secret sharing domain can be achieved via secure realization of a parallel prefix adder (PPA) \cite{LiuZYY21}, as illustrated in Fig. \ref{fig:ppa}. It is observed that only XOR and AND operations need to be performed in the secret sharing domain.
	\main~leverages the construction from \cite{LiuZYY21}, which takes the secret sharings of two values $X$ and $Y$ as input, and outputs $\llbracket msb(X-Y)\rrbracket^{B}$.
	With that construction, {\csa} are able to obtain $\llbracket msb(accu-sizeB)\rrbracket^{B}$, i.e., $\llbracket 0 \rrbracket^{B}$ if $accu\ge sizeB$ and $\llbracket 1 \rrbracket^{B}$ otherwise.
	However, what is needed in {\main} instead is that {\csa} output $\llbracket 1\rrbracket^{B}$ if $accu\ge sizeB$ and $\llbracket 0\rrbracket^{B}$ otherwise.
	\main~further lets one of {\csa} flip its share $\langle msb(\sigma)\rangle_{t}$ ($t\in \{1,2\}$) locally.

It is noted that the output $\llbracket\mathsf{Inter}[i]\rrbracket^{B}=\llbracket\llbracket accu\rrbracket^{A}\ge sizeB\rrbracket^{B}$ of the above two approaches are in binary secret sharing. 
However, for the computation in line \ref{alg:reset} of Algorithm \ref{algo:4}, the binary secret-shared value needs to be multiplied by an arithmetic secret-shared value, i.e., $\llbracket\mathsf{!Inter}[i]\rrbracket^{B}\cdot \llbracket accu\rrbracket^{A}$.
We note that this can be achieved based on an existing technique from \cite{mohassel2018aby3}, through the following steps:
%
% Given the shares $\llbracket\mathsf{!Inter}[i]\rrbracket^{B}$ and $\llbracket accu\rrbracket^{A}$, the cloud servers {\csa} proceed as follows. 
(i) {\cs} randomly generates $r\in\mathbb{Z}_{2^{k}}$ ($k$ is the value length of $accu$), and sends two messages to {\css}: $m_{b}:=(b\oplus \langle  !\mathsf{Inter}[i] \rangle_{1})\cdot\langle accu\rangle_{1}-r, b\in\{0,1\}$. (ii) {\css} chooses $m_{0}$ if $  \langle  !\mathsf{Inter}[i]  \rangle_{2}=0$, otherwise {\css} chooses $m_{1}$. Therefore, the share held by {\css} is $m_{ \langle  !\mathsf{Inter}[i]  \rangle_{2}}= !\mathsf{Inter}[i] \cdot\langle accu\rangle_{1}-r$ and the share held by {\cs} is $r$.
(iii) For the other secret share $\langle accu\rangle_{2}$, {\css} acts as the sender and {\cs} acts as the receiver to perform step 1), 2) again.
At the end of the process, $\mathcal{CS}_{t}$ holds $\langle !\mathsf{Inter}[i]\cdot accu\rangle_{t}, t\in\{1,2\}$.
%
% which we provide details in Appendix xx. 
%
The resulting secret sharing of the binning map $\llbracket\mathsf{Inter}\rrbracket^{B}\in\{\llbracket 0\rrbracket^{B},\llbracket 1\rrbracket^{B}\}^{d_{max}}$ can then be sent to all users for use.

		\subsubsection{Local View Data Encryption}
		\label{sec:1-3}

		With the binning map $\mathsf{Inter}$ recovered at the user side, each $\mathcal{U}_i$ then encrypts his local view data, as shown in Algorithm \ref{algo:5}.
		The main problem is that the drawn noise $n_{i}$ could be negative (recall Fig. \ref{fig:viewLap}), which means that $\mathcal{U}_{i}$ should delete some edges. Obliviously, this will seriously impair the accuracy of subsequent secure {\eig}.
		To tackle this problem, {\main} lets $\mathcal{U}_{i}$ truncate $n_{i}$ to 0 inspired by \cite{he2017composing}, i.e., $n'_{i}=max(n_{i},0)$. In Section \ref{sec:privAnaly}, we will formally prove that all users' degrees still satisfy $(\epsilon,\delta)$-LDP although they may truncate $n_{i}$ to 0.
		%
		% After that, $n'_{i}$ dummy edges with weights 0 are blended in $\mathcal{U}_{i}$'s social data at random empty locations to disguise $\mathcal{U}_{i}$'s degree and the presence of edges. 
		%
		Let $\mathcal{L}_i$ denote the set of locations of both true edges and dummy edges, and $\{(i,j,\mathbf{A}[i,j])\}_{j\in \mathcal{L}_i}$ the resulting data of $\mathcal{U}_i$ after adding dummy edges with zero weights.
		Finally, $\mathcal{U}_{i}$ applies ASS over each $\mathbf{A}[i,j]$ and sends $\{(i,j,\llbracket \mathbf{A}[i,j]\rrbracket^{A})\}_{j\in\mathcal{L}_i}$ to the cloud.

		\begin{algorithm}[!t]
			\caption{Local View Data Encryption} 
			
			\label{algo:5}
			\begin{algorithmic}[1] 
				\REQUIRE The binning map $\mathsf{Inter}$; $\mathcal{U}_{i}$'s local view data $\{(i,j,\mathbf{A}[i,j])\}$ and degree $d_{i}$. 
				\ENSURE $\mathcal{U}_{i}$'s encrypted data $\{(i,j,\llbracket \mathbf{A}[i,j]\rrbracket^{A})\}_{j\in\mathcal{L}_i}$.
				\STATE $\mathcal{U}_{i}$ determines to which bin $\mathcal{B}_{j}$ he belongs based on $d_{i}$.
				\STATE According to the bin interval $\mathcal{B}_{j}=[L_{j},U_{j}]$, $\mathcal{U}_{i}$ can locally derive the sensitivity value $\Delta_{j}=U_{j}-L_{j}$.
				\STATE $\mathcal{U}_{i}$ draws $n_{i}$ from $Lap(\epsilon,\delta,	\Delta_{j})$.
				\STATE $n'_{i}=\max(n_{i},0)$.
				
				\STATE $\mathcal{U}_{i}$ blends $n'_{i}$ dummy edges with weights 0 in his local view data at random empty locations, and the resulting data is denoted as $\{(i,j,\mathbf{A}[i,j])\}_{j\in \mathcal{L}_i}$.
				
				\STATE $\mathcal{U}_{i}$ applies ASS over each $\mathbf{A}[i,j],j\in \mathcal{L}_i$ to obtain the ciphertext $\{(i,j,\llbracket \mathbf{A}[i,j]\rrbracket^{A})\}_{j\in\mathcal{L}_i}$.
				%\STATE $\mathcal{U}_{i}$ submits $\{(i,j,\langle \mathbf{A}[i,j]\rangle_{t})\}_{j\in\mathcal{L}_i}$ to $\mathcal{CS}_{t}, t\in\{1,2\}$.
			\end{algorithmic}
		\end{algorithm}

		\section{Secure Eigendecomposition}
		\label{sec:secure_eig}
		
		\subsection{Overview}
		
		The encrypted data collected from the users forms the encrypted adjacency matrix under sparse encoding for \eig~at the cloud.
		We denote it by $\llbracket\mathbf{A}^{*}\rrbracket^{A}:=\{(i,j,\llbracket \mathbf{A}[i,j]\rrbracket^{A})\}_{i\in[1,N],j\in\mathcal{L}^*}$, where $\mathcal{L}^*$ is a multi-set that contains the all locations from $\mathcal{L}_1$, $\mathcal{L}_2$, $\cdots$, $\mathcal{L}_N$.
		Upon deriving $\llbracket\mathbf{A}^{*}\rrbracket^{A}$, \main~needs to enable {\csa} to obliviously perform {\eig} on $\llbracket \mathbf{A}^{*}\rrbracket^{A}$, producing the encrypted top-$k$ {\eigs}.

		Firstly, {\main} introduces techniques (Section \ref{sec:dimRec}) to enable {\csa} to obliviously transform $\llbracket \mathbf{A}^{*}\rrbracket^{A}$ to $\llbracket \mathbf{\overline{A}}\rrbracket^{A}$ (in dense encoding) of a much smaller size $M*M$, with respect to the Arnoldi method (for general (possibly
		non-symmetric) matrices) and Lanczos method (for symmetric matrices) as introduced before in Section \ref{sec:preSpectral}.
		Then, {\main} provides secure realizations of the QR algorithm so as to achieve secure \eig~over the $\llbracket \mathbf{\overline{A}}\rrbracket^{A}$ to produce encrypted top-$k$ \eigs. 
		We give a basic design (Section \ref{sec:QR}) for the secure QR algorithm as a starting point, followed by a delicate optimized design (Section \ref{sec:vectorMaxtrix}) that further achieves a performance boost.

		\subsection{Secure Matrix Dimension Reduction}
		\label{sec:dimRec}

		We first consider how to enable {\csa} to obliviously execute the Arnoldi method on $\llbracket \mathbf{A}^{*}\rrbracket^{A}$ to output $\llbracket \mathbf{\overline{A}}\rrbracket^{A}$.
		Looking into the operations in Algorithm \ref{algo:1}, we note that the operations in line \ref{algo:ArnoStart}-\ref{algo:ArnoEnd}, which are comprised of addition and multiplication, can be naturally and securely realized in the arithmetic secret sharing domain by {\csa}. 
		%
		% On the other hand, it is challenging to 
		% It is noted that since addition and multiplication operations are naturally supported in the secret sharing domain, line \ref{algo:ArnoStart}-\ref{algo:ArnoEnd} in Algorithm \ref{algo:1} can be obliviously performed by {\csa}.
		However, it remains challenging for {\csa} to obliviously perform the operations in lines \ref{alg:squareRoot}, \ref{alg:division}, because \textit{square root} and \textit{division} are not directly supported in the secret sharing domain.

		Our solution is to approximate the square root and division operations using basic operations (i.e., $+,\times$), so that they can be securely supported in the secret sharing domain. For secure square root $\llbracket\sqrt{x}\rrbracket^{A}$, inspired by the very recent work \cite{knott2020crypten}, {\main} utilizes a roundabout strategy to approximate the inverse square root $ \frac{1}{\sqrt{x}}$ by the iterative Newton-Raphson algorithm \cite{akram2015newton}:
		\begin{equation}
			\label{eq:square_root}
			y_{n+1}=\frac{1}{2}y_{n}(3-xy_{n}^{2}), 
		\end{equation}
		which will converge to $y_{n}\approx \frac{1}{\sqrt{x}}$. Obviously, both subtraction and multiplication operations are naturally supported in the secret sharing domain. After that, {\main} lets {\csa} multiply $\llbracket y_{n}\rrbracket^{A}$ by $\llbracket x\rrbracket^{A}$ to derive $\llbracket\sqrt{x}\rrbracket^{A}$. 
		For secure division $\llbracket\frac{ y}{x}\rrbracket^{A}$ in the secret sharing domain, we note that the main challenge is to compute the reciprocal $\llbracket\frac{1}{x}\rrbracket^{A}$. However, we also observe that the reciprocal of division in Algorithm \ref{algo:1} is  $\frac{1}{\overline{\mathbf{A}}[k,k-1]}$, which is exactly the inverse square root computed in line \ref{alg:squareRoot}. Therefore, {\main} lets {\csa} directly perform the operation in line \ref{alg:division} by multiplying the  inverse square root computed in line \ref{alg:squareRoot} by $\llbracket \mathbf{q}_{k}\rrbracket^{A}$.
		
		\begin{algorithm}[!t]
			\caption{Secure Arnoldi Method} 
			\label{algo:7}
			\begin{algorithmic}[1] 
				\REQUIRE $\llbracket \mathbf{A}^{*}\rrbracket^{A}$ and the target dimension $M$.
				\ENSURE $\llbracket\overline{\mathbf{A}}\rrbracket^{A}$ with dimension $M*M$ and $\llbracket\mathbf{P}\rrbracket^{A}$.
				\STATE Start with an arbitrary vector $\llbracket\mathbf{ p}_{1}\rrbracket^{A}$ with $L^{2}$ norm 1. 
				\FOR{$k \in [2,M]$}
				\STATE $\llbracket \mathbf{p}_{k}\rrbracket^{A}=\llbracket \mathbf{A}^{*}\rrbracket^{A}\llbracket \mathbf{p}_{k-1}\rrbracket^{A}$.
				\FOR{$j \in [1,k-1]$}
				\STATE $\llbracket \overline{\mathbf{A}}[j,k-1]\rrbracket^{A}=\llbracket \mathbf{p}_{j}^{T}\rrbracket^{A}\cdot \llbracket \mathbf{p}_{k}\rrbracket^{A}$. 
				\STATE $\llbracket \mathbf{p}_{k}\rrbracket^{A}=\llbracket \mathbf{p}_{k}\rrbracket^{A}-\llbracket\overline{\mathbf{A}}[j,k-1]\rrbracket^{A}\cdot \llbracket \mathbf{p}_{j}\rrbracket^{A}$.
				\ENDFOR 
				\STATE $\llbracket x\rrbracket^{A}=\llbracket \mathbf{p}_{k}^{T}\rrbracket^{A}\cdot\llbracket \mathbf{p}_{k}\rrbracket^{A}$.
				
				\# Calculate the inverse square root of Eq. \ref{eq:cs}:
				\FOR{$n\in[1,\Omega]$} 
				\STATE $\llbracket y_{n+1} \rrbracket^{A} =\frac{1}{2}\cdot \llbracket y_{n}\rrbracket^{A}\cdot(3-\llbracket x\rrbracket^{A}\cdot\llbracket y_{n}\rrbracket^{A}\cdot\llbracket y_{n}\rrbracket^{A})$.
				\ENDFOR 
				
				\STATE $\llbracket\overline{\mathbf{A}}[k,k-1]\rrbracket^{A}=\llbracket x\rrbracket^{A}\cdot\llbracket y_{\Omega+1} \rrbracket^{A}$. 
				\STATE $\llbracket \mathbf{p}_{k}\rrbracket^{A}=\llbracket y_{\Omega+1}\rrbracket^{A} \cdot\llbracket \mathbf{p}_{k}\rrbracket^{A}$. 
				\ENDFOR
				\STATE $\llbracket\mathbf{P}\rrbracket^{A}=[\llbracket \mathbf{p}_{1}\rrbracket^{A},\cdots,\llbracket \mathbf{p}_{M}\rrbracket^{A}]$. \label{alg:P}
			\end{algorithmic}
		\end{algorithm}

		We give our construction for securely realizing the Arnoldi method in Algorithm \ref{algo:7}.
		Regarding the Lanczos method, we note that the operations required to be securely supported are identical to the Arnoldi method, so we omit the algorithm description for the secure Lanczos method.
		It is noted that the encrypted adjacency matrix $\llbracket\mathbf{A}^{*}\rrbracket^{A}:=\{(i,j,\llbracket \mathbf{A}[i,j]\rrbracket^{A})\}_{i\in[1,N],j\in\mathcal{L}^*}$ under sparse encoding can significantly save the cloud-side cost. For example, the secure matrix-vector product between $\llbracket\mathbf{A}^{*}\rrbracket^{A}$ and $\llbracket\mathbf{v}\rrbracket^{A}$ can be efficiently performed by only securely multiplying the elements $\llbracket \mathbf{A}[i,j]\rrbracket^{A}$ and the corresponding $\llbracket\mathbf{v}[j]\rrbracket^{A}$. 
		Obliviously, the number of multiplications in this example is independent of the number of columns in the original complete $\mathbf{A}$ and only depends on the number of rows and the number of nonzero elements in $\mathbf{A}$.

		\subsection{Secure QR Algorithm}
		\label{sec:QR}
		
		We now introduce, as a basic design, how to enable {\csa} to obliviously calculate the complete encrypted {\eigs} of $\llbracket\overline{ \mathbf{A}}\rrbracket^{A}$, through a protocol for securely realizing the widely used QR algorithm.
		The QR algorithm works in an iterative manner, consisting of a series of QR decomposition. Formally, given matrix $\mathbf{L}$ and $\mathbf{T}_{0}=\mathbf{L}$, in the $k$-th ($k\in[1,K]$) iteration, on input $\mathbf{T}_{k-1}$, we compute QR decomposition $\mathbf{T}_{k-1}=\mathbf{Q}_{k-1}\mathbf{R}_{k-1}$, where $\mathbf{Q}_{k-1}$ is an orthogonal matrix (i.e., $\mathbf{Q}^{T}=\mathbf{Q}^{-1}$), $\mathbf{R}_{k-1}$ is an upper Hessenberg matrix, and then output $\mathbf{T}_{k}=\mathbf{R}_{k-1}\mathbf{Q}_{k-1}$. At the end of QR algorithm, the diagonal elements of $\mathbf{T}_{K}$ are $\mathbf{L}$'s eigenvalues and $\mathbf{S}=\mathbf{Q}_{1}\cdots\mathbf{Q}_{K}$ are $\mathbf{L}$'s eigenvectors.

		In \main, we perform the QR decomposition utilizing the Givens rotations \cite{press1987numerical}. 
		%
		% Fig. \ref{fig:QR} illustrates a QR decomposition on a $4*4$ upper Hessenberg matrix utilizing Givens rotations.
		%
		Formally, given an $M*M$ upper Hessenberg matrix $\mathbf{T}_{k-1}$,
		%which is the result of the $(k-1)$-th QR decomposition, 
		we create the orthogonal Givens rotation matrix $\mathbf{G}_{i},i\in[1,M-1]$:
		\begin{equation}
			\label{eq:givensMatrix}
			\mathbf{G}_{i} =
			\begin{bmatrix}
				1 &\cdots &0&0&\cdots&0\\
				\vdots&\ddots&\vdots&\vdots&\ddots&\vdots\\
				0 &\cdots &c_{i}&-s_{i}&\cdots&0\\
				0 &\cdots &s_{i}&c_{i}&\cdots&0\\
				\vdots&\ddots&\vdots&\vdots&\ddots&\vdots\\
				0 &\cdots &0&0&\cdots&1\\
			\end{bmatrix},
		\end{equation}
		where 
		\begin{align}
			\label{eq:cs}
			c_{i}=\frac{\mathbf{H}(i)[i,i]}{\sqrt{\mathbf{H}(i)[i,i]^{2}+\mathbf{H}(i)[i+1,i]^{2}}},\\\notag	s_{i}=\frac{\mathbf{H}(i)[i+1,i]}{\sqrt{\mathbf{H}(i)[i,i]^{2}+\mathbf{H}(i)[i+1,i]^{2}}},\notag
		\end{align}
		and $\mathbf{H}(i)=\mathbf{G}^{T}_{i-1}\mathbf{H}(i-1), \mathbf{H}(1)=\mathbf{T}_{k-1}$. At the end of this iteration,
		\begin{equation}\notag
			\mathbf{T}_{k}=\mathbf{G}^{T}_{M-1}\cdots\mathbf{G}^{T}_{1}\mathbf{T}_{k-1}\mathbf{G}_{1}\cdots\mathbf{G}_{M-1}, 
		\end{equation}
		and $\mathbf{Q}_{k-1}=\mathbf{G}_{1}\cdots\mathbf{G}_{M-1}$.

		Next, we consider how to securely and efficiently perform the above process in secret sharing domain. 
		The main challenging part of the secure QR algorithm is to let {\csa} obliviously calculate square root and division in secret sharing. 
		This can be effectively addressed based on the techniques introduced in Section \ref{sec:dimRec}.
		In addition, we note that securely realizing the QR Algorithm consists mainly of matrix multiplications in the secret sharing domain.

		A straightforward method is to perform secret-shared value-wise multiplication, which would require $M^{3}$ multiplications and online communication of $2M^{3}$ ring elements.
		We note that a better choice is to work with vectorization \cite{mohassel2017secureml}, where the Beaver triples needed for secure multiplication are represented in a vectorized form, i.e., $(\mathbf{X},\mathbf{Y}, \mathbf{Z})$, where $\mathbf{Z}=\mathbf{X}\mathbf{Y}$; and $\mathbf{X}$ and $\mathbf{Y}$ play the role in masking the input matrices during secure multiplication.  
		Based on such vectorization trick, the online communication is reduced to only $2M^{2}$ ring elements. 
		\main~chooses to work with vectorization for secret-shared matrix multiplication.

		Algorithm \ref{algo:8} presents our basic design for securely realizing the QR algorithm.
		With $\llbracket\overline{ \mathbf{A}}\rrbracket^{A}$ as input, it outputs $\llbracket\mathbf{T}_{K}\rrbracket^{A}$ and $\llbracket\mathbf{S}\rrbracket^{A}$.
		Note that the top-$k$ diagonal elements of $\mathbf{T}_{K}$ represent the desired top-$k$ eigenvalues of $\mathbf{A}^{*}$.
		For the secret-shared eigenvectors $\llbracket\mathbf{S}\rrbracket^{A}$ of $\llbracket \mathbf{\overline{A}}\rrbracket^{A}$, they can be obliviously transformed to the corresponding eigenvectors of $\llbracket \mathbf{A}^{*}\rrbracket^{A}$ by $\llbracket\mathbf{V}\rrbracket^{A}=\llbracket\mathbf{P}\rrbracket^{A}\llbracket\mathbf{S}\rrbracket^{A}$, where $\llbracket\mathbf{P}\rrbracket^{A}$ is the output matrix of our secure Arnoldi method (or Lanczos method).

		% Recall Section \ref{sec:preSpectral}, the top-$k$ eigenvalues of the output matrix of Arnoldi method or Lanczos method is the top-$k$ eigenvalues of the original social graph. Therefore, the top-$k$ diagonal elements of $\llbracket\mathbf{T}_{K}\rrbracket^{A}$ are the top-$k$ eigenvalues of $\llbracket \mathbf{A}^{*}\rrbracket^{A}$. In addition, the eigenvectors $\llbracket\mathbf{S}\rrbracket^{A}$ of $\llbracket \mathbf{\overline{A}}\rrbracket^{A}$ can be obliviously transformed to the corresponding eigenvectors of $\llbracket \mathbf{A}^{*}\rrbracket^{A}$ by $\llbracket\mathbf{V}\rrbracket^{A}=\llbracket\mathbf{P}\rrbracket^{A}\llbracket\mathbf{S}\rrbracket^{A}$ where $\llbracket\mathbf{P}\rrbracket^{A}$ is the output matrix of our secure Arnoldi method (or Lanczos method), i.e., line $\ref{alg:P}$ in Algorithm \ref{algo:7} (secure Lanczos method is omitted). 

		\begin{algorithm}[!t]
			\caption{Secure QR Algorithm} 
			\label{algo:8}
			\begin{algorithmic}[1] 
				\REQUIRE The encrypted matrix $\llbracket\overline{ \mathbf{A}}\rrbracket^{A}$.
				\ENSURE The encrypted {\eigs} $\llbracket\mathbf{T}_{K}\rrbracket^{A}/\llbracket\mathbf{S}\rrbracket^{A}$ of $\llbracket\overline{ \mathbf{A}}\rrbracket^{A}$.
				\STATE $\llbracket \mathbf{T}_{0}\rrbracket^{A}=\llbracket\overline{ \mathbf{A}}\rrbracket^{A}$, $\llbracket\mathbf{S}\rrbracket^{A}=\llbracket\mathbf{I}\rrbracket^{A}$. \#$\mathbf{I}$ is an identity matrix.
				\FOR{$k\in[1,K]$}
				\STATE $\llbracket\mathbf{H}(1)\rrbracket^{A}=\llbracket\mathbf{T}_{k-1}\rrbracket^{A}$.
				\FOR{$i\in[1,M-1]$}
				\STATE $\llbracket x\rrbracket^{A}=(\llbracket \mathbf{H}(i)[i,i]\rrbracket^{A})^{2}+(\llbracket \mathbf{H}(i)[i+1,i]\rrbracket^{A})^{2}$.
				
				\# Calculate the inverse square root of Eq. \ref{eq:cs}:
				\FOR{$n\in[1,\Omega]$} 
				\STATE $\llbracket y_{n+1} \rrbracket^{A} =\frac{1}{2}\cdot \llbracket y_{n}\rrbracket^{A}\cdot(3-\llbracket x\rrbracket^{A}\cdot\llbracket y_{n}\rrbracket^{A}\cdot\llbracket y_{n}\rrbracket^{A})$.
				\ENDFOR 
				\STATE $\llbracket c_{i}\rrbracket^{A}=\llbracket \mathbf{H}(i)[i,i]\rrbracket^{A}\cdot \llbracket y_{\Omega+1} \rrbracket^{A}$.
				\STATE $\llbracket s_{i}\rrbracket^{A}=\llbracket \mathbf{H}(i)[i+1,i]\rrbracket^{A}\cdot \llbracket y_{\Omega+1} \rrbracket^{A}$.
				\STATE $\llbracket\mathbf{G}_{i}\rrbracket^{A}=\llbracket\mathbf{I}\rrbracket^{A}$. 
				\STATE $\llbracket\mathbf{G}_{i}[i,i]\rrbracket^{A}=\llbracket c_{i}\rrbracket^{A}$; $\llbracket \mathbf{G}_{i}[i+1,i+1]\rrbracket^{A}=\llbracket c_{i}\rrbracket^{A}$. \STATE$\llbracket \mathbf{G}_{i}[i+1,i]\rrbracket^{A}=\llbracket s_{i}\rrbracket^{A}$; $\llbracket \mathbf{G}_{i}[i,i+1]\rrbracket^{A}=\llbracket -s_{i}\rrbracket^{A}$. 
				\STATE $\llbracket\mathbf{H}(i+1)\rrbracket^{A}=\llbracket\mathbf{G}_{i}^{T}\rrbracket^{A}\llbracket\mathbf{H}(i)\rrbracket^{A}$.\label{alg:Tk1}

				\STATE $\llbracket\mathbf{S}\rrbracket^{A}=\llbracket\mathbf{S}\rrbracket^{A}\llbracket\mathbf{G}_{i}\rrbracket^{A}$. \label{alg:S}
				\ENDFOR
				\STATE $\llbracket\mathbf{T}_{k}\rrbracket^{A}=\llbracket\mathbf{H}(M)\rrbracket^{A}\llbracket\mathbf{G}_{1}\rrbracket^{A}\cdots\llbracket\mathbf{G}_{M-1}\rrbracket^{A}$. \label{alg:Tk2}
				\ENDFOR
			\end{algorithmic}
		\end{algorithm}

		% Finally, recall Section \ref{sec:preSpectral}, the top-$k$ eigenvalues of the output matrix of Arnoldi method or Lanczos method is the top-$k$ eigenvalues of the original social graph. Therefore, the top-$k$ diagonal elements of $\llbracket\mathbf{T}_{K}\rrbracket^{A}$ are the top-$k$ eigenvalues of $\llbracket \mathbf{A}^{*}\rrbracket^{A}$. In addition, the eigenvectors $\llbracket\mathbf{S}\rrbracket^{A}$ of $\llbracket \mathbf{\overline{A}}\rrbracket^{A}$ can be obliviously transformed to the corresponding eigenvectors of $\llbracket \mathbf{A}^{*}\rrbracket^{A}$ by $\llbracket\mathbf{V}\rrbracket^{A}=\llbracket\mathbf{P}\rrbracket^{A}\llbracket\mathbf{S}\rrbracket^{A}$ where $\llbracket\mathbf{P}\rrbracket^{A}$ is the output matrix of our secure Arnoldi method (or Lanczos method), i.e., line $\ref{alg:P}$ in Algorithm \ref{algo:7} (secure Lanczos method is omitted). 

		\subsection{Optimizing the Secure QR Algorithm }
		\label{sec:vectorMaxtrix}
		
		We now show how to further optimize the basic design introduced above to achieve an efficiency boost.
		% delicately mine the \textit{correlations} between different matrix multiplications in QR algorithm, and show how to further optimize our secure QR algorithm introduced above.
		%
		Our key insight is to first reformulate the plaintext QR algorithm by simplifying the Givens rotation matrix (i.e., Eq. \ref{eq:givensMatrix}), and then identify the correlated multiplications and extract repetitive multiplicands to further save the cost. 
		We first simplify the Givens rotation matrix based on the observation: in each Givens rotation $\mathbf{G}^{T}_{i}\mathbf{H}(i)$ or $\mathbf{H}(i)\mathbf{G}_{i}$, only the $i$-th and $(i+1)$-th rows of $\mathbf{H}(i)$ are updated (recall Eq. \ref{eq:givensMatrix}). Therefore, to save computation, we can reduce the Givens rotation matrix $\mathbf{G}_{i}$ from Eq. \ref{eq:givensMatrix} to $\begin{bmatrix}
			c_{i}&-s_{i}\\
			s_{i}&c_{i}
		\end{bmatrix}$ (denoted as $\mathbf{g}_{i}$). Fig. \ref{fig:QRoptim} illustrates the optimized QR decomposition on a $4*4$ upper Hessenberg matrix utilizing a series of Givens rotations. Similarly, when calculating $\llbracket\mathbf{S}\rrbracket^{A}=\llbracket\mathbf{S}\rrbracket^{A}\llbracket\mathbf{G}_{i}\rrbracket^{A}$ (i.e., line \ref{alg:S} in Algorithm \ref{algo:8}), we can also reduce $\llbracket\mathbf{G}_{i}\rrbracket^{A}$ to $\llbracket\mathbf{g}_{i}\rrbracket^{A}$.

		After the above simplification, we have a new observation: $\llbracket\mathbf{g}_{i}\rrbracket^{A}$ will be used repeatedly in several matrix multiplications. In a recent study \cite{kelkar2022secure}, Kelkar \textit{et al.} point out that when one of the multiplicands in a number of (secret-shared) matrix multiplications stays constant, the constant multiplicand can be masked and then opened only once so as to achieve cost savings on communication.
		For example, supposed we need to multiply $\llbracket\mathbf{U}\rrbracket^{A}$ with $\llbracket\mathbf{V}_1\rrbracket^{A}$, $\llbracket\mathbf{V}_2\rrbracket^{A}$, $\cdots$, $\llbracket\mathbf{V}_k\rrbracket^{A}$ in the secret sharing domain.
		We only need a single matrix sharing $\llbracket\mathbf{X}\rrbracket^{A}$ for $\llbracket\mathbf{U}\rrbracket^{A}$, rather than $k$ matrix sharings as in directly using the Beaver's trick in vectorized form. 
		Therefore, {\main} can have {\csa} only mask $\llbracket\mathbf{g}_{i}\rrbracket^{A}$ once for all secret-shared multiplications. In addition, the masked $\llbracket\mathbf{g}^{T}_{i}\rrbracket^{A}$ can be directly achieved by letting {\csa} \textit{locally} transpose the masked $\llbracket\mathbf{g}_{i}\rrbracket^{A}$.

		\begin{figure}[!t]
		\centering
		\includegraphics[width=0.9\linewidth]{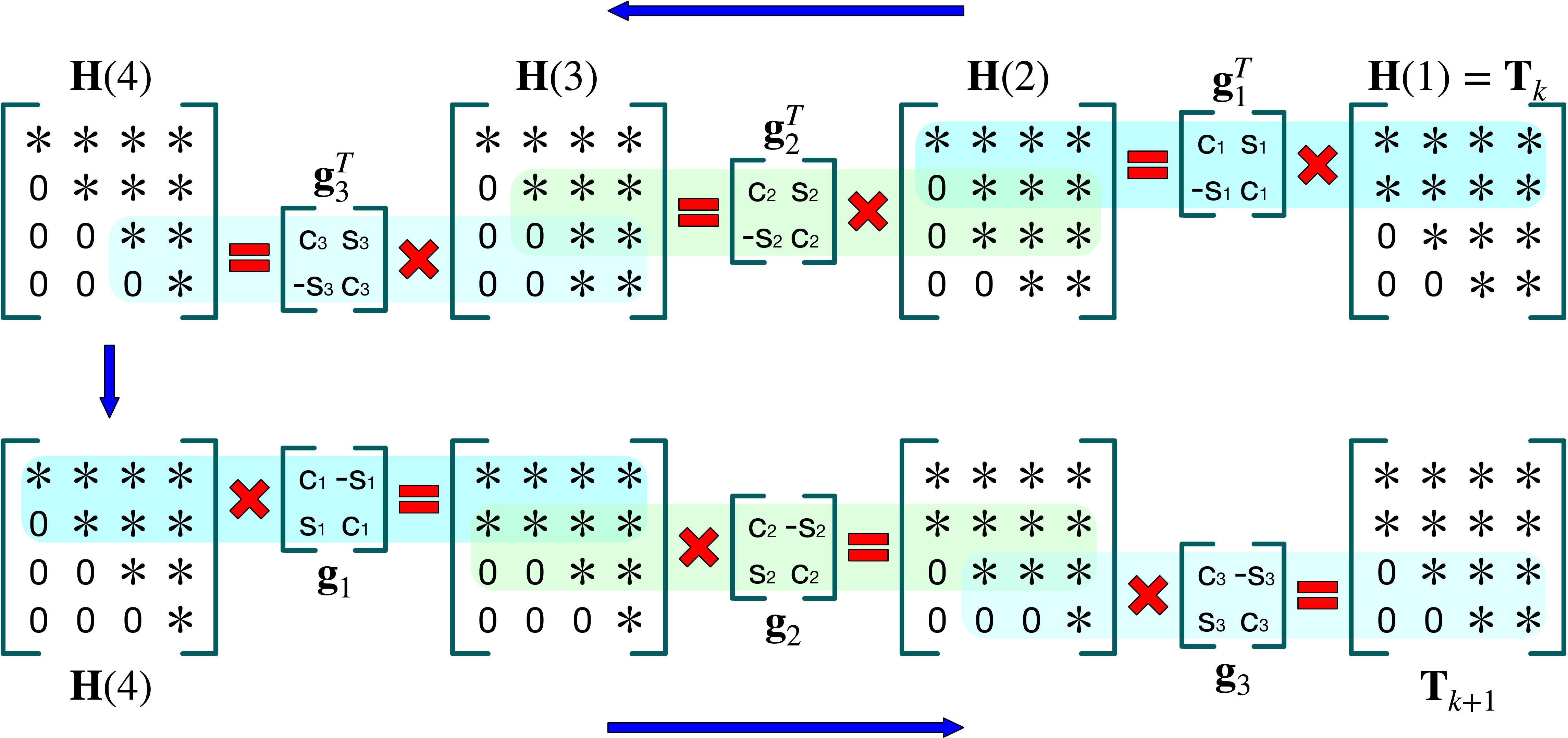}
		\caption{Illustration of the optimized QR decomposition on a $4*4$ upper Hessenberg matrix utilizing a series of Givens rotations.}
			\label{fig:QRoptim}
		\vspace{-10pt}
		\end{figure}

		%In addition, we can also use the technique of Mohassel \textit{et al.} \cite{mohassel2017secureml} to optimize other normal matrix multiplications, such as $\llbracket\textbf{S}\rrbracket^{A}=\llbracket\textbf{S}\rrbracket^{A}\llbracket\textbf{G}_{i}\rrbracket^{A}$ (line \ref{alg:S} in Algorithm \ref{algo:8}) and $\llbracket\textbf{V}\rrbracket^{A}=\llbracket\textbf{P}\rrbracket^{A}\llbracket\textbf{S}\rrbracket^{A}$ (securely calculate the final eigenvectors).

		%\begin{table*}[!t]
		%	\centering
		%	\caption{Complexity Comparison between Basic Secure QR Algorithm and Optimized Secure QR Algorithm}
		%	\label{Tab:complexity}
		%\resizebox{\textwidth}{18mm}{		
			%		\begin{tabular*}{\hsize}{@{}@{\extracolsep{\fill}}ccccc@{}}
				%			\toprule
				%			&&Part 1 &Part 2&Sum\\\midrule
				
				%			\multirow{2}{*}{Basic} 
				%			&Offline&$6K(M-1)M^{2}$&$3K(M-1)M^{2}$&$9K(M-1)M^{2}$\\ 
				
				%			&Online&$4K(M-1)M^{2}$&$2K(M-1)M^{2}$&$6K(M-1)M^{2}$\\ %\hline 
				%			\multirow{2}{*}{Optimized} &Offline&$K(M-1)(8M+4)$&$4K(M-1)M$&$K(M-1)(12M+4)$\\ 
				%			
				%			&Online&$K(M-1)(4M+4)$&$2K(M-1)M$&$K(M-1)(6M+4)$\\ 
				
				%			\bottomrule
				%		\end{tabular*}
			%	\end{table*}
		
		% \noindent\textbf{Complexity comparison.} 

		\noindent\textbf{Remark.} Our basic design for the secure QR algorithm requires {\csa} to online communicate $4K(M-1)M^{2}+ 2K(M-1)M^{2}=6K(M-1)M^{2}$ ring elements\footnote{We ignore the secure square root computation for simplicity.}. In contrast, our optimized secure QR algorithm only requires {\csa} to online communicate $K(M-1)(4M+4)+2K(M-1)M=K(M-1)(6M+4)$ ring elements. 
		%We summarize the detailed complexity comparison in Table \ref{Tab:complexity}. 
		As later shown in our experiments, the optimized secure QR algorithm can save up to $\mathbf{97\%}$ online communication as well as $\mathbf{9.3\%}$ computation cost as compared to the basic design.

		\subsection{Complexity Analysis}
		\label{sec:compexity_analy}
		
        The design of {\main} is comprised of several subroutines: 1) secure degree histogram estimation $\mathtt{secEst}$; 2) secure binning map generation $\mathtt{secBin}$; 3) local view data encryption $\mathtt{secEnc}$; 4) secure matrix dimension reduction $\mathtt{secRedu}$; 5) secure QR algorithm $\mathtt{secQR}$; 6) optimized secure QR algorithm $\mathtt{secOptiQR}$. Therefore, we analyze their theoretical performance complexities as follows.
		\begin{enumerate}[-]
			\item $\mathtt{secEst}$. Algorithm \ref{algo:3} describes $\mathtt{secEst}$ running between the cloud servers and some sampled users, which requires each of {\csa} to evaluate $S$ sampled users' DPF keys for all possible degrees $i\in[1,d_{max}]$, and thus its  computation complexity is $O(S\cdot d_{max})$. $\mathtt{secEst}$ does not require {\csa} to communicate with each other, but requires each of the $S$ sampled users to send a pair of DPF keys to {\csa}, and thus its  communication complexity is $O(S\cdot K)$, where $K$ is the size of DPF keys.
			\item  $\mathtt{secBin}$. Algorithm \ref{algo:4} describes $\mathtt{secBin}$ running at the cloud side, which requires {\csa} to securely evaluate for  all possible degrees $i\in[1,d_{max}]$, and thus its computation complexity is $O(d_{max})$. For communication, it is noted that we provide two secure comparison approaches (FSS-based and ASS-based). The FSS-based approach only requires {\csa} to send a masked value with bit length $l$ to each other, but the  ASS-based one requires  {\csa} to communicate $O(l\cdot\log l)$ bits.  Therefore, the communication complexity of FSS-based $\mathtt{secBin}$ is $O(l\cdot d_{max})$, and that of ASS-based $\mathtt{secBin}$ is $O(l\cdot\log l\cdot d_{max})$.
			\item  $\mathtt{secEnc}$. Algorithm \ref{algo:5} describes $\mathtt{secEnc}$ running at the users side, which requires each user $\mathcal{U}_{i}$ to encrypt his local view with length $|\mathcal{L}_{i}|$, and thus its computation complexity is $O(|\mathcal{L}_{i}|)$ and communication complexity is $O(|\mathcal{L}_{i}|\cdot l)$, where $l$ is the bit length of weight $\mathbf{A}[i,j]$. 
			\item  $\mathtt{secRedu}$. Algorithm \ref{algo:7} describes $\mathtt{secRedu}$ running at the cloud side, whose computation complexity is $O(N\cdot M)$, where $N$ is the width (or height) of the original matrix and $M$ is the width (or height) of the matrix after dimension reduction. Its communication complexity is $O(N\cdot M\cdot l)$. It is noted that the asymptotic computation and communication complexity of secure Lanczos method are also $O(N\cdot M)$ and $O(N\cdot M\cdot l)$ because its algorithm is similar to secure Arnoldi method (recall Algorithm \ref{algo:1} and \ref{algo:2}).
			\item  $\mathtt{secQR}$. Algorithm \ref{algo:8} describes $\mathtt{secQR}$ running at the cloud side, whose computation complexity is $O(K\cdot M^{3})$ and communication complexity is $O(K\cdot M^{3}\cdot l)$ where $K$ is the number of QR decomposition.		
			\item  $\mathtt{secOptiQR}$. It is noted that the asymptotic computation complexity of $\mathtt{secOptiQR}$ is  identical $\mathtt{secQR}$, i.e., $O(K\cdot M^{3})$, but its communication complexity is only $O(K\cdot M^{2}\cdot l)$ (recall the Remark in Section \ref{sec:vectorMaxtrix}).
		\end{enumerate}

		\section{Privacy and Security Analysis}
	\label{sec:security_analy}

	We now provide analysis regarding the protection \main~offers for the users. Note that {\main} aims to conceal the sensitive information regarding the non-zero elements in each user's local view vector, which includes the positions, values, and number (i.e., the degree).
	As \main~builds on LDP to perturb the exact degree of users and cryptographic techniques to safeguard the confidentiality of data values, we present our analysis in two parts.
	The first part is privacy analysis, which is to prove the differential privacy guarantee \main~offers for users.
	%
	% Nthe exact positions of non-zero elements can be obfuscated since users blend dummy elements at random empty positions in their local view vectors.
	%
	Note that blending in dummy edges in each user's local view vector as per the differential privacy mechanism obfuscates not only the number of non-zero elements but also their positions.
	The second part is security analysis, where we follow the standard simulation-based paradigm to prove data confidentiality against the cloud servers.
	Similar treatment also appears in prior works \cite{he2017composing,kacsmar2020differentially} using both DP and cryptography.

		\subsection{Privacy Analysis}
		\label{sec:privAnaly}
		\begin{theorem}
			\label{theo1}
			{\main} can achieve $(\epsilon, \delta)$-LDP for the degree of users in the same bin according to Definition \ref{def:LDP}.
		\end{theorem}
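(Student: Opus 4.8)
The plan is to model the degree-release step of Algorithm \ref{algo:5} as a randomized mechanism $\mathcal{M}(d) = d + \max(n,0)$, where $d$ is a user's true degree and $n$ is drawn from $Lap(\epsilon,\delta,\Delta)$ with $\Delta$ the width of that user's bin. Since every user in one bin reports with the same $\Delta$ and has a degree lying in a common interval of length $\Delta$, any two such degrees satisfy $|d-d'|\le\Delta$. The cloud observes exactly the count $d+\max(n,0)$ of (true plus dummy) edges, so establishing Definition \ref{def:LDP} reduces to bounding $Pr[\mathcal{M}(d)=y]$ against $e^{\epsilon}Pr[\mathcal{M}(d')=y]$ for every observable count $y$ and every same-bin pair $d,d'$.

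First I would dispatch the easy, noise-dominated part by analyzing the \emph{untruncated} mechanism $\widetilde{\mathcal{M}}(d)=d+n$. Using the discrete Laplace density, the likelihood ratio at a fixed output $y$ is $Pr[n=y-d]/Pr[n=y-d']=\exp\big(\tfrac{\epsilon}{\Delta}(|y-d'-\mu|-|y-d-\mu|)\big)$; the normalizing constant $\tfrac{e^{\epsilon/\Delta}-1}{e^{\epsilon/\Delta}+1}$ cancels, and the reverse triangle inequality gives $\big||y-d'-\mu|-|y-d-\mu|\big|\le|d-d'|\le\Delta$, so the ratio is at most $e^{\epsilon}$. Hence $\widetilde{\mathcal{M}}$ is $(\epsilon,0)$-LDP for same-bin degrees. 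This step is routine.

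The crux is the truncation $n'=\max(n,0)$, which \main~introduces to avoid edge deletion. Truncation leaves the output unchanged whenever $n\ge 0$ and collapses all mass from $\{n<0\}$ onto the single output $y=d$, making every count below $d$ impossible; thus $\mathcal{M}$ and $\widetilde{\mathcal{M}}$ coincide except on the event $\{n<0\}$. The heart of the argument is therefore to show $Pr[n<0]\le\delta$, and here the specific mean $\mu$ of Eq.~\eqref{eq:mu} is exactly what makes it work: summing the geometric tail, with $a:=e^{\epsilon/\Delta}$, gives $Pr[n\le-1]=\tfrac{a-1}{a+1}a^{-\mu}\sum_{k\le-1}a^{k}=\tfrac{a^{-\mu}}{a+1}$, and substituting $a^{-\mu}=(a+1)\big(1-(1-\delta)^{1/\Delta}\big)$ (which is equivalent to Eq.~\eqref{eq:mu}) yields $Pr[n<0]=1-(1-\delta)^{1/\Delta}$. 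Since $\Delta\ge 1$ forces $(1-\delta)^{1/\Delta}\ge 1-\delta$, this is at most $\delta$.

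Finally I would combine the two parts: on the event $\{n\ge 0\}$, of probability at least $1-\delta$, the truncated mechanism agrees with the $(\epsilon,0)$-LDP mechanism $\widetilde{\mathcal{M}}$, and the residual mass $\le\delta$ is what must absorb the outputs that truncation either creates at $y=d$ or renders impossible for a higher-degree neighbor; pushing this through the multiplicative bound of Definition \ref{def:LDP} supplies the additive $\delta$ term. I expect this last accounting to be the main obstacle and the place a careful proof must be most precise: truncation simultaneously concentrates probability at the minimum output and deletes sub-minimum outputs for larger-degree users, so one has to argue that the only outputs violating the pure-$\epsilon$ multiplicative bound lie in the range corrupted by truncation and that their weight is controlled by $Pr[n<0]\le\delta$. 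The pure-DP ratio bound and the closed-form tail computation are the reliable workhorses; the truncation bookkeeping is where the proof earns its $\delta$.
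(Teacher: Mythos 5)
Your proposal follows essentially the same route as the paper's proof: the same pointwise likelihood-ratio bound $e^{\frac{\epsilon}{\Delta}\left(|\hat{d}-d_i-\mu|-|\hat{d}-d_j-\mu|\right)}\le e^{\epsilon}$ for non-negative noise via the triangle inequality and $|d_i-d_j|\le\Delta$, and the same tail computation showing $Pr[n<0]=1-(1-\delta)^{\frac{1}{\Delta}}\le\delta$ from Eq.~\eqref{eq:mu}. The final ``bookkeeping'' step you flag as the delicate part is, in fact, handled even more loosely in the paper itself, which simply asserts that the $e^{\epsilon}$ bound holding with probability $1-\delta$ yields $(\epsilon,\delta)$-LDP; your explicit concern about outputs corrupted by truncation goes beyond what the paper provides.
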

		\begin{proof}
		Given the degrees $d_{i},d_{j}$ of $\mathcal{U}_{i}$ and $\mathcal{U}_{j}$ who are in the same bin, and the sensitivity of the bin is $\Delta$. If both the noises drawn from $Lap(\epsilon,\delta,\Delta)$ by $\mathcal{U}_{i}$ and $\mathcal{U}_{j}$ are \textit{non-negative}, the probability of them outputting the same noisy degree $\hat{d}$ is bounded by
			\begin{align}\notag
				\frac{Pr[\hat{d}|d_{i}]}{Pr[\hat{d}|d_{j}]}=	\frac{Pr[\hat{d}-d_{i}]}{Pr[\hat{d}-d_{j}]}&=\frac{e^{\frac{\epsilon\cdot|\hat{d}-d_{i}-\mu|}{\Delta}}}{e^{\frac{\epsilon\cdot|\hat{d}-d_{j}-\mu|}{\Delta}}}\\\notag
				&=e^{\frac{\epsilon}{\Delta}\cdot(|\hat{d}-d_{i}-\mu|-|\hat{d}-d_{j}-\mu|)}\\\notag
				&\leq e^{\frac{\epsilon}{\Delta}\cdot(|\hat{d}-d_{i}|-|\hat{d}-d_{j}|)}\\\notag
				&\leq e^{\frac{\epsilon}{\Delta}|d_{i}-d_{j}|}\leq e^{\epsilon}.
			\end{align}
			In addition, we note that the probability to draw a \textit{negative} noise from $Lap(\epsilon,\delta,\Delta)$ is \cite{he2017composing}:
			\begin{equation}\notag
				Pr[x<0]=\sum_{x=-1}^{-\infty}\frac{e^{\frac{\epsilon}{\Delta}}-1}{e^{\frac{\epsilon}{\Delta}}+1}\cdot e^{\frac{\epsilon\cdot|x-\mu|}{\Delta}}=\frac{e^{\frac{-\mu\cdot\epsilon}{\Delta}}}{e^{\frac{\epsilon}{\Delta}}+1}.
			\end{equation}
			Given Eq. \ref{eq:mu} and $\Delta\ge0$, we have
			\begin{equation}\notag
				Pr[x<0]=1-(1-\delta)^{\frac{1}{\Delta}}\leq\delta,
			\end{equation}
			which means that the maximum probability of truncation is $\delta$. Therefore, with $1-\delta$, the probability to output the same noisy degree $\hat{d}$ from $\mathcal{U}_{i}$ and $\mathcal{U}_{j}$ is bounded by $e^{\epsilon}$, which satisfies $(\epsilon, \delta)$-LDP in Definition \ref{def:LDP}. 
			\end{proof}

		\subsection{Security Analysis}
		\label{subsec:security_analy}
		We follow the standard simulation-based paradigm to analyze the security of {\main}.  In this paradigm, a protocol is secure if the view of the corrupted party during the protocol execution can be generated by a simulator given only the party's input and legitimate output.
		Let $\Pi$ denote the protocol in \main~for secure \eig.
		Recall that the cloud servers \csa~neither provide input nor obtain output in \main.
		\begin{definition}
			\label{def:simu}
			Let $\mathrm{view}_{\mathcal{CS}_t}^{\Pi}$ denote each $\mathcal{CS}_t$'s view during the execution of $\Pi$. We say that $\Pi$ is secure in the semi-honest and non-colluding setting, if for each $\mathcal{CS}_{t}$ there exists a PPT simulator which can generate a simulated view such that $\mathrm{Sim}_{\mathcal{CS}_t}\mathop  \approx  \mathrm{view}_{\mathcal{CS}_t}^{\Pi}$.
			That is, the simulator can simulate a view for $\mathcal{CS}_t$, which is indistinguishable from $\mathcal{CS}_t$'s view during the execution of $\Pi$.
		\end{definition}
		% Recall that the protocol design in {\main} consists of several secure subroutines: 1) secure degree histogram estimation $\mathtt{secEst}$; 2) secure binning map generation $\mathtt{secBin}$; 3) local view data encryption $\mathtt{secEnc}$; 4) secure matrix dimension reduction $\mathtt{secRedu}$; 5) secure QR algorithm $\mathtt{secQR}$; 6) optimized secure QR algorithm $\mathtt{secOptiQR}$.
	
		\begin{theorem}
			\label{theo2}
			{\main} is secure according to Definition \ref{def:simu}.
		\end{theorem}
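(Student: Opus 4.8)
The plan is to establish security via the standard simulation paradigm together with a hybrid/modular-composition argument over the subroutines $\mathtt{secEst}$, $\mathtt{secBin}$, $\mathtt{secEnc}$, $\mathtt{secRedu}$, and $\mathtt{secQR}$ (or $\mathtt{secOptiQR}$) that constitute $\Pi$. It suffices to exhibit, for each $\mathcal{CS}_t$ and each subroutine, a PPT simulator whose output is computationally indistinguishable from the real partial view, and then stitch these simulators together. The guiding observation is that every value $\mathcal{CS}_t$ ever observes in $\Pi$ falls into one of three types: (i) a single additive or binary share, which by the perfect secrecy of ASS is uniform over its ring and is simulated by a fresh uniform sample; (ii) a single FSS key---a DPF key in $\mathtt{secEst}$, or a DCF key together with one mask $r_t$ in $\mathtt{secBin}$---which by DPF/DCF security leaks nothing about the embedded point/comparison function and is simulated by invoking the corresponding FSS simulator; or (iii) a value opened during a Beaver-triple multiplication, or a masked input $accu+r^{in}$ in the DCF evaluation, which is blinded by a fresh uniformly random triple component (resp.\ $r^{in}$) and is therefore itself uniform and independent, again simulated by a uniform sample.

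I would then proceed subroutine by subroutine. For $\mathtt{secEst}$, $\mathcal{CS}_t$'s view is one DPF key per sampled user; DPF security makes each key indistinguishable from one generated by $\mathsf{Gen}$ on a dummy input, while the accumulated $\langle D_i\rangle_t$ are local evaluations that require no extra simulation. For $\mathtt{secBin}$, the only transcript arises from the comparison $\llbracket accu \ge sizeB\rrbracket^{B}$ and the binary--arithmetic product in line~\ref{alg:reset}: in the FSS-based instantiation the simulator uses the DCF simulator on $(k_t,r_t)$ and draws the opened $accu+r^{in}$ uniformly; in the ASS-based instantiation the PPA is built from XOR/AND gates whose openings are simulated as uniform bits; and the OT-based product from \cite{mohassel2018aby3} is simulated by composing the OT simulator with uniform messages. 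For $\mathtt{secEnc}$, $\mathcal{CS}_t$ receives one ASS share per true or dummy edge, each simulated by a uniform sample. Finally, $\mathtt{secRedu}$ and $\mathtt{secQR}/\mathtt{secOptiQR}$ consist solely of share-local additions/scalar multiplications (which generate no transcript) and Beaver-triple multiplications---including the Newton--Raphson iterations for the inverse square root and the vectorized matrix products---so the simulator replays each opened masked value as a fresh uniform ring element; the optimization of masking a constant multiplicand only once merely decreases the number of such uniform openings without altering the argument.

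The composition is then carried out through hybrids $\mathcal{H}_0,\dots,\mathcal{H}_m$, where $\mathcal{H}_0$ is the real view of $\mathcal{CS}_t$ and each successive hybrid replaces one subroutine's transcript by its simulated counterpart; the distinguishing advantage at each step is bounded by the security of the primitive invoked there (ASS secrecy, DPF/DCF security, OT security, or the uniformity of Beaver masks). Because $\mathcal{CS}_1$ and $\mathcal{CS}_2$ neither contribute input nor obtain output in \main, no output-consistency constraint couples the hybrids, keeping the composition clean and allowing a telescoping bound on the total advantage.

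The step I expect to be the main obstacle is the composition \emph{across} subroutine boundaries, where the secret-shared output of one stage becomes the input of the next---for instance the encrypted histogram $\llbracket D_i\rrbracket^{A}$ feeding $\mathtt{secBin}$, or $\llbracket\overline{\mathbf{A}}\rrbracket^{A}$ feeding $\mathtt{secQR}$. I must verify that the masks and Beaver triples consumed in different stages are drawn \emph{freshly and independently}, so that all opened values remain jointly uniform and the simulated output shares of one stage correctly play the role of the simulated input shares of the next; establishing this independence rigorously, rather than the individual simulators (each of which is routine), is where the real care lies. A secondary point, following \cite{he2017composing,kacsmar2020differentially}, is to confirm that the LDP noise is injected entirely on the user side prior to encryption, so that the differential-privacy mechanism and the cryptographic simulation are cleanly separable and the release of the (DP-protected) binning map to users never enters $\mathcal{CS}_t$'s view.
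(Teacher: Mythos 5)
Your proposal is correct and follows essentially the same route as the paper's proof: a subroutine-by-subroutine decomposition ($\mathtt{secEst}$, $\mathtt{secBin}$, $\mathtt{secEnc}$, $\mathtt{secRedu}$, $\mathtt{secQR}$/$\mathtt{secOptiQR}$) in which each partial view is simulated by uniform ring elements (ASS shares and Beaver-masked openings), the DPF/DCF simulators, and the OT-based bit-times-arithmetic product simulator, stitched together by the standard modular-composition theorem for semi-honest non-colluding parties. Your explicit hybrid telescoping and the attention to fresh, independent masks across stage boundaries make rigorous what the paper delegates to its composition citations, but the underlying argument is the same.
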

		
		\begin{proof}
		It is noted that {\main} invokes the subroutines $\mathtt{secEst}$, $\mathtt{secBin}$, $\mathtt{secEnc}$, $\mathtt{secRedu}$, and $\mathtt{secQR}$/$\mathtt{secOptiQR}$ in order.
			 If the simulator for each subroutine exists, then our complete protocol is secure \cite{canetti2000security,katz2005handling,curran2019procsa}. Since the roles of {\csa} in these subroutines are symmetric, it suffices to show the existence of simulators for $\mathcal{CS}_{1}$.
			
			\begin{enumerate}[-]
				
				\item  \textbf{Simulators for $\mathcal{CS}_{1}$ in $\mathtt{secEnc}$, $\mathtt{secRedu}$ and $ \mathtt{secQR}$.}  It is easy to see that the simulators for $\mathcal{CS}_{1}$ in these subroutines must exist, because they only require basic addition and multiplication over additive secret shares (which can be simulated by random values as per the security of additive secret sharing \cite{mohassel2017secureml}).	
				
				\item \textbf{Simulator for $\mathcal{CS}_{1}$ in $\mathtt{secEst}$.} $\mathcal{CS}_{1}$ only has the public number $S$ of sampled users at the beginning, and later receives DPF keys $\{k_{j,1}\}_{j\in[1,S]}$ from the sampled users. Since $\mathcal{CS}_{1}$ does not receive any other information apart from the DPF keys, the simulator for $\mathtt{secEst}$ can be trivially constructed by invoking the DPF simulator. Therefore, from the security of DPF \cite{boyle2016function}, the simulator for $\mathtt{secEst}$ exists.

				\item \textbf{Simulator for $\mathcal{CS}_{1}$ in $\mathtt{secBin}$.} In the $\mathtt{secBin}$ subroutine (Algorithm \ref{algo:4}), the steps that require {\csa} to interact are in lines \ref{alg:inter} and \ref{alg:reset}. Since each of them is invoked in order as per the processing pipeline and their inputs are secret shares, if the simulator for each of them exists, the simulator exists for $\mathcal{CS}_{1}$ in $\mathtt{secBin}$. 
				We first analyze the simulated view about line \ref{alg:inter}. As aforementioned, we provide two approaches (i.e., FSS-based and ASS-based) to perform the secure comparison operation in line \ref{alg:inter}. For the FSS-based approach, at the beginning of its each execution, $\mathcal{CS}_{1}$ has a DCF key $k_{1}$, a random value $r_{1}$,  a share $\langle accu\rangle_{1}$, and later receives a masked share $\langle accu\rangle_{2}+r_{2}$ followed by outputting $\mathsf{Eval}(k_{1},accu+r^{in})$. Since these information $\mathcal{CS}_{1}$ receives is all legitimate in FSS-based DCF, the simulator for $\mathtt{secBin}$ can be trivially constructed by invoking the simulator of DCF. From the security of FSS-based DCF \cite{boyle2021function}, the simulator for $\mathtt{secBin}$ exists. We then analyze the simulator for ASS-based approach. It is noted that the ASS-based approach is built on realization of a PPA in the secret sharing domain, which consists of basic secret-shared $\oplus$ and $\otimes$, and thus the simulator for it exists. Similarly, line \ref{alg:reset} also consists of basic multiplication, and thus the simulator for it exists \cite{mohassel2018aby3}.

				\item \textbf{Simulator for $\mathcal{CS}_{1}$ in $\mathtt{secOptiQR}$.} Our optimized secure QR algorithm builds on the technique from \cite{kelkar2022secure}. Specifically, when computing $k$ multiplications of the form $\llbracket\mathbf{U}\rrbracket^{A}\llbracket\mathbf{V}_j\rrbracket^{A},j\in[1,k]$, instead of using $k$ independent beaver triples $\llbracket\mathbf{Z}_j\rrbracket^{A}=\llbracket\mathbf{X}_j\rrbracket^{A}\llbracket\mathbf{Y}_j\rrbracket^{A}$, we can use $k$ correlated beaver triples $\llbracket\mathbf{Z}_j\rrbracket^{A}=\llbracket\mathbf{X}\rrbracket^{A}\llbracket\mathbf{Y}_j\rrbracket^{A}$, where a single matrix sharing $\llbracket\mathbf{X}\rrbracket^{A}$ is used to mask the constant multiplicand $\llbracket\mathbf{U}\rrbracket^{A}$. Next, we analyze the existence of the simulator for the technique.
				We first show that the view of {\cs} on the correlated Beaver triples can be simulated. It is noted that the shares of each element in the correlated beaver triples $\llbracket\mathbf{Z}_j\rrbracket^{A}=\llbracket\mathbf{X}\rrbracket^{A}\llbracket\mathbf{Y}_j\rrbracket^{A}$ is generated based on the standard secret sharing. This means that the distribution of the Beaver triple shares received by {\cs} is identically distributed with random values in the view of {\cs}. We then show that the view of {\cs} on the masked constant multiplicand can be simulated. In this phase, {\cs} first receives $\langle\mathbf{U}\rangle_{2}-\langle\mathbf{X}\rangle_{2}$ and $\langle\mathbf{V}_{j}\rangle_{2}-\langle\mathbf{Y}_{j}\rangle_{2}$ from {\css}, and then outputs $\langle\mathbf{Z}_{j}\rangle_{1},j\in[1,k]$. It is noted that $\langle\mathbf{V}_{j}\rangle_{2}-\langle\mathbf{Y}_{j}\rangle_{2}$ is a function of the correlated Beaver triple shares, $\mathbf{U}-\mathbf{X}$, and $\langle\mathbf{Z}_{j}\rangle_{1}$. Therefore, these matrices are uniformly random and independent from one another, and their joint distribution in both real view and simulated view is identical \cite{kelkar2022secure}. The simulator for $\mathtt{secOptiQR}$ exists.
				
			\end{enumerate}
			% The proof of Theorem \ref{theo2} is completed.
		\end{proof}

		\section{Experiments}
		\label{sec:experiments}

		\subsection{Setup} 
		We implement a prototype system of {\main}~in Python. Our prototype implementation comprises $\sim$2000 lines of code (excluding the code of libraries). We also implement a test module with another 500 lines of code. All experiments are performed on a workstation with 16 Intel I7-10700K cores, 64GB RAM, 1TB SSD external storage running Ubuntu 20.04.2 LTS. The network bandwidth and latency are controlled by the local socket protocol.

		\begin{table}
			\normalsize 
			\centering
			\caption{Dataset Statistics}
			\label{Tab:dataset}
			\begin{tabular*}{\hsize}{@{}@{\extracolsep{\fill}}cccc@{}}
				\toprule
				Dataset &Type&Nodes&Edges\\\midrule
				Facebook&Undirected&3959&170,174\\ \hline
				Twitter&Directed&76244&1,768,149\\ \hline
				Google+& Directed&102100&13,673,453\\ 
				\bottomrule
			\end{tabular*}
			\vspace{-5pt}
		\end{table}
		
		\begin{figure}[!t]
			\centering
			\includegraphics[width=0.45\linewidth]{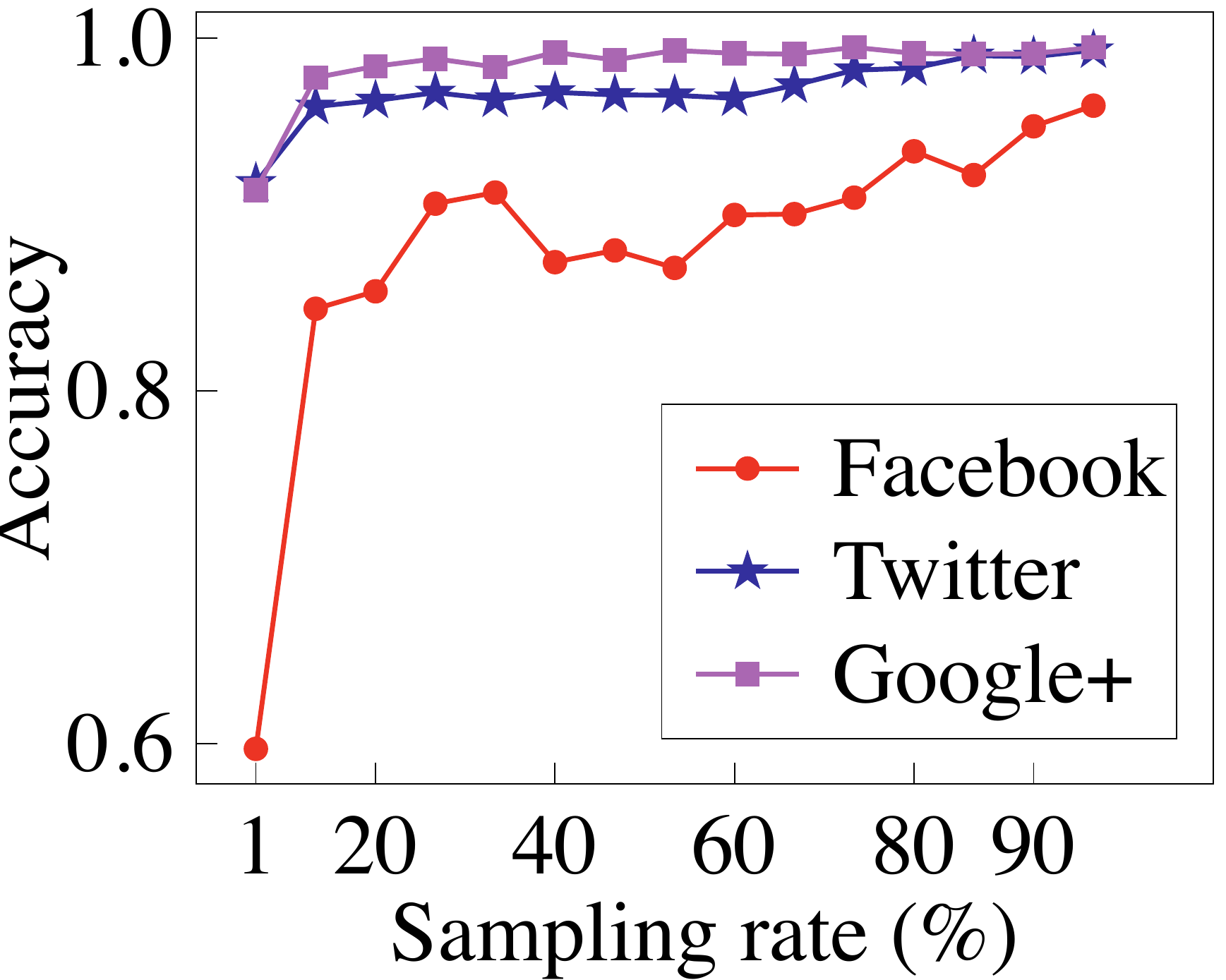}
			\caption{Accuracy of the binning map under different sampling rates.}
		\label{fig:groupAccuracy}
		\vspace{-12pt}
	\end{figure}

	\noindent\textbf{Social graph datasets.} We use three social graph datasets: Facebook\footnote{\url{http://snap.stanford.edu/data/ego-Facebook.html}.}, Twitter\footnote{\url{http://snap.stanford.edu/data/ego-Twitter.html}.}, and Google+\footnote{\url{http://snap.stanford.edu/data/ego-Gplus.html}.}. Their statistics are summarized in Table \ref{Tab:dataset}. 
	For each node in a tested graph dataset, a vector is extracted representing the node's local view on the whole graph, based on its social connections with other nodes.
	Recall that for privacy protection, {\main} aims to conceal the values, positions, and number of non-zero elements in each node's local view vector, as such information captures a node's private social interactions.
	In particular, in the context of these social graph datasets, the value information reflects two users are connected (e.g., they are friends), the position information reflects with which other users a user is connected, and the number information may reflect how many friends/followers a user has (indicating the user's social skills).

	\noindent\textbf{Protocol instantiation.} Eigendecomposition usually works on real numbers, while cryptographic computation needs to work with integers. Following priors works on secure computation \cite{mohassel2017secureml,mohassel2018aby3}, we use a common fixed-point encoding of real numbers. Specifically, for a private real number $x$, we consider a fixed-point encoding with $t$ bits of precision: $\lfloor x\cdot 2^{t} \rceil$.
	%
	% Note that when multiplying two fixed-point encoding numbers, since both of them are multiplied by $2^{t}$, the two parties additionally need to rescale the product scaled by $2^{2t}$, where we use the truncation technique from \cite{mohassel2017secureml}.
	In our experiments, we use the ring $\mathbb{Z}_{2^{32}}$ in the phase of secure degree histogram estimation and secure binning map generation, and the ring $\mathbb{Z}_{2^{64}}$ with $t=32$ bits of precision in the phase of secure {\eig}. The number of iterations of Eq. \ref{eq:square_root} is set to 25. 
	For DPF and DCF, we set the security parameter $\lambda$ to $128$. The size of the output matrix of Arnoldi method and Lanczos method is set to 15*15, and the top-3 {\eigs} are used to verify the accuracy of {\eig}, because only top-2 {\eigs} are used in most community detection tasks \cite{newman2006finding, wang2011identifying, newman2013spectral}.
	
	\noindent\textbf{Ground truth.} We use the standard Python library $\mathsf{scipy.sparse.coo\_matrix}$ to store the large-scale social graphs in sparse encoding (plaintext and ciphertext), and then use its standard {\eig} library $\mathsf{scipy.sparse.linalg.eigsh}$ and $\mathsf{scipy.sparse.linalg.eigs}$ to calculate the {\eigs} on symmetric matrix (i.e., Facebook) and non-symmetric matrices (i.e., Twitter and Google+), respectively. Subsequently, we will use the outputs of the standard library as the ground truth.
	
	\subsection{Evaluation on Accuracy}

	\noindent\textbf{Secure binning map generation.} To obtain an appropriate sampling rate when securely estimating the degree histogram, we compare the accuracy of the binning map with varying sampling rates. Fig. \ref{fig:groupAccuracy} summarizes the experiment results, where we set the number of bins to 10 and use the results of sampling rate $=100\%$ as the ground truth. It is observed that when the sampling rate is set to $10\%$, we can obtain a satisfactory accuracy (about $85\%-98\%$).
	
	\begin{figure}[t!]
		\begin{minipage}[t]{0.322\linewidth}
			\centering{\includegraphics[width=\linewidth]{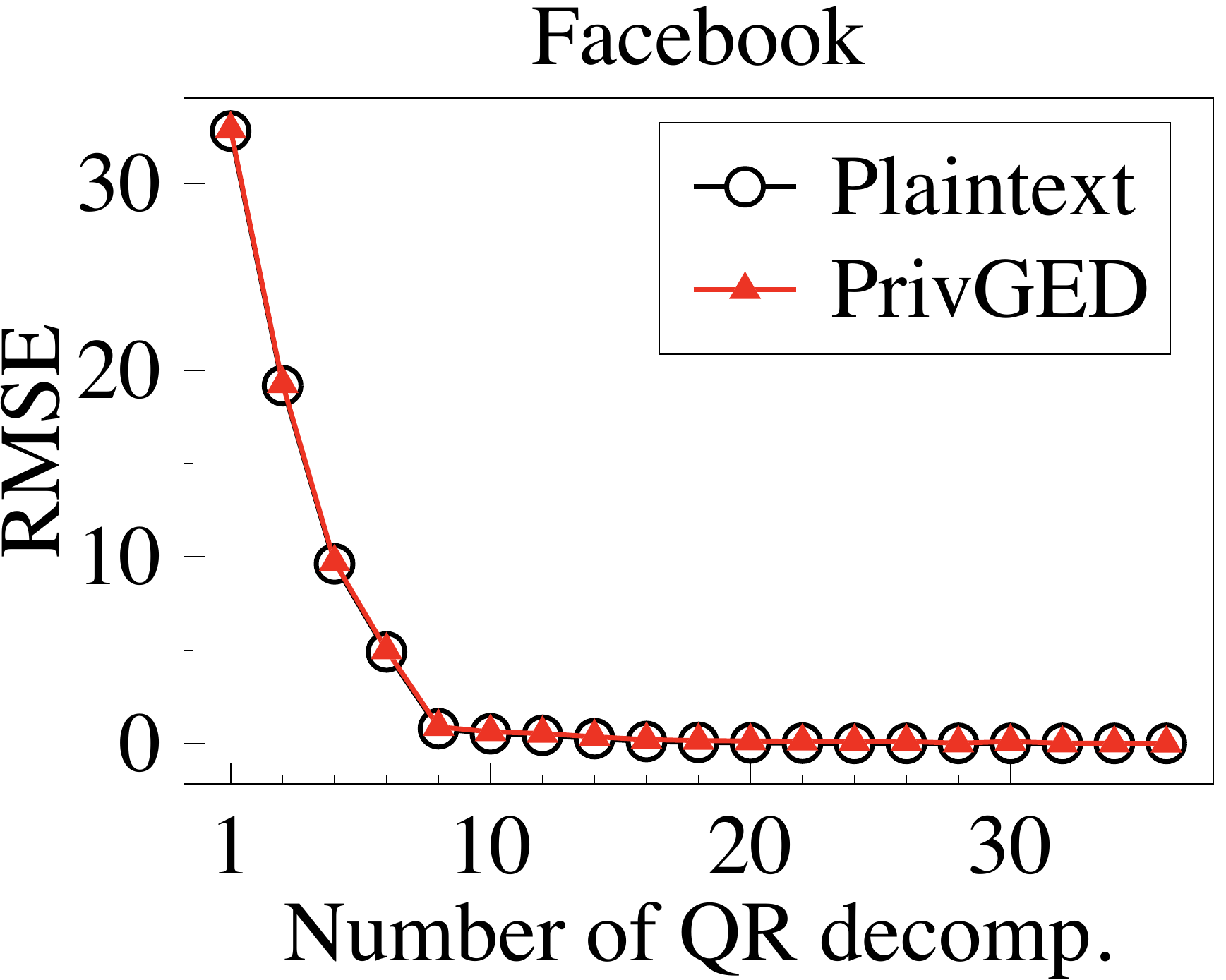}}
		\end{minipage}
		\begin{minipage}[t]{0.322\linewidth}
			\centering{\includegraphics[width=\linewidth]{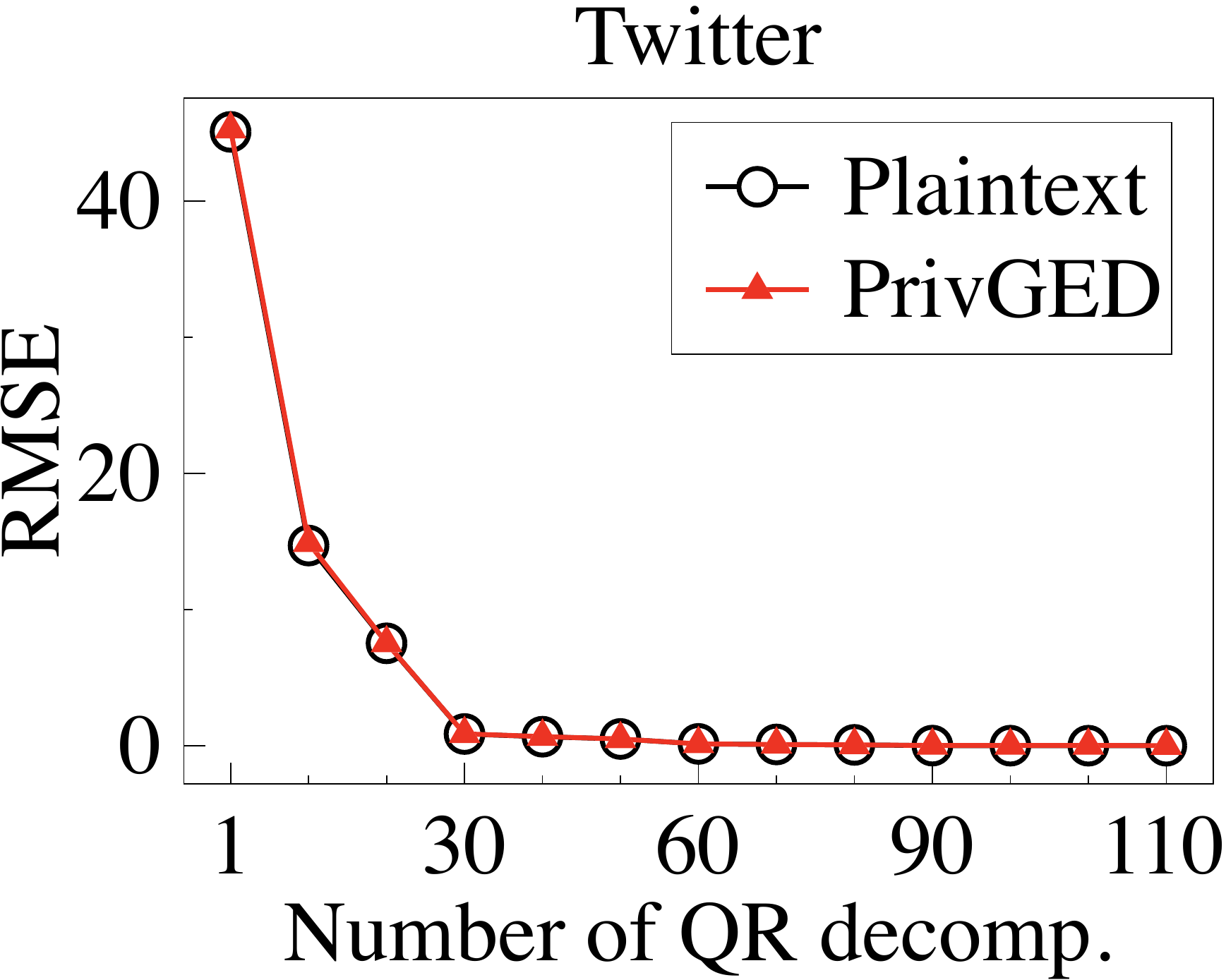}}
		\end{minipage}
		\begin{minipage}[t]{0.322\linewidth}
			\centering{\includegraphics[width=\linewidth]{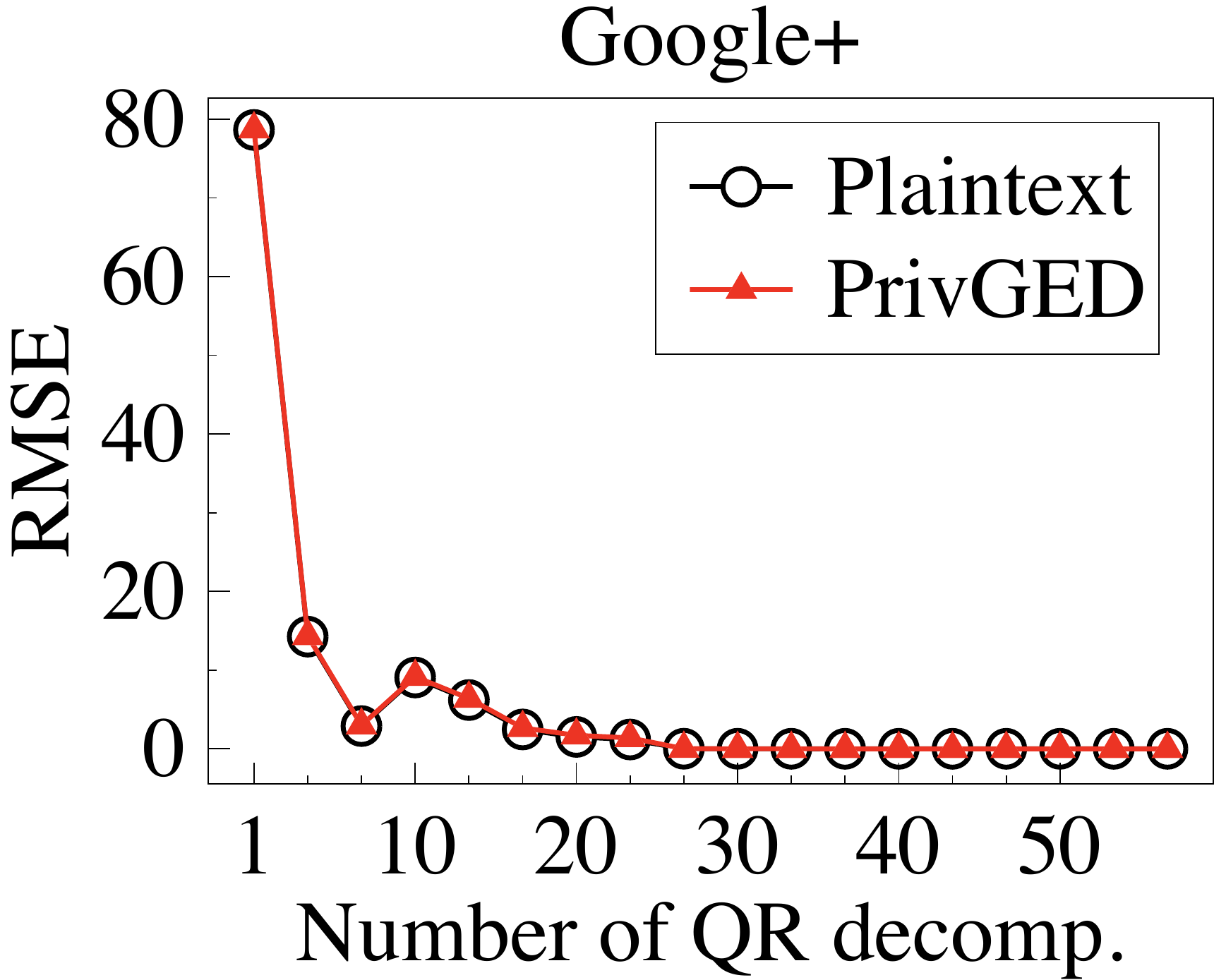}}
		\end{minipage}
		\caption{Results of RMSE between the top-3 eigenvalues in {\main} and plaintext, with varying number of QR decomposition.}
		\label{fig:RMSEValue}

	\end{figure}

	\begin{figure}[t!]
		\begin{minipage}[t]{0.322\linewidth}
			\centering{\includegraphics[width=\linewidth]{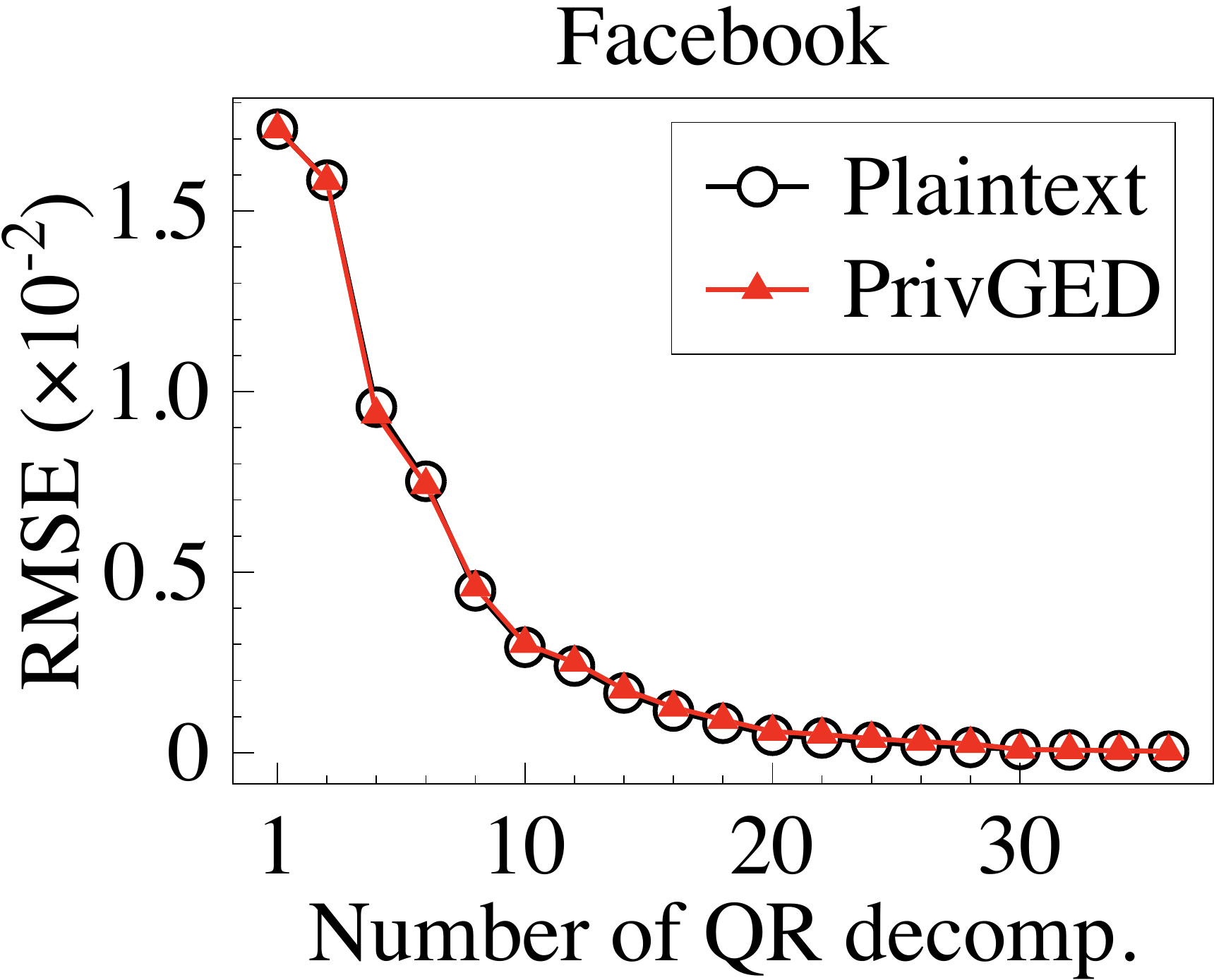}}
		\end{minipage}
		\begin{minipage}[t]{0.322\linewidth}
			\centering{\includegraphics[width=\linewidth]{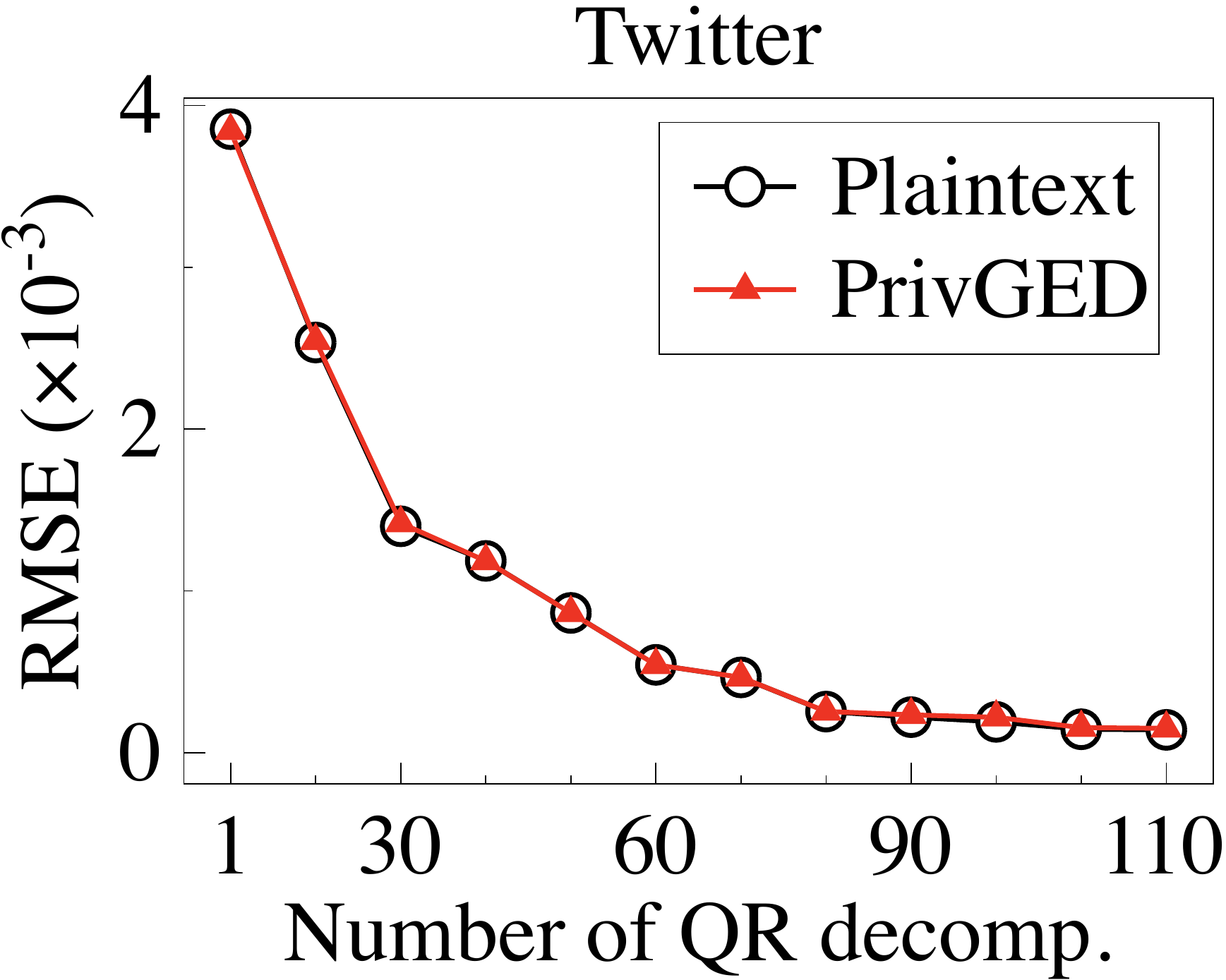}}
		\end{minipage}
		\begin{minipage}[t]{0.322\linewidth}
			\centering{\includegraphics[width=\linewidth]{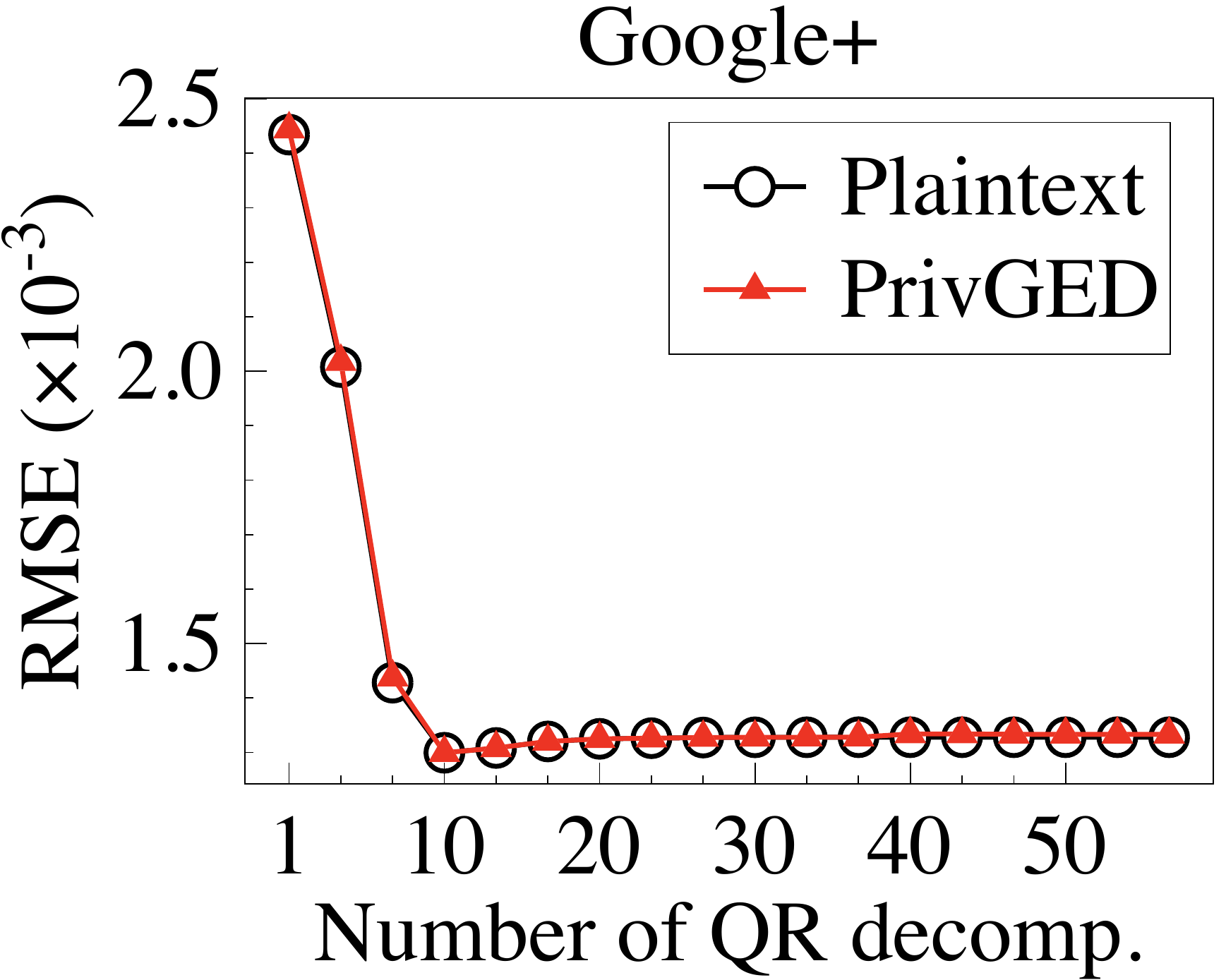}}
		\end{minipage}
		\caption{Results of RMSE between the top-3 eigenvectors in {\main} and plaintext, with varying number of QR decomposition.}
		\label{fig:RMSEVector}

	\end{figure}
	
	\noindent\textbf{Secure {\eig}.} We first implement the plaintext Arnoldi method, Lanczos method, and QR algorithm to compute the plaintext {\eigs}, and then calculate the RMSE in the top-3 {\eigs} between the plaintext and the ground truth (output by the standard Python library). We then execute {\main} to compute the ciphertext {\eigs}, and calculate the RMSE in the top-3 {\eigs} between the \main's result and the ground truth. We summarize the results in Fig. \ref{fig:RMSEValue} and Fig. \ref{fig:RMSEVector}.
	It is observed that the RMSE of {\main} is slightly higher than that of plaintext (about $0.1\%-5\%$), but they exhibit consistent behavior.

	\subsection{Evaluation on Performance}
	\label{exp:cost}
	\begin{figure}[!t]
		\centering
		\begin{minipage}[t]{0.4\linewidth}
			\centering{\includegraphics[width=\linewidth]{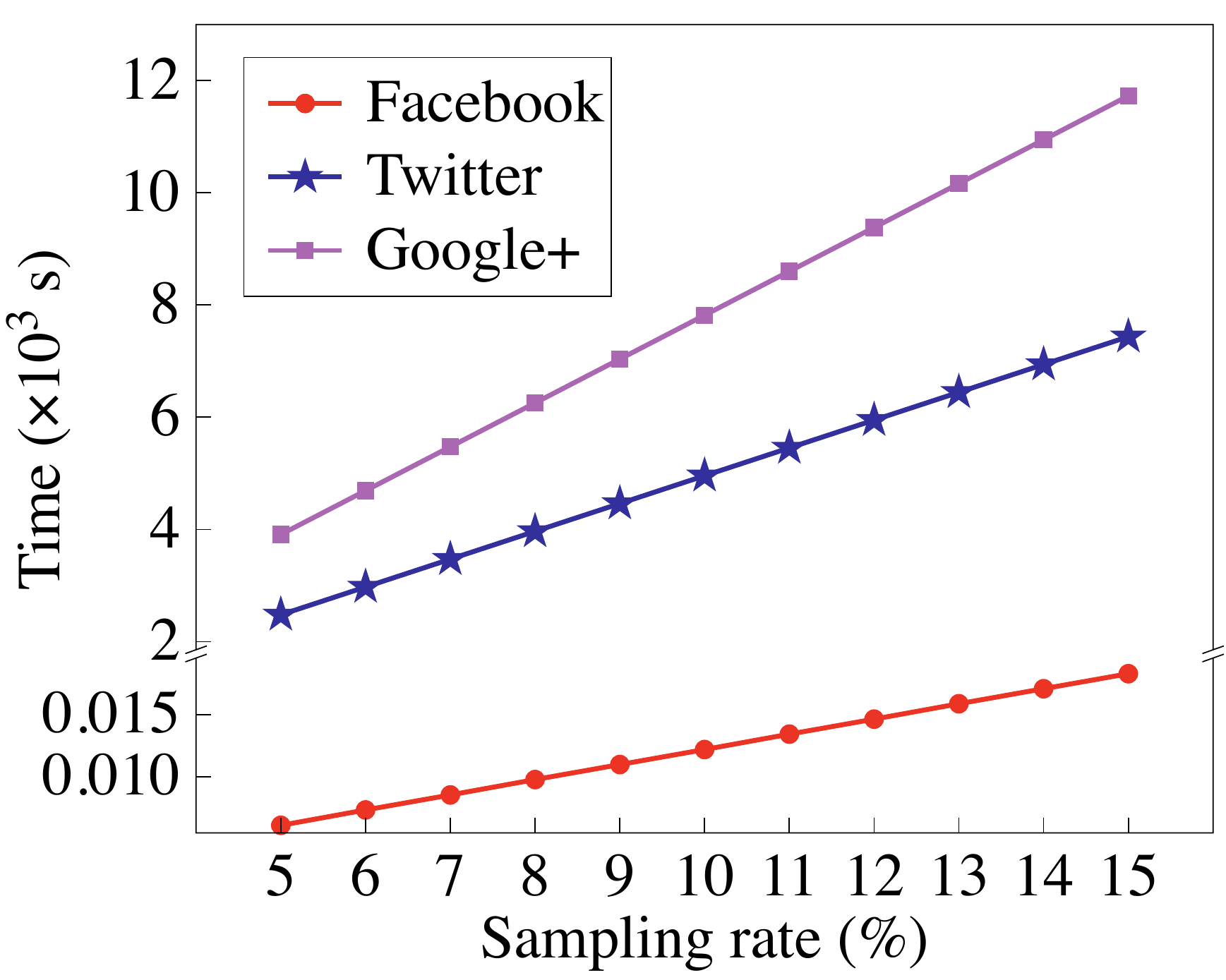}}
			\caption{Running time of secure degree histogram estimation at cloud.}%, with varying sampling rates.}
		\label{fig:histogramTime}
	\end{minipage}
	\begin{minipage}[t]{0.4\linewidth}
		\centering{\includegraphics[width=\linewidth]{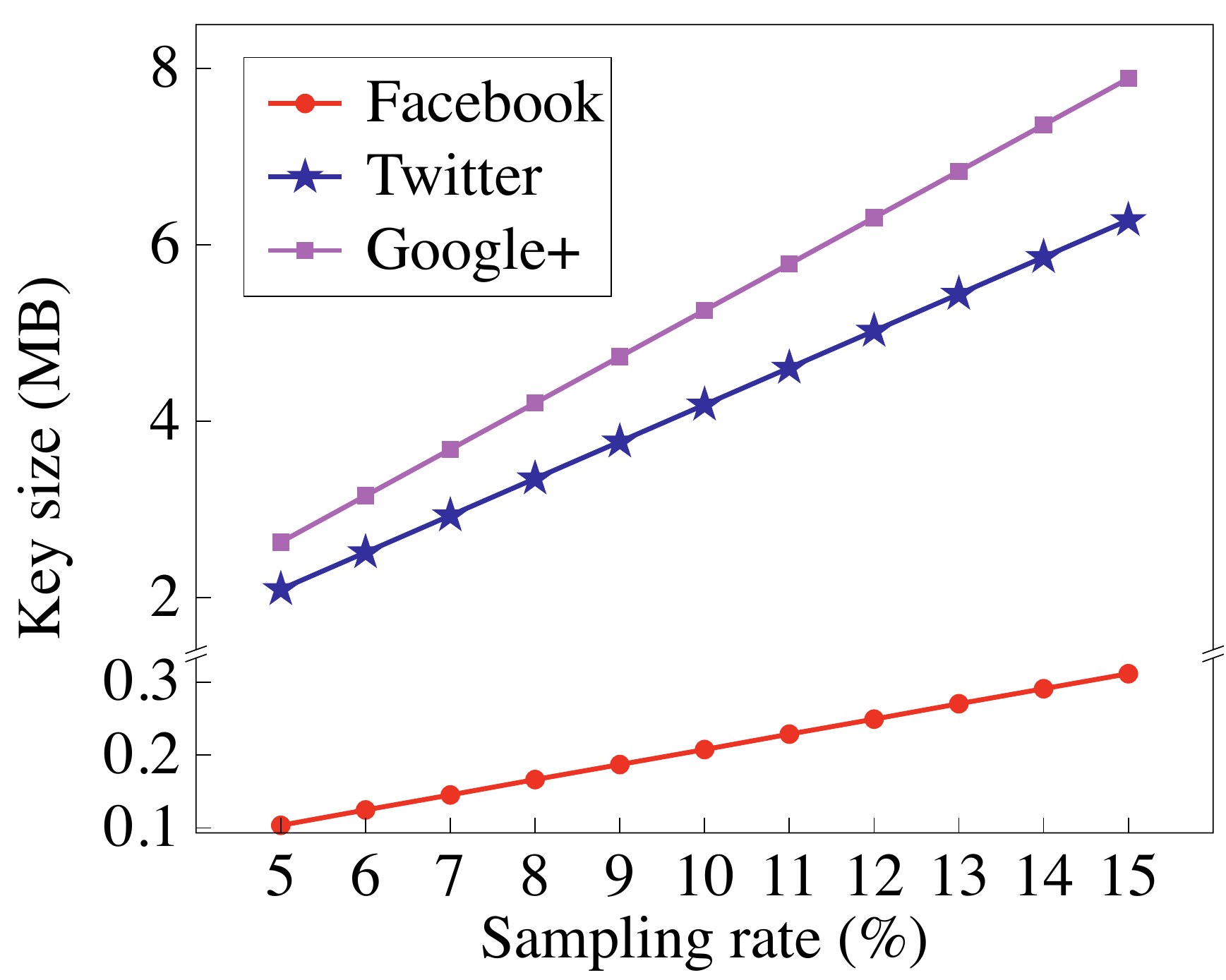}}
		\caption{DPF key size in secure degree histogram estimation.}%, with varying sampling rates.}
	\label{fig:histogramKeySize}
\end{minipage}
\vspace{-10pt}
\end{figure}

\noindent\textbf{Secure degree histogram estimation.} We first report {\main}’s cost in securely estimating the degree histogram. Fig. \ref{fig:histogramTime} illustrates the running time and Fig. \ref{fig:histogramKeySize} shows the key size of DPF involved in the process, where we set the number of bins to 10 and the maximum degree to $N/20$. It is noted that Algorithm \ref{algo:3} does not require communication between {\csa}. Over the three social graphs, the total key size of all sampled users only ranges from 0.1 MB to 7.8 MB.

% which is attributed to the state-of-the-art FSS-based DPF \cite{boyle2016function}.

\begin{figure}[!t]
\begin{minipage}[t]{0.32\linewidth}
	\centering{\includegraphics[width=\linewidth]{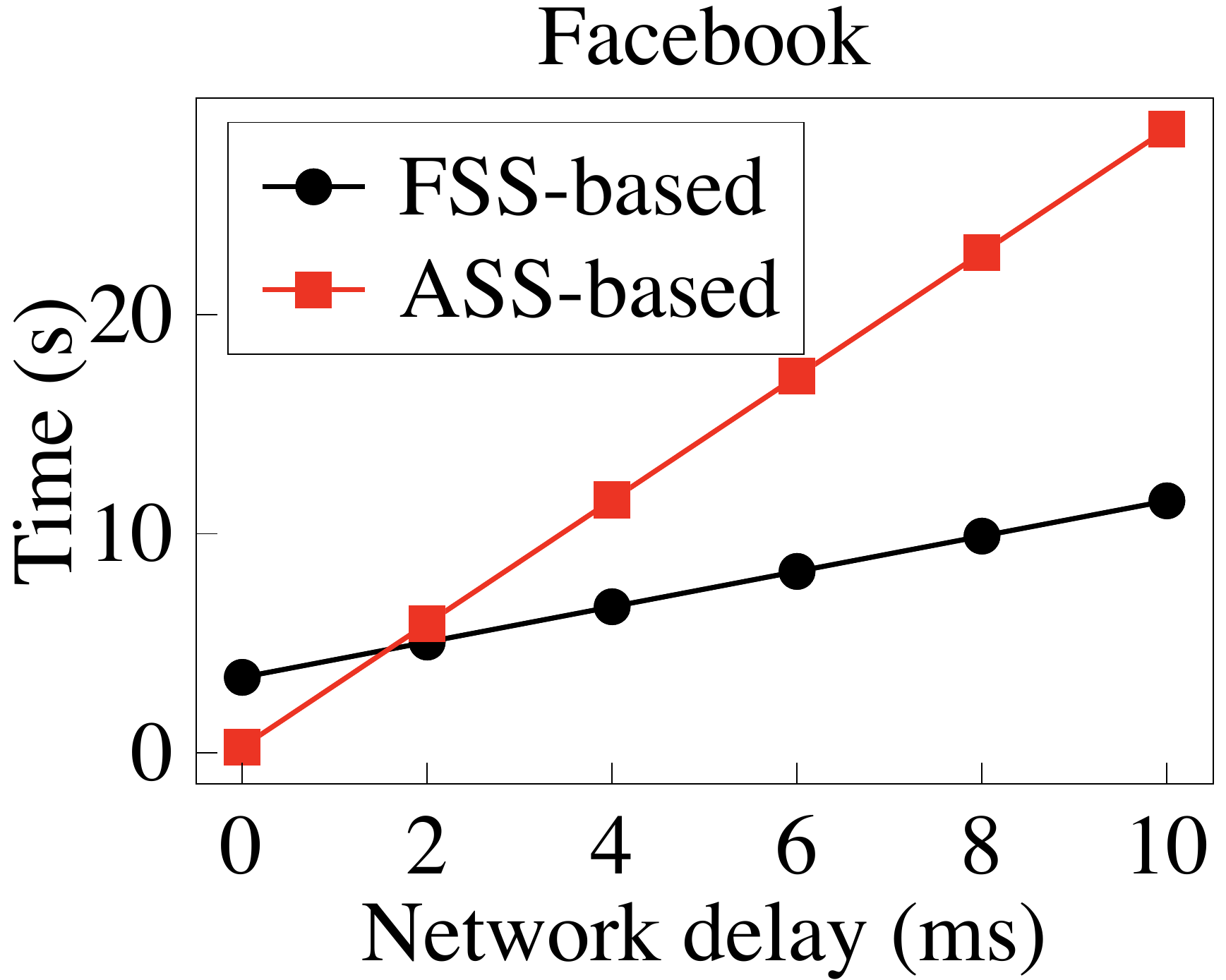}}
\end{minipage}
\begin{minipage}[t]{0.32\linewidth}
	\centering{\includegraphics[width=\linewidth]{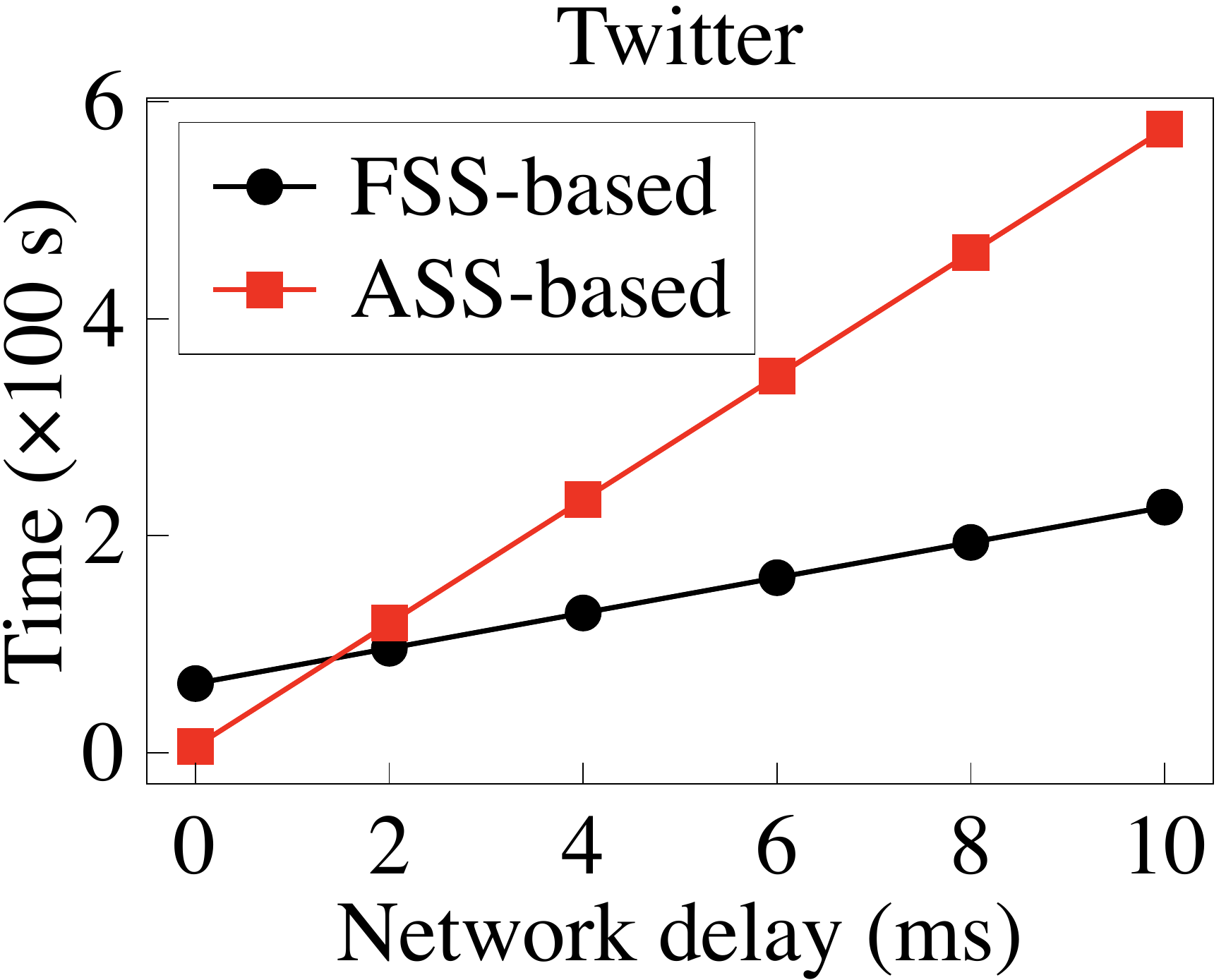}}
\end{minipage}
\begin{minipage}[t]{0.32\linewidth}
	\centering{\includegraphics[width=\linewidth]{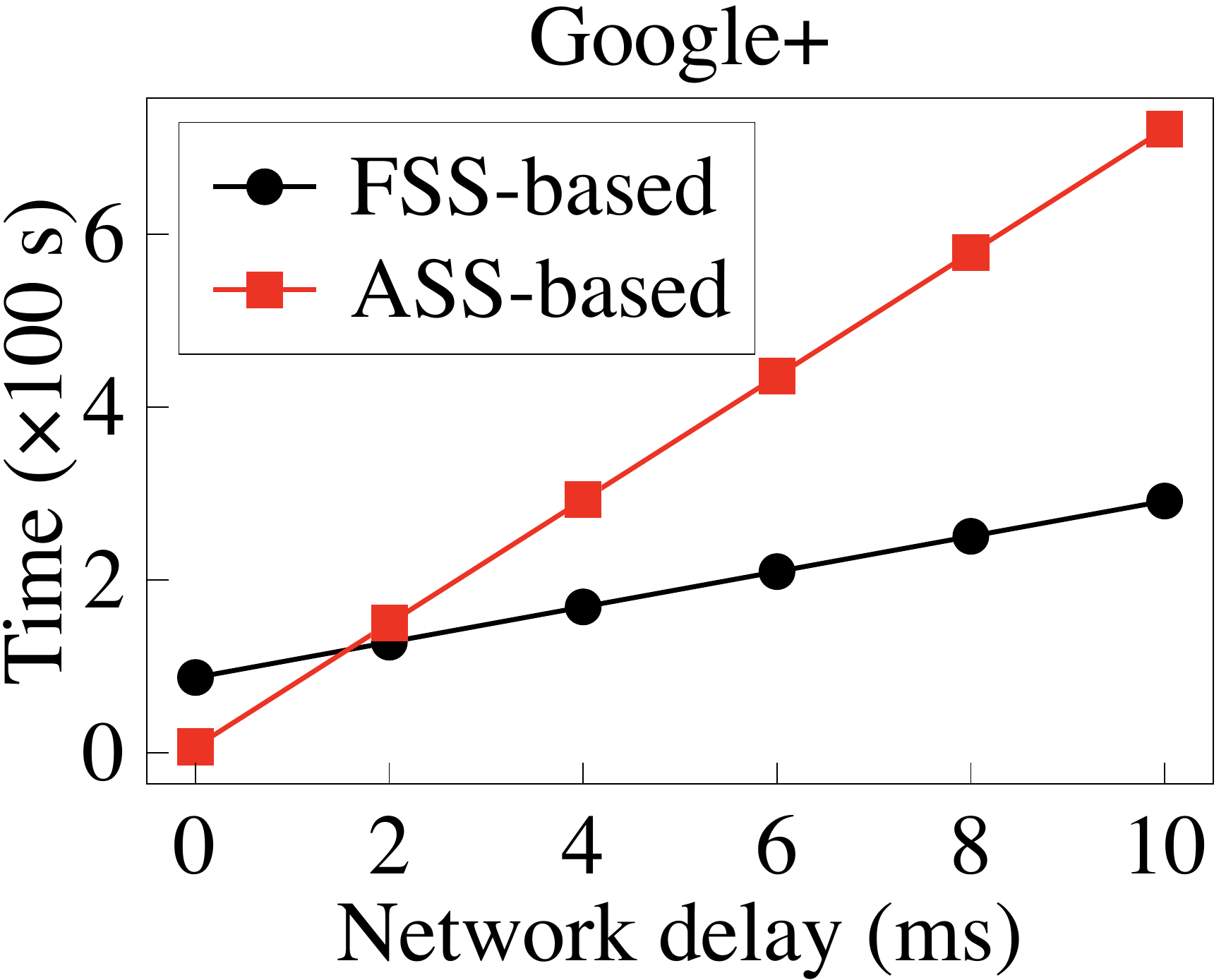}}
\end{minipage}
\caption{Running time of secure binning map generation with varying network delays, under different secure comparison approaches.}
\label{fig:binningTime}
\vspace{-10pt}
\end{figure}

\begin{figure}[!t]
\centering
\begin{minipage}[t]{0.45\linewidth}
\centering{\includegraphics[width=\linewidth]{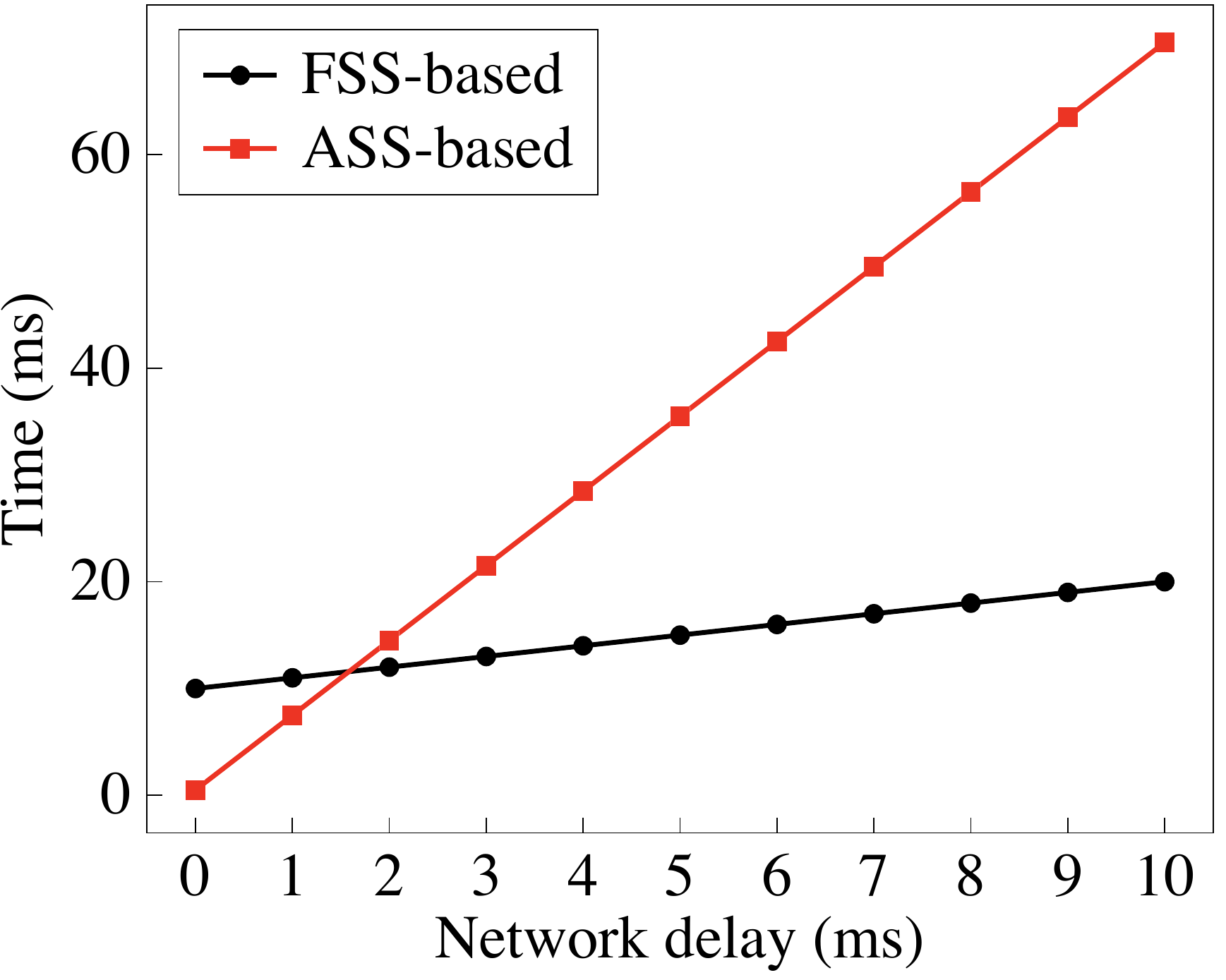}}
\caption{Running time comparison of ASS- and FSS-based secure comparison approaches.}
\label{fig:compTime}
\end{minipage}
\begin{minipage}[t]{0.45\linewidth}
\centering{\includegraphics[width=\linewidth]{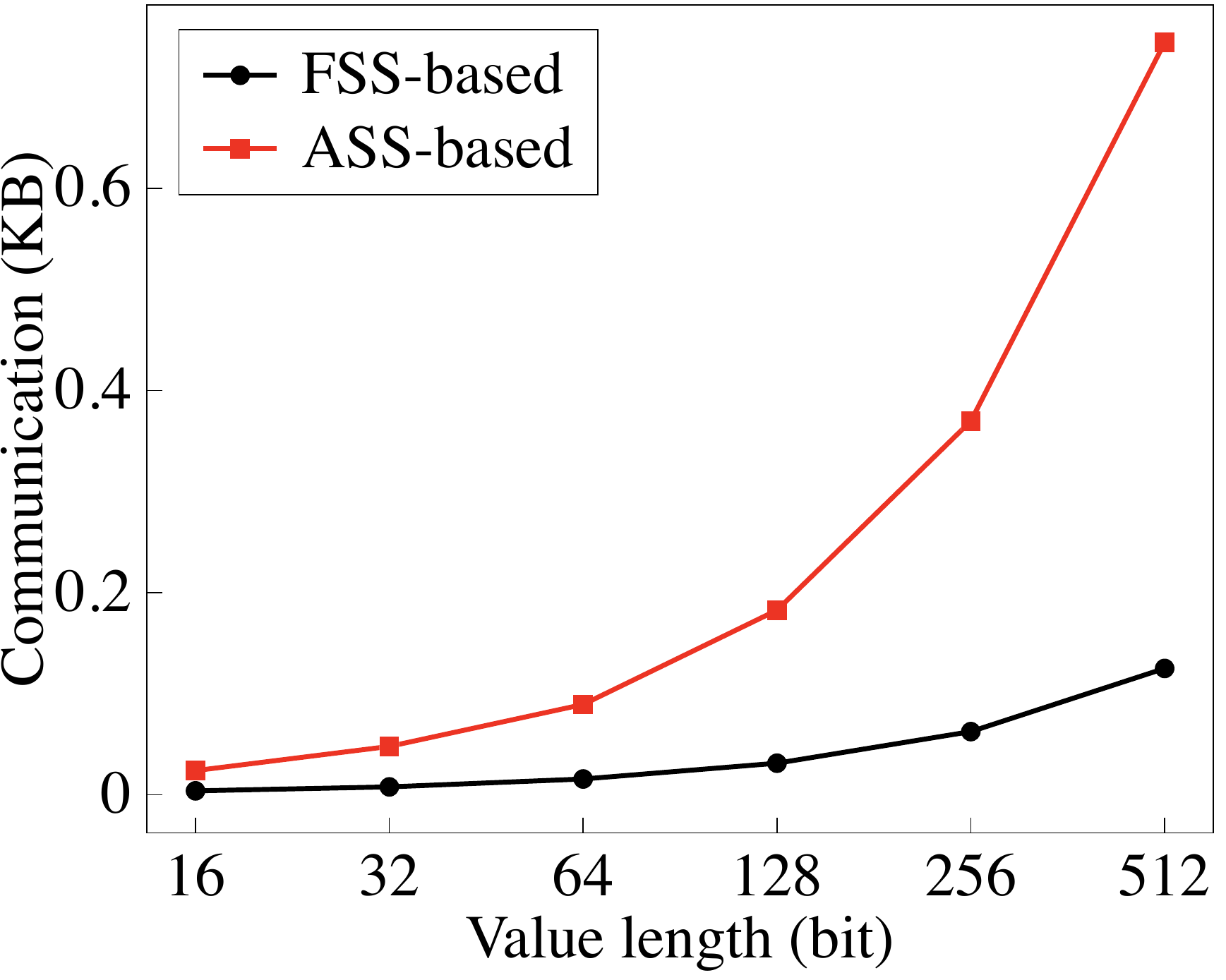}}
\caption{Bandwidth comparison of ASS- and FSS-based secure comparison approaches.}
\label{fig:compComm}
\end{minipage}
\end{figure}

\noindent\textbf{Secure binning map generation.} We then report {\main}’s performance in securely generating the binning map. Recall that in Section \ref{sec:1-2}, we provide an FSS-based approach and ASS-based approach to support the secure comparison operation under different network environments. Therefore, we first evaluate the efficiency under different network delays, and report the results in Fig. \ref{fig:binningTime}. Additionally, we report the benchmark cost of the two secure comparison approaches under different network delays and bit lengths, in Fig. \ref{fig:compTime} and Fig. \ref{fig:compComm}. It is observed that when the network delay is low (about 0 ms-2 ms), the ASS-based approach faster than the FSS-based approach.
As the network delays grows (i.e., $>$2 ms), the FSS-based approach becomes faster.

\begin{figure}[!t]
\centering
\begin{minipage}[t]{0.45\linewidth}
\centering{\includegraphics[width=\linewidth]{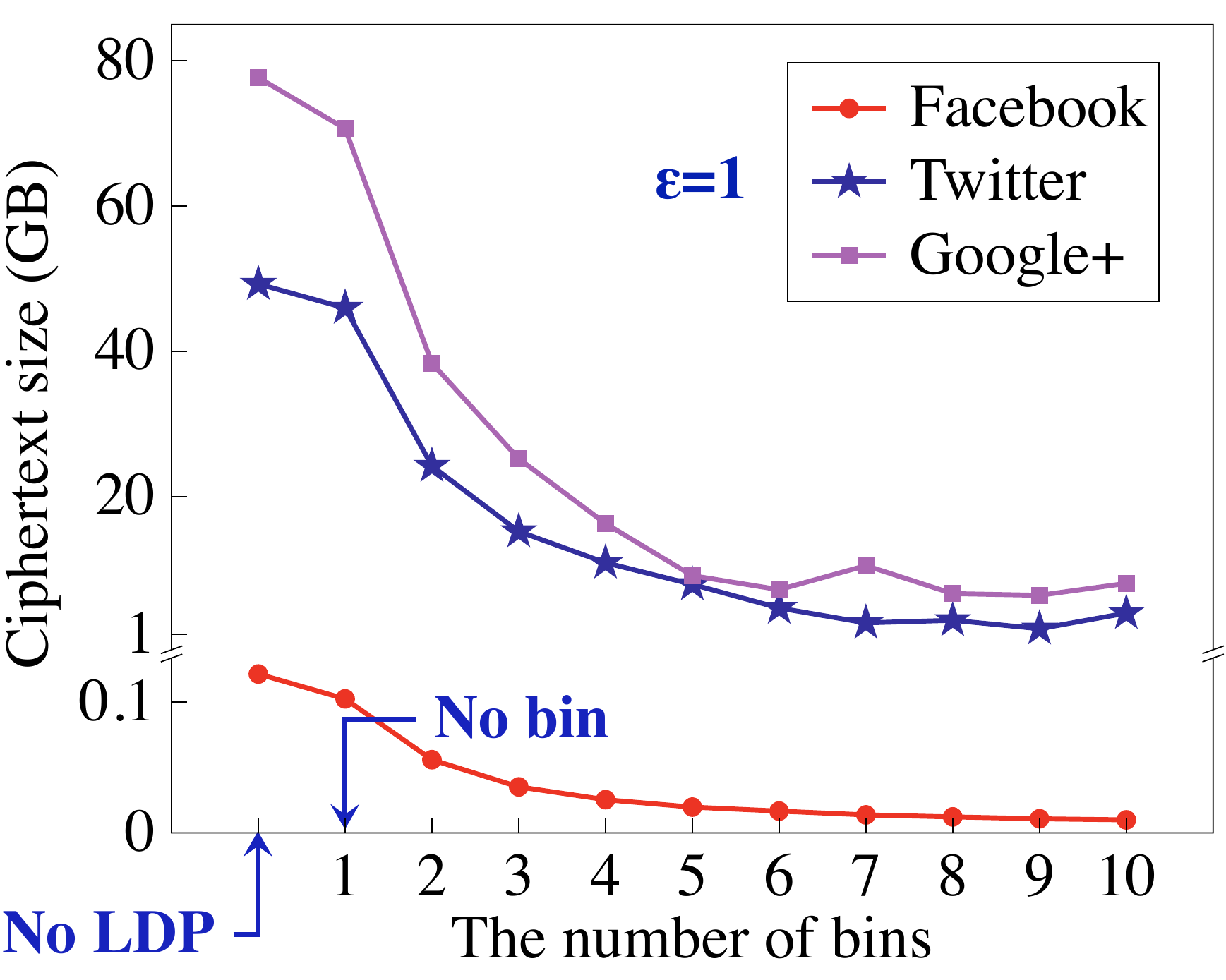}}
\end{minipage}
\begin{minipage}[t]{0.45\linewidth}
\centering{\includegraphics[width=\linewidth]{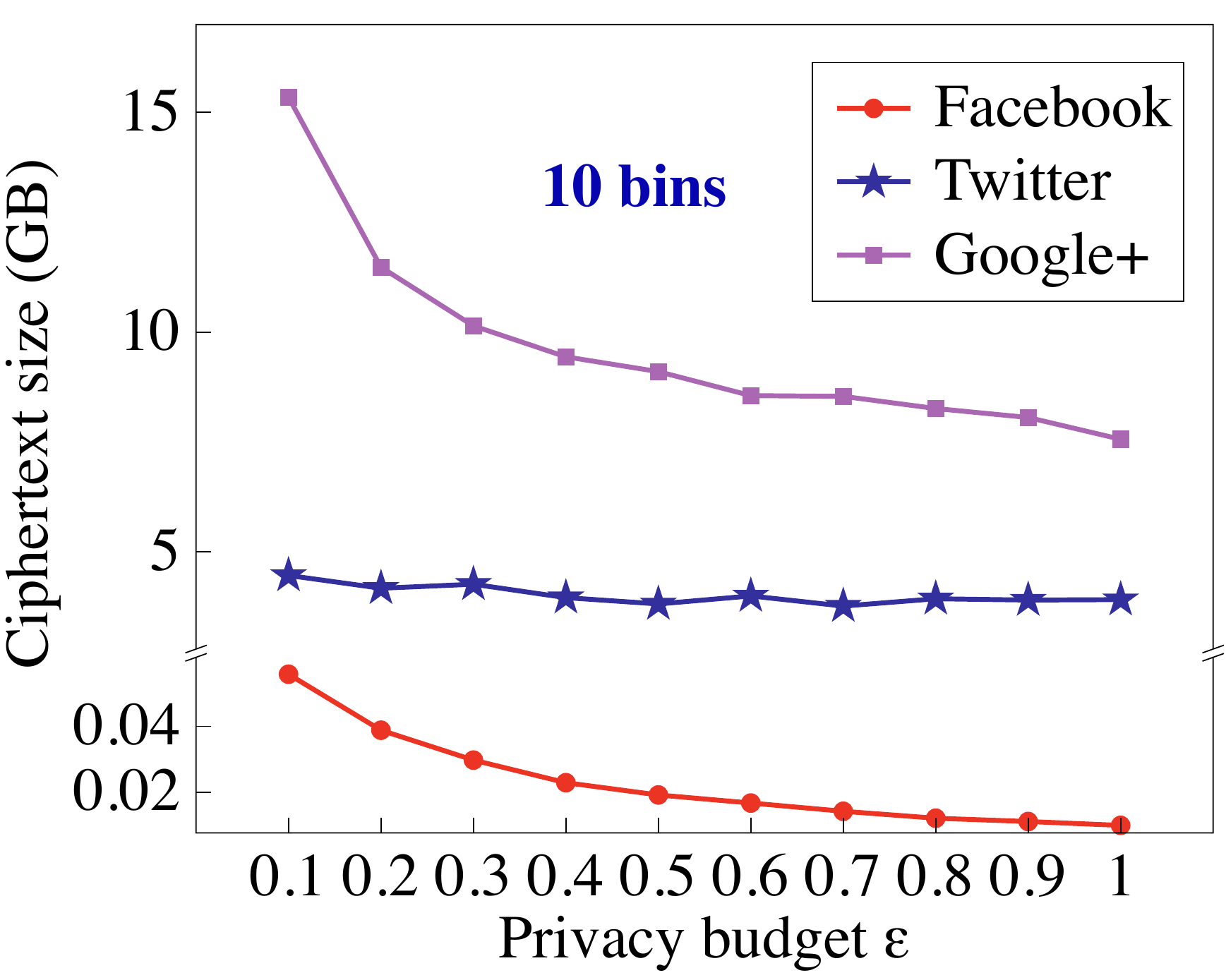}}
\end{minipage}
\caption{Total size of collected encrypted local views, for varying number of bins (left) and varying privacy budget $\epsilon$ (right) over different datasets (with $\delta=10^{-6}$). ``No bin" refers to the use of standard LDP.}
\label{fig:ciphertextSize}
\vspace{-10pt}
\end{figure}

\noindent\textbf{Local view data encryption.} We now report the size of the encrypted social graph to show the storage saving of {\main}, which is shown in Fig. \ref{fig:ciphertextSize}. 
It is observed that compared to direct encryption of all matrix elements by ASS, our protocol achieves a considerable storage saving (up to $\mathbf{90\%}$ under $\epsilon=1,\delta=10^{-6}$, and 10 bins). In addition, compared to way of standard LDP-based encryption (i.e., 1 bin), our encryption protocol (with 10 bins) still achieves a considerable storage saving (up to $\mathbf{70\%}$).

\begin{figure}[!t]
\centering
\begin{minipage}[t]{0.45\linewidth}
\centering{\includegraphics[width=\linewidth]{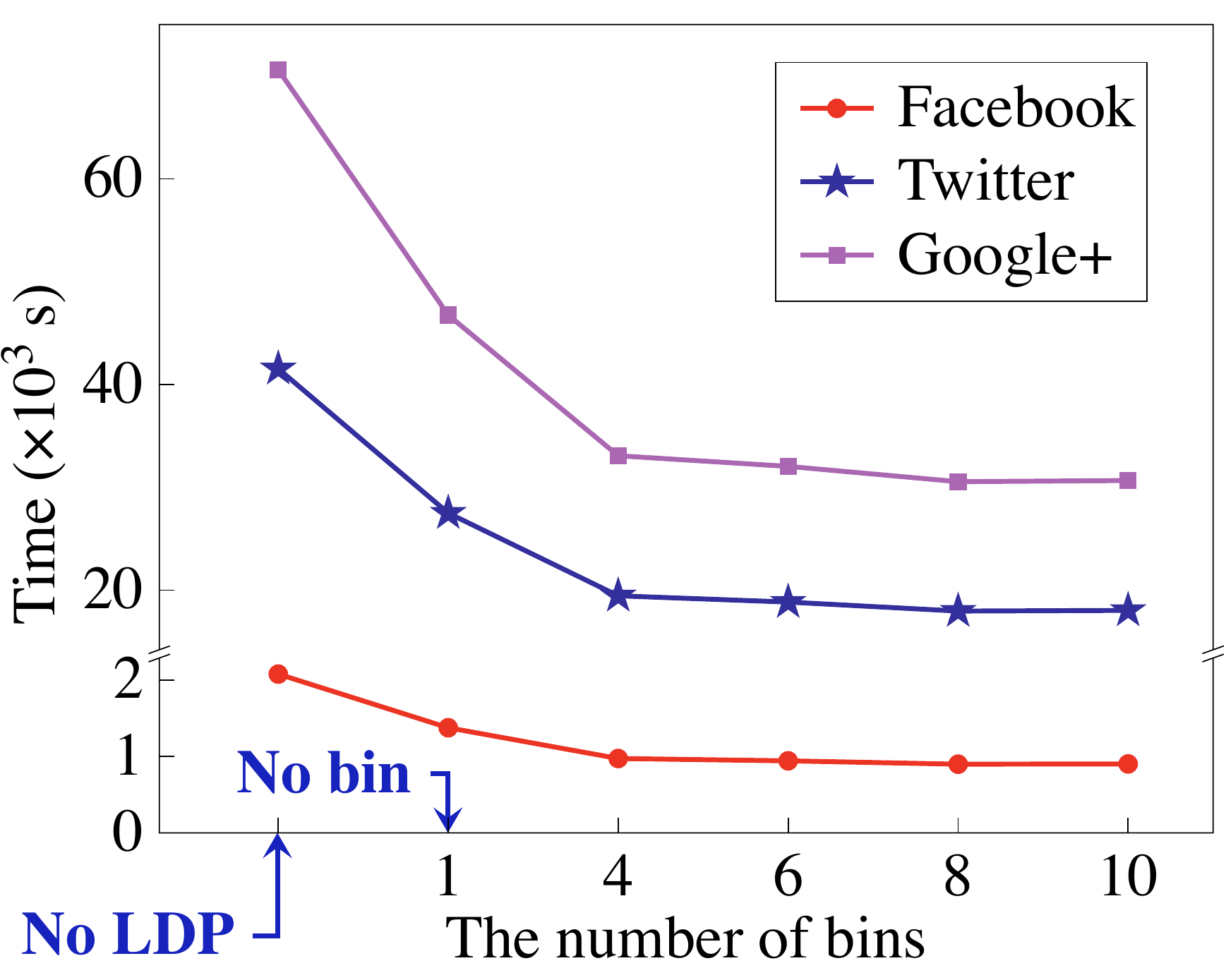}}
\end{minipage}
\begin{minipage}[t]{0.45\linewidth}
\centering{\includegraphics[width=\linewidth]{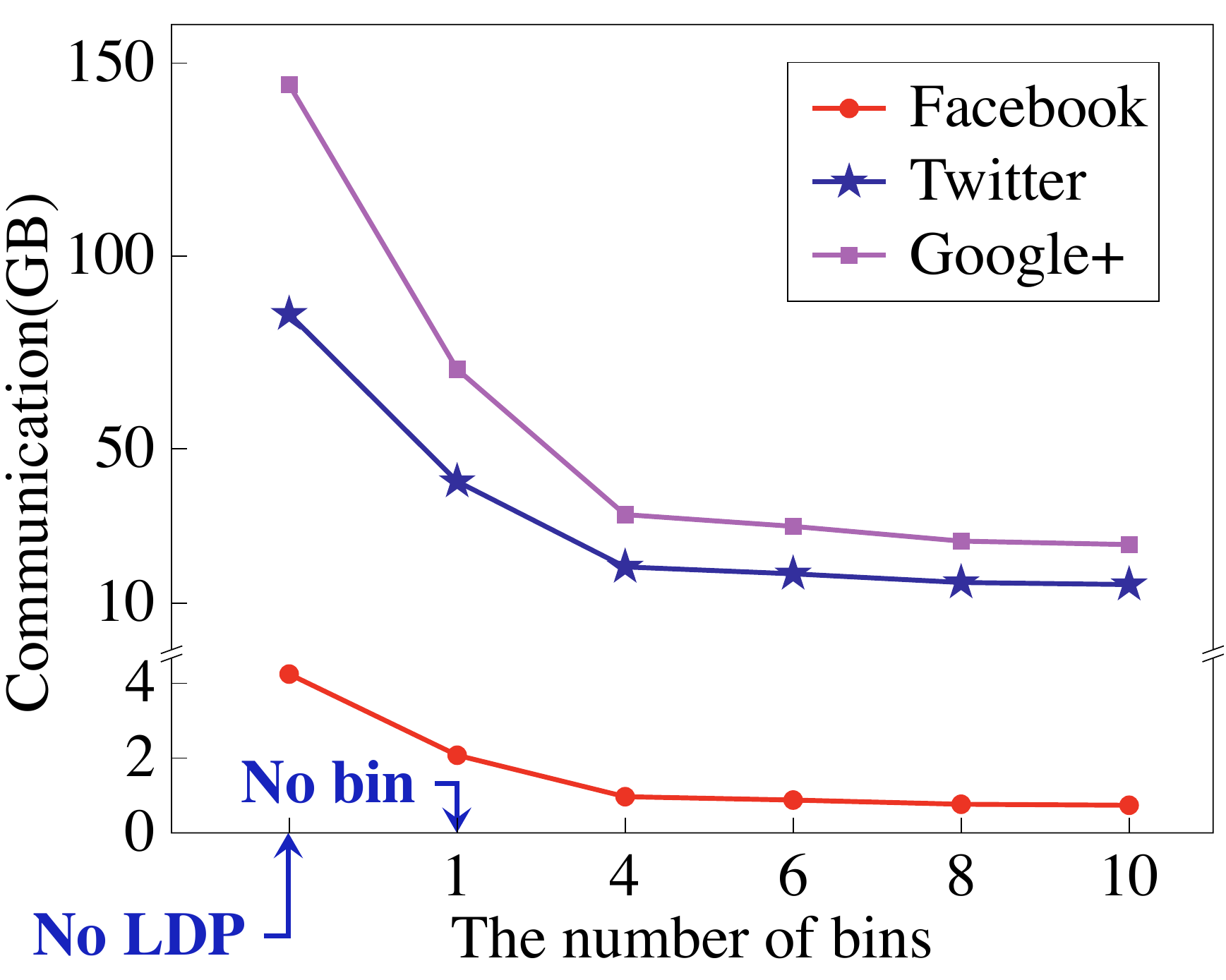}}
\end{minipage}
\caption{Running time and online communication of secure {\eig} at the cloud, for varying number of bins ($\epsilon=1,\delta=10^{-6}$). ``No bin" refers to the use of standard LDP.}
\label{fig:perforDiffBin}
\vspace{-10pt}
\end{figure}

\begin{figure}[!t]
\centering
\begin{minipage}[t]{0.45\linewidth}
\centering{\includegraphics[width=\linewidth]{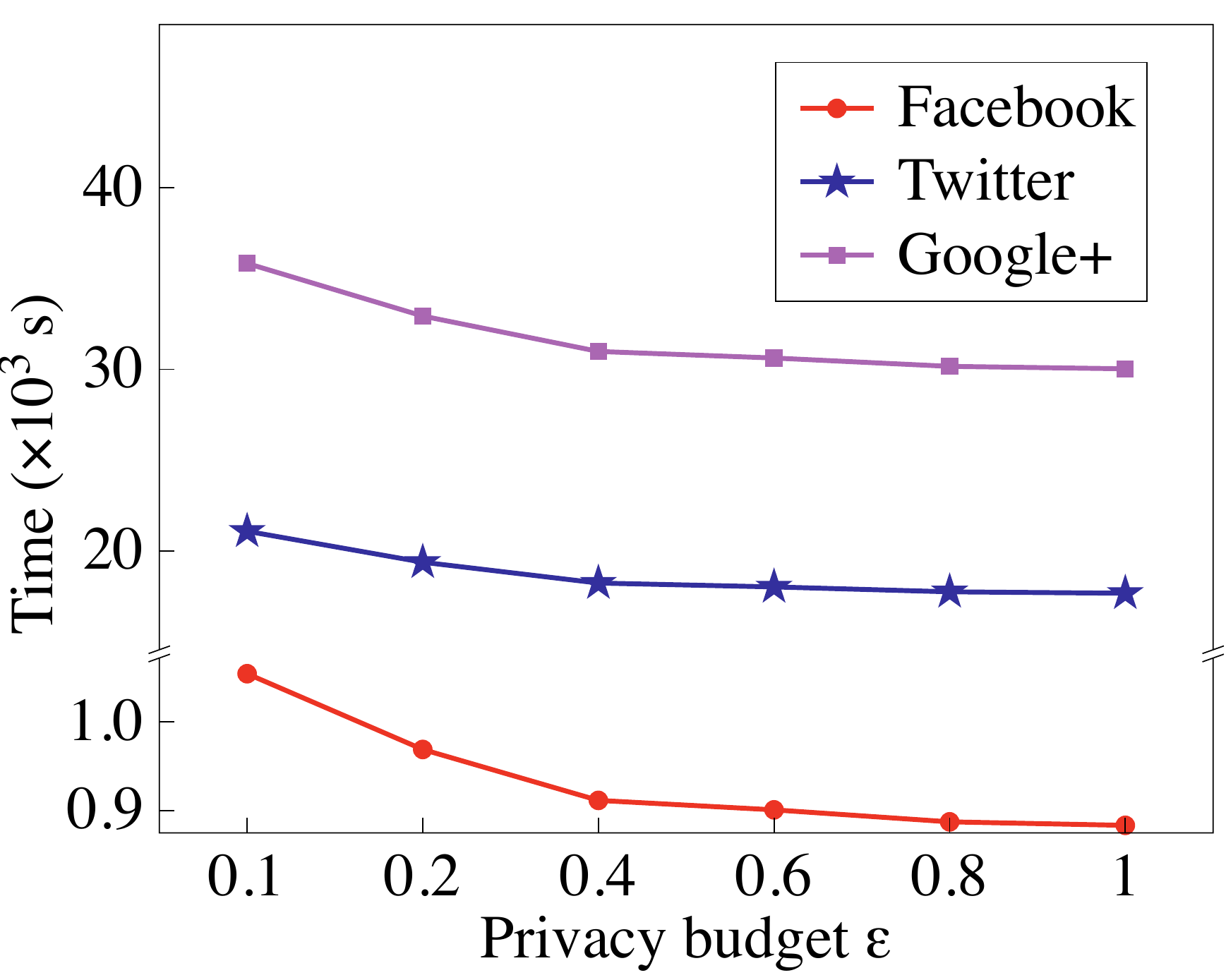}}
\end{minipage}
\begin{minipage}[t]{0.45\linewidth}
\centering{\includegraphics[width=\linewidth]{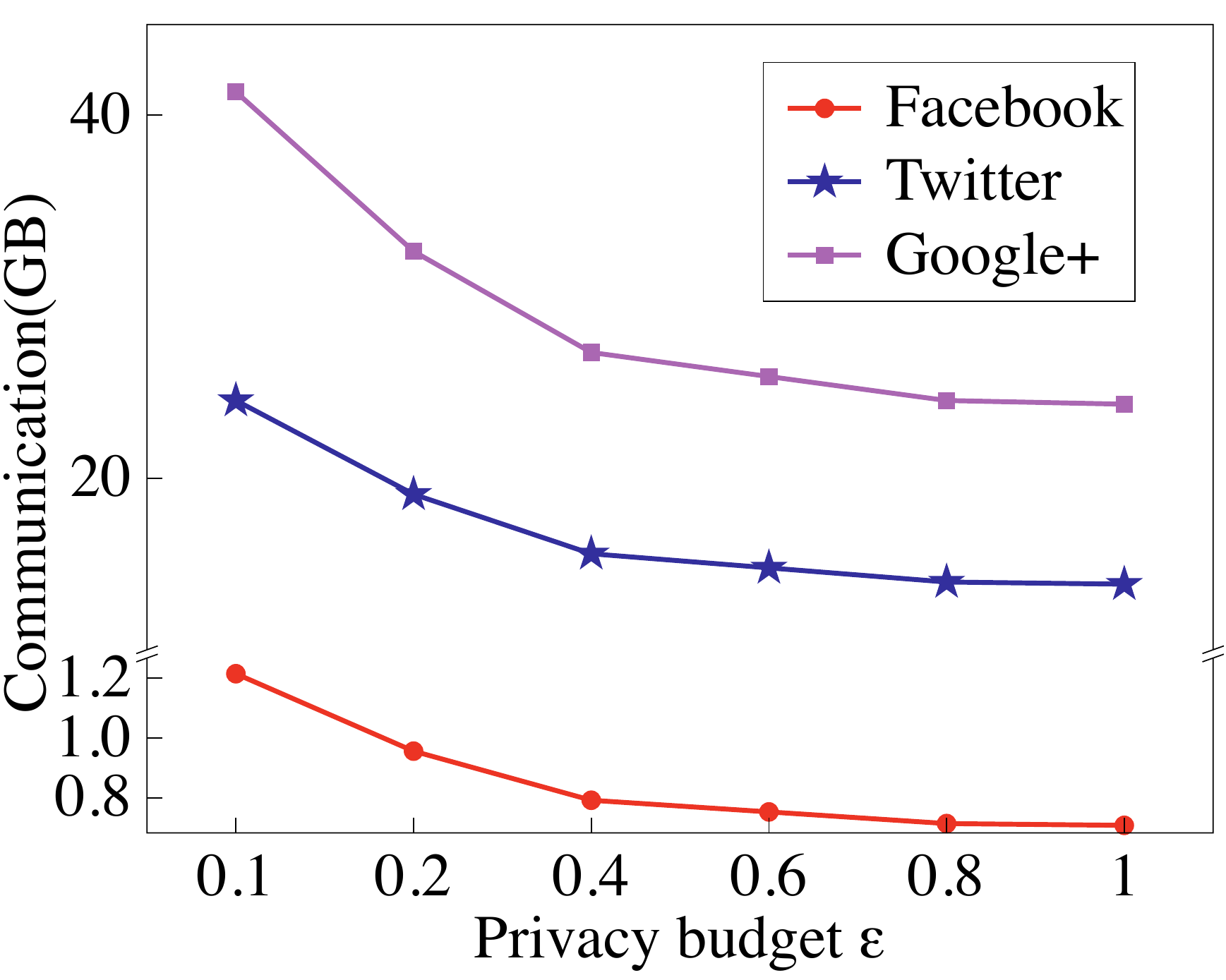}}
\end{minipage}
\caption{Running time and online communication of secure {\eig} at the cloud, for varying privacy budget $\epsilon$ ($\delta=10^{-6}$, 10 bins).}% over different social graphs.}
\label{fig:perforDiffEpsi}
\vspace{-15pt}
\end{figure}

\begin{table}[!t]
\small
\centering
\caption{Performance Gain of Optimized Secure QR Algorithm}
\label{Tab:OptimQR}
\begin{tabular*}{\hsize}{@{}@{\extracolsep{\fill}}c|ccc|ccc@{}}
\toprule
&\multicolumn{3}{c|}{Time (ms)}&\multicolumn{3}{c}{Online Comm. (KB)}\\\hline

& Basic &Optim. &Gain& Basic &Optim.&Gain\\ \hline
$M=15$&341&339&0.6\% &2.3&0.2&91\%\\ \hline

$M=30$&726& 698&3.9\% &19&0.6&96\%\\ \hline

$M=45$&1172&1062&9.3\% &65&1.5&97\%\\ 
\bottomrule
\end{tabular*}
\end{table}

\noindent\textbf{Secure {\eig}.} We now report {\main}'s computation and communication performance in secure {\eig}. We first examine the performance gain of our optimized secure QR algorithm, and summarize the results in Table \ref{Tab:OptimQR}, where we set the matrix dimension output by Arnoldi method and Lanczos method to $M\times M$ ($M\in\{15,30,45\}$). Our optimized secure QR algorithm can save up to $\mathbf{97\%}$ online communication as well as $\mathbf{9.3\%}$ computation. Meanwhile, the performance gain increases as $M$ grows. We then evaluate the overall performance of secure {\eig}, and present the results in Fig. \ref{fig:perforDiffBin} and Fig. \ref{fig:perforDiffEpsi}. 
%Fig. \ref{fig:perforDiffBin} summarizes the results with $\epsilon=1, \delta=10^{-6}$ and varying number of bins; Fig. \ref{fig:perforDiffEpsi} summarizes the experiment results with 10 bins, $\delta=10^{-6}$, and varying $\epsilon\in[0.1,1]$. 
From the results, we can observe that compared to the encryption without sparse representation, {\main} can save up to $\mathbf{80\%}$ online communication as well as $\mathbf{50\%}$ computation time. Meanwhile, compared to the encryption without binning, {\main} can still save up to $\mathbf{65\%}$ online communication and $\mathbf{35\%}$ computation time.

	\begin{figure}[!t]
	\centering
	\begin{minipage}[t]{0.42\linewidth}
		\centering{\includegraphics[width=\linewidth]{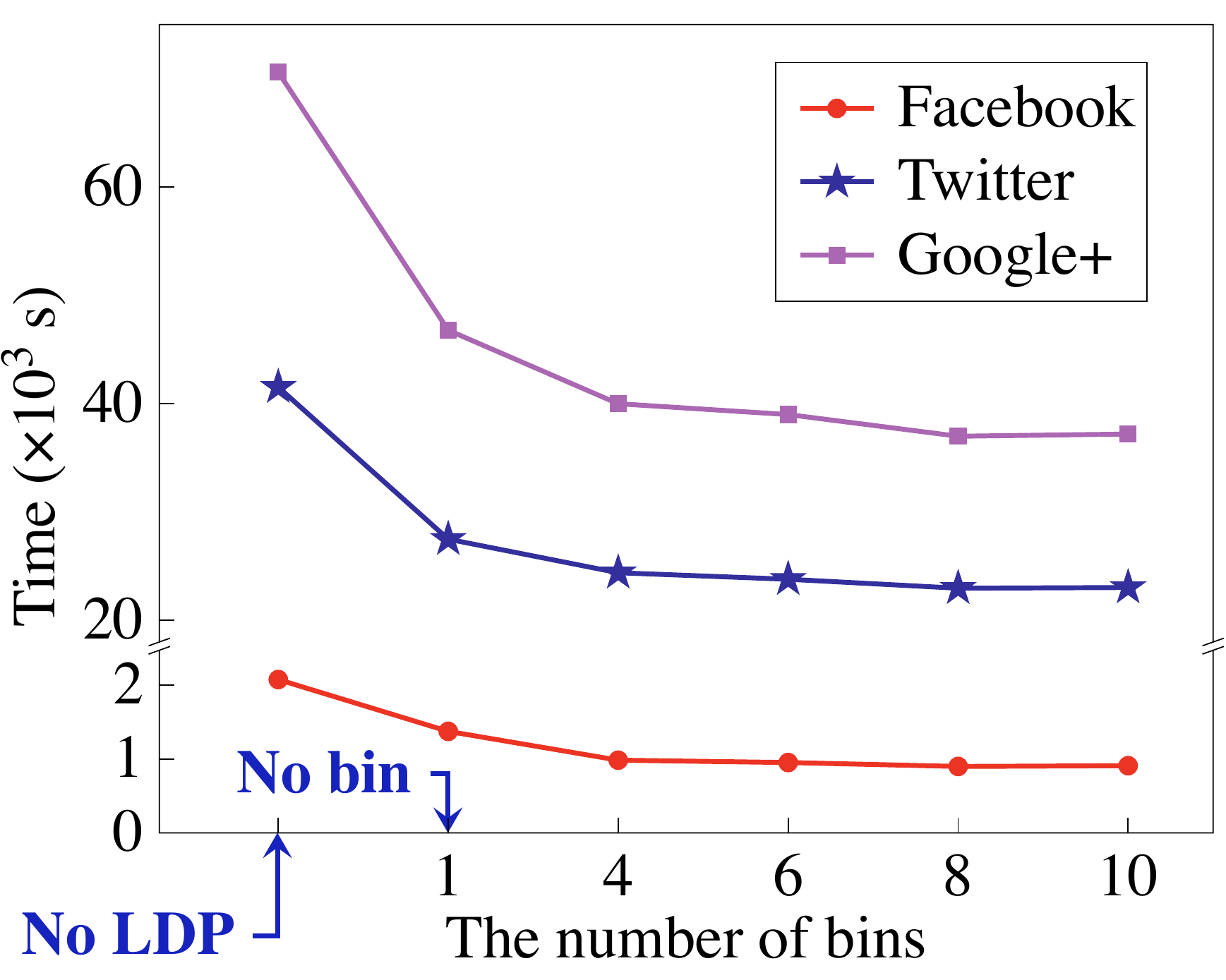}} 
	\end{minipage}
	\begin{minipage}[t]{0.42\linewidth}
		\centering{\includegraphics[width=\linewidth]{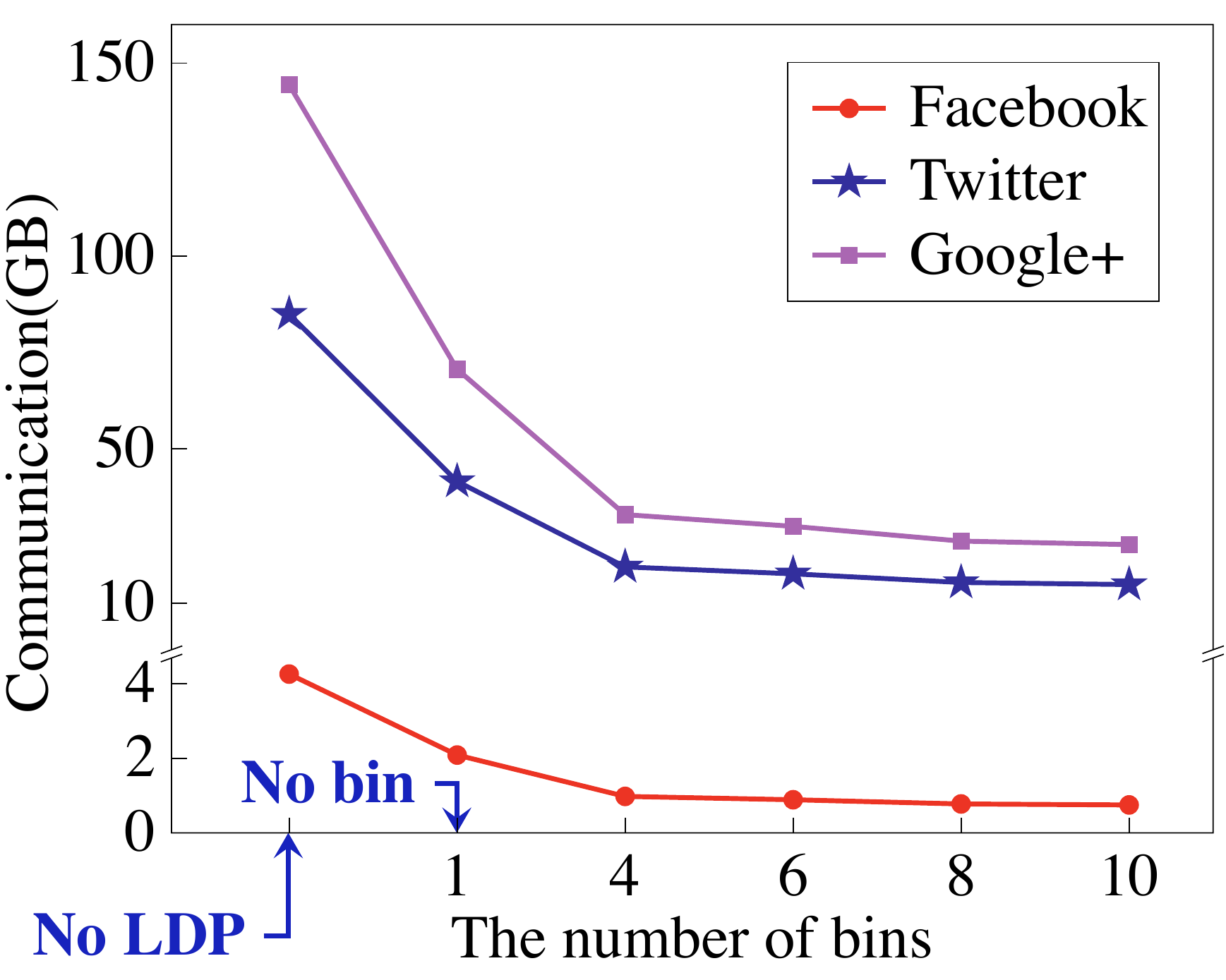}} 
	\end{minipage}
	\caption{Overall running time and online communication at the cloud, for varying number of bins ($\epsilon=1,\delta=10^{-6}$). ``No bin" refers to the use of standard LDP.}
	\label{fig:overallCost}
	\vspace{-15pt}
\end{figure}

\noindent\textbf{Overall performance.} We show in Fig. \ref{fig:overallCost} the overall running time and online communication of {\main}. It is noted that the communication results in Fig. \ref{fig:overallCost} are similar to that in Fig. \ref{fig:perforDiffBin}. This is because secure {\eig} dominates the online communication at the cloud in {\main}. Compared to the way of encryption without sparse representation, {\main} can save up to $\mathbf{80\%}$ overall online communication as well as $\mathbf{40\%}$ overall computation time. Meanwhile, compared to the way of standard LDP based encryption, {\main} can still save up to $\mathbf{65\%}$ overall online communication and $\mathbf{30\%}$ overall running time.

\vspace{-10pt}

\subsection{Comparison with the State-of-the-Art Prior Work}
% \noindent\textbf{Comparison to the state-of-the-art \cite{sharma2018privategraph}.}
%
A fair comparison between {\main} and the state-of-the-art prior work PrivateGraph \cite{sharma2018privategraph} does not exist due to its downsides analyzed in Section \ref{sec:relet1}. Specifically, PrivateGraph has limited security since it generates the binning map in plaintext domain. Besides, PrivateGraph requires frequent interactions between the cloud and the analyst who undertakes processing workload. As reported in PrivateGraph, to obtain the {\eigs} of the dataset Google+, the analyst must spend 0.2 hours as well as communicate 10 GB with the cloud. In contrast, {\main} allows the analyst to directly receive the final {\eigs}. We also note that under the same privacy budget $\epsilon=1$, the ciphertext size of PrivateGraph is up to 6.3 TB, but that of {\main} is only 7 GB.

	\vspace{-10pt}

\section{Conclusion}
\label{sec:conclusion}

We present {\main}, a new system allows privacy-preserving analytics over decentralized social graphs with \eig.
\main~leverages the emerging paradigm of cloud-empowered graph analytics paradigm and enables the cloud to collect individual user's local view in a privacy-friendly manner and perform \eig~to produce desired encrypted {\eigs} that can be delivered to an analyst on demand.
\main~delicately builds on the advancements on lightweight cryptographic techniques (ASS and FSS) and local differential privacy to securely embrace the operations required by \eig~analytics, yielding a customized security design.
Extensive experiments demonstrate that {\main} achieves accuracy comparable to the plaintext domain, with practically affordable performance superior to prior work.

\vspace{-10pt}

\section*{Acknowledgement}

This work was supported in part by the Guangdong Basic and Applied Basic Research Foundation (Grant No. 2021A1515110027), in part by the Shenzhen Science and Technology Program (Grant No. RCBS20210609103056041), and in part by the Australian Research Council (ARC) Discovery Project (Grant No. DP180103251).

\balance
\bibliographystyle{IEEEtran}
\bibliography{ref}

% Generated by IEEEtran.bst, version: 1.14 (2015/08/26)
\begin{thebibliography}{10}
\providecommand{\url}[1]{#1}
\csname url@samestyle\endcsname
\providecommand{\newblock}{\relax}
\providecommand{\bibinfo}[2]{#2}
\providecommand{\BIBentrySTDinterwordspacing}{\spaceskip=0pt\relax}
\providecommand{\BIBentryALTinterwordstretchfactor}{4}
\providecommand{\BIBentryALTinterwordspacing}{\spaceskip=\fontdimen2\font plus
\BIBentryALTinterwordstretchfactor\fontdimen3\font minus
  \fontdimen4\font\relax}
\providecommand{\BIBforeignlanguage}[2]{{%
\expandafter\ifx\csname l@#1\endcsname\relax
\typeout{** WARNING: IEEEtran.bst: No hyphenation pattern has been}%
\typeout{** loaded for the language `#1'. Using the pattern for}%
\typeout{** the default language instead.}%
\else
\language=\csname l@#1\endcsname
\fi
#2}}
\providecommand{\BIBdecl}{\relax}
\BIBdecl

\bibitem{Tabassum2018Social}
S.~Tabassum, F.~S.~F. Pereira, S.~Fernandes, and J.~Gama, ``Social network
  analysis: An overview,'' \emph{Wiley Interdiscip. Rev. Data Min. Knowl.
  Discov.}, vol.~8, no.~5, 2018.

\bibitem{UmitPhysica2019}
U.~Can and B.~Alatas, ``A new direction in social network analysis: Online
  social network analysis problems and applications,'' \emph{Physica A:
  Statistical Mechanics and its Applications}, vol. 535, p. 122372, 2019.

\bibitem{SunXKYQWY19}
H.~Sun, X.~Xiao, I.~Khalil, Y.~Yang, Z.~Qin, W.~H. Wang, and T.~Yu, ``Analyzing
  subgraph statistics from extended local views with decentralized differential
  privacy,'' in \emph{Proc. of ACM CCS}, 2019.

\bibitem{qin2017generating}
Z.~Qin, T.~Yu, Y.~Yang, I.~Khalil, X.~Xiao, and K.~Ren, ``Generating synthetic
  decentralized social graphs with local differential privacy,'' in \emph{Proc.
  of ACM CCS}, 2017.

\bibitem{xue2011p3d}
M.~Xue, B.~Carminati, and E.~Ferrari, ``{P3D} - privacy-preserving path
  discovery in decentralized online social networks,'' in \emph{Proc. of {IEEE}
  COMPSAC}, 2011.

\bibitem{zhang2014message}
L.~Zhang, X.~Li, K.~Liu, T.~Jung, and Y.~Liu, ``Message in a sealed bottle:
  Privacy preserving friending in mobile social networks,'' \emph{{IEEE} Trans.
  Mob. Comput.}, vol.~14, no.~9, pp. 1888--1902, 2015.

\bibitem{ma2018armor}
X.~Ma, J.~Ma, H.~Li, Q.~Jiang, and S.~Gao, ``{ARMOR:} {A} trust-based
  privacy-preserving framework for decentralized friend recommendation in
  online social networks,'' \emph{Future Gener. Comput. Syst.}, vol.~79, pp.
  82--94, 2018.

\bibitem{sharma2018privategraph}
S.~Sharma, J.~Powers, and K.~Chen, ``Privategraph: Privacy-preserving spectral
  analysis of encrypted graphs in the cloud,'' \emph{{IEEE} Trans. Knowl. Data
  Eng.}, vol.~31, no.~5, pp. 981--995, 2019.

\bibitem{newman2013spectral}
M.~E. Newman, ``Spectral methods for community detection and graph
  partitioning,'' \emph{Physical Review E}, vol.~88, no.~4, p. 042822, 2013.

\bibitem{wang2011identifying}
Y.~Wang, Z.~Di, and Y.~Fan, ``Identifying and characterizing nodes important to
  community structure using the spectrum of the graph,'' \emph{PloS one},
  vol.~6, no.~11, p. 27418, 2011.

\bibitem{wang2013differential}
Y.~Wang, X.~Wu, and L.~Wu, ``Differential privacy preserving spectral graph
  analysis,'' in \emph{Proc. of {PAKDD}}, 2013.

\bibitem{ahmed2020publishing}
F.~Ahmed, A.~X. Liu, and R.~Jin, ``Publishing social network graph
  eigenspectrum with privacy guarantees,'' \emph{{IEEE} Trans. Netw. Sci.
  Eng.}, vol.~7, no.~2, pp. 892--906, 2020.

\bibitem{lanczos1950iteration}
C.~Lanczos, ``An iteration method for the solution of the eigenvalue problem of
  linear differential and integral operators,'' 1950.

\bibitem{curtiss2013unicorn}
M.~Curtiss, I.~Becker, T.~Bosman, S.~Doroshenko, L.~Grijincu, T.~Jackson,
  S.~Kunnatur, S.~B. Lassen, P.~Pronin, S.~Sankar, G.~Shen, G.~Woss, C.~Yang,
  and N.~Zhang, ``Unicorn: {A} system for searching the social graph,''
  \emph{Proc. {VLDB} Endow.}, vol.~6, no.~11, pp. 1150--1161, 2013.

\bibitem{zhou2008brief}
B.~Zhou, J.~Pei, and W.~Luk, ``A brief survey on anonymization techniques for
  privacy preserving publishing of social network data,'' \emph{{SIGKDD}
  Explor.}, vol.~10, no.~2, pp. 12--22, 2008.

\bibitem{mohassel2017secureml}
P.~Mohassel and Y.~Zhang, ``Secureml: A system for scalable privacy-preserving
  machine learning,'' in \emph{Proc. of IEEE S\&P}, 2017.

\bibitem{boyle2016function}
E.~Boyle, N.~Gilboa, and Y.~Ishai, ``Function secret sharing: Improvements and
  extensions,'' in \emph{Proc. of ACM CCS}, 2016.

\bibitem{boyle2021function}
E.~Boyle, N.~Chandran, N.~Gilboa, D.~Gupta, Y.~Ishai, N.~Kumar, and M.~Rathee,
  ``Function secret sharing for mixed-mode and fixed-point secure
  computation,'' in \emph{Proc. of EUROCRYPT}, 2021.

\bibitem{newman2006finding}
M.~E. Newman, ``Finding community structure in networks using the eigenvectors
  of matrices,'' \emph{Physical review E}, vol.~74, no.~3, p. 036104, 2006.

\bibitem{kamvar2003extrapolation}
S.~D. Kamvar, T.~H. Haveliwala, C.~D. Manning, and G.~H. Golub, ``Extrapolation
  methods for accelerating pagerank computations,'' in \emph{Proc. of ACM WWW},
  2003.

\bibitem{kamvar2004adaptive}
S.~Kamvar, T.~Haveliwala, and G.~Golub, ``Adaptive methods for the computation
  of pagerank,'' \emph{Linear Algebra and its Applications}, vol. 386, pp.
  51--65, 2004.

\bibitem{charalambous2016totally}
T.~Charalambous, C.~N. Hadjicostis, M.~G. Rabbat, and M.~Johansson, ``Totally
  asynchronous distributed estimation of eigenvector centrality in digraphs
  with application to the pagerank problem,'' in \emph{Proc. of IEEE CDC},
  2016.

\bibitem{sangers2015eigenvectors}
A.~Sangers and M.~B. van Gijzen, ``The eigenvectors corresponding to the second
  eigenvalue of the google matrix and their relation to link spamming,''
  \emph{J. Comput. Appl. Math.}, vol. 277, pp. 192--201, 2015.

\bibitem{arnoldi1951principle}
W.~E. Arnoldi, ``The principle of minimized iterations in the solution of the
  matrix eigenvalue problem,'' \emph{Quarterly of applied mathematics}, vol.~9,
  no.~1, pp. 17--29, 1951.

\bibitem{francis1962qr}
J.~G.~F. Francis, ``The {QR} transformation - part 2,'' \emph{Comput. J.},
  vol.~4, no.~4, pp. 332--345, 1962.

\bibitem{page1999pagerank}
L.~Page, S.~Brin, R.~Motwani, and T.~Winograd, ``The pagerank citation ranking:
  Bringing order to the web.'' Stanford InfoLab, Tech. Rep., 1999.

\bibitem{MengRL21}
C.~Meng, S.~Rambhatla, and Y.~Liu, ``Cross-node federated graph neural network
  for spatio-temporal data modeling,'' in \emph{Proc. of ACM KDD}, 2021.

\bibitem{zhou2020vertically}
J.~Zhou, C.~Chen, L.~Zheng, H.~Wu, J.~Wu, X.~Zheng, B.~Wu, Z.~Liu, and L.~Wang,
  ``Vertically federated graph neural network for privacy-preserving node
  classification,'' \emph{arXiv preprint arXiv:2005.11903}, 2020.

\bibitem{wu2021fedgnn}
C.~Wu, F.~Wu, Y.~Cao, Y.~Huang, and X.~Xie, ``Fedgnn: Federated graph neural
  network for privacy-preserving recommendation,'' in \emph{International
  Workshop on Federated Learning for User Privacy and Data Confidentiality},
  2021.

\bibitem{zhang2019deeppar}
X.~Zhang, X.~Chen, J.~K. Liu, and Y.~Xiang, ``Deeppar and deepdpa: Privacy
  preserving and asynchronous deep learning for industrial iot,'' \emph{{IEEE}
  Trans. Ind. Informatics}, vol.~16, no.~3, pp. 2081--2090, 2020.

\bibitem{wu2016privacy}
D.~J. Wu, J.~Zimmerman, J.~Planul, and J.~C. Mitchell, ``Privacy-preserving
  shortest path computation,'' in \emph{Proc. of NDSS}, 2016.

\bibitem{du2020graphshield}
M.~Du, S.~Wu, Q.~Wang, D.~Chen, P.~Jiang, and A.~Mohaisen, ``Graphshield:
  Dynamic large graphs for secure queries with forward privacy,'' \emph{{IEEE}
  Trans. Knowl. Data Eng.}, 2020.

\bibitem{araki2021secure}
T.~Araki, J.~Furukawa, K.~Ohara, B.~Pinkas, H.~Rosemarin, and H.~Tsuchida,
  ``Secure graph analysis at scale,'' in \emph{Proc. of ACM CCS}, 2021.

\bibitem{ShenMZMDH18}
M.~Shen, B.~Ma, L.~Zhu, R.~Mijumbi, X.~Du, and J.~Hu, ``Cloud-based approximate
  constrained shortest distance queries over encrypted graphs with privacy
  protection,'' \emph{{IEEE} Transactions on Information Forensics and
  Security}, vol.~13, no.~4, pp. 940--953, 2018.

\bibitem{LiuZHC21}
C.~Liu, L.~Zhu, X.~He, and J.~Chen, ``Enabling privacy-preserving shortest
  distance queries on encrypted graph data,'' \emph{{IEEE} Transactions on
  Dependable and Secure Computing}, vol.~18, no.~1, pp. 192--204, 2021.

\bibitem{dwork2006differential}
C.~Dwork, ``Differential privacy,'' in \emph{Proc. of {ICALP}}, 2006.

\bibitem{kasiviswanathan2011can}
S.~P. Kasiviswanathan, H.~K. Lee, K.~Nissim, S.~Raskhodnikova, and A.~D. Smith,
  ``What can we learn privately?'' \emph{{SIAM} J. Comput.}, vol.~40, no.~3,
  pp. 793--826, 2011.

\bibitem{he2017composing}
X.~He, A.~Machanavajjhala, C.~J. Flynn, and D.~Srivastava, ``Composing
  differential privacy and secure computation: {A} case study on scaling
  private record linkage,'' in \emph{Proc. of ACM CCS}, 2017.

\bibitem{boyle2015function}
E.~Boyle, N.~Gilboa, and Y.~Ishai, ``Function secret sharing,'' in \emph{Proc.
  of EUROCRYPT}, 2015.

\bibitem{yao1982protocols}
A.~C. Yao, ``Protocols for secure computations,'' in \emph{Proc. of IEEE FOCS},
  1982.

\bibitem{chen2020metal}
W.~Chen and R.~A. Popa, ``Metal: {A} metadata-hiding file-sharing system,'' in
  \emph{Proc. of {NDSS}}, 2020.

\bibitem{MohasselRR20}
P.~Mohassel, P.~Rindal, and M.~Rosulek, ``Fast database joins and {PSI} for
  secret shared data,'' in \emph{Proc. of ACM CCS}, 2020.

\bibitem{LiuZYY21}
X.~Liu, Y.~Zheng, X.~Yuan, and X.~Yi, ``Medisc: Towards secure and lightweight
  deep learning as a medical diagnostic service,'' in \emph{Proc. of ESORICS},
  2021.

\bibitem{ZhengDTWZ21}
Y.~Zheng, H.~Duan, X.~Tang, C.~Wang, and J.~Zhou, ``Denoising in the dark:
  Privacy-preserving deep neural network-based image denoising,'' \emph{{IEEE}
  Transactions on Dependable and Secure Computing}, vol.~18, no.~3, pp.
  1261--1275, 2021.

\bibitem{xu2021privacy}
L.~Xu, J.~Jiang, B.~Choi, J.~Xu, and S.~S. Bhowmick, ``Privacy preserving
  strong simulation queries on large graphs,'' in \emph{Proc. of IEEE ICDE},
  2021.

\bibitem{tan2021cryptgpu}
S.~Tan, B.~Knott, Y.~Tian, and D.~J. Wu, ``Cryptgpu: Fast privacy-preserving
  machine learning on the gpu,'' in \emph{Proc. of IEEE S\&P}, 2021.

\bibitem{ZhengDW18}
Y.~Zheng, H.~Duan, and C.~Wang, ``Learning the truth privately and confidently:
  Encrypted confidence-aware truth discovery in mobile crowdsensing,''
  \emph{{IEEE} Transactions on Information Forensics and Security}, vol.~13,
  no.~10, pp. 2475--2489, 2018.

\bibitem{Tfencrypt}
{TF-Encrypted}, ``Encrypted deep learning in tensorflow,''
  \url{https://tf-encrypted.io}, 2021, [Online; Accessed 1-Jan-2022].

\bibitem{Crypten}
{Facebook's CrypTen}, ``A research tool for secure machine learning in
  pytorch,'' \url{https://crypten.ai/}, 2019, [Online; Accessed 1-Jan-2022].

\bibitem{doerner2017scaling}
J.~Doerner and A.~Shelat, ``Scaling {ORAM} for secure computation,'' in
  \emph{Proc. of ACM CCS}, 2017.

\bibitem{agrawal2019quotient}
N.~Agrawal, A.~S. Shamsabadi, M.~J. Kusner, and A.~Gasc{\'{o}}n, ``{QUOTIENT:}
  two-party secure neural network training and prediction,'' in \emph{Proc. of
  ACM CCS}, 2019.

\bibitem{mohassel2018aby3}
P.~Mohassel and P.~Rindal, ``{ABY}\({}^{\mbox{3}}\): {A} mixed protocol
  framework for machine learning,'' in \emph{Proc. of ACM CCS}, 2018.

\bibitem{knott2020crypten}
B.~Knott, S.~Venkataraman, A.~Hannun, S.~Sengupta, M.~Ibrahim, and L.~van~der
  Maaten, ``Crypten: Secure multi-party computation meets machine learning,''
  in \emph{Proc. of NeurIPS}, 2020.

\bibitem{akram2015newton}
S.~Akram and Q.~U. Ann, ``Newton raphson method,'' \emph{International Journal
  of Scientific \& Engineering Research}, vol.~6, no.~7, pp. 1748--1752, 2015.

\bibitem{press1987numerical}
W.~H. Press, B.~P. Flannery, S.~A. Teukolsky, W.~T. Vetterling, and P.~B.
  Kramer, ``Numerical recipes: the art of scientific computing,'' \emph{Physics
  Today}, vol.~40, no.~10, p. 120, 1987.

\bibitem{kelkar2022secure}
M.~Kelkar, P.~H. Le, M.~Raykova, and K.~Seth, ``Secure poisson regression,'' in
  \emph{Proc. of USENIX Security Symposium}, 2022.

\bibitem{kacsmar2020differentially}
B.~Kacsmar, B.~Khurram, N.~Lukas, A.~Norton, M.~Shafieinejad, Z.~Shang,
  Y.~Baseri, M.~Sepehri, S.~Oya, and F.~Kerschbaum, ``Differentially private
  two-party set operations,'' in \emph{Proc. of {IEEE} EuroS{\&}P}, 2020.

\bibitem{canetti2000security}
R.~Canetti, ``Security and composition of multiparty cryptographic protocols,''
  \emph{J. Cryptol.}, vol.~13, no.~1, pp. 143--202, 2000.

\bibitem{katz2005handling}
J.~Katz and Y.~Lindell, ``Handling expected polynomial-time strategies in
  simulation-based security proofs,'' in \emph{Proc. of TCC}, 2005.

\bibitem{curran2019procsa}
M.~Curran, X.~Liang, H.~Gupta, O.~Pandey, and S.~R. Das, ``Procsa: Protecting
  privacy in crowdsourced spectrum allocation,'' in \emph{Proc. of ESORICS},
  2019.

\end{thebibliography}

\end{document}